\def\dOi{11(1:14)2015}
\subjclass{F.3.2}
\mathchardef\ls="213C    % less symbol (< used as \langle)
\mathchardef\gr="213E    % greater symbol (> used as \rangle)
\mathchardef\uparrow="0222  % adaptation mathmode of uparrow
\mathchardef\downarrow="0223  % adaptation mathmode of downarrow
\newcommand{\fF}{\mathcal{F}}    % Generic Functor F
\newcommand{\lbb}{\mathopen{[\![}}
\newcommand{\rbb}{\mathclose{]\!]}}
\newcommand{\bb}[1]{\lbb #1 \rbb}
\def\B{\mathcal B}
\newcommand\pto\rightharpoonup
\newcommand\E\varepsilon
\newenvironment{Iff-RL}{\textbf{($\Rightarrow$)} }{\bigskip}
\newenvironment{Iff-LR}{\textbf{($\Leftarrow$)} }{}
\newenvironment{remark}{\begin{rem}}{\end{rem}}
\newenvironment{theorem}{\begin{thm}}{\end{thm}}
\newenvironment{example}{\begin{exa}}{\end{exa}}
\newenvironment{definition}{\begin{defi}}{\end{defi}}
\newenvironment{proposition}{\begin{prop}}{\end{prop}}
\newenvironment{corollary}{\begin{cor}}{\end{cor}}
\theoremstyle{definition}\newtheorem{convention}[thm]{Convention}
\theoremstyle{definition}\newtheorem{construction}[thm]{Construction}
\newcommand{\m}{\mathit}
\renewcommand \: {\colon}
\newcommand{\mb}{\mathbb}
\newcommand{\mc}{\mathcal}
\newcommand{\mf}{\mathbf}
\def \p {\mathcal P}
\newcommand{\tx}{\texttt}
\newcommand\F{\mathcal{F}}
\newcommand\T{\mathcal{T}}
\newcommand\G{\mathcal{G}}
\newcommand\FS{\mathcal{S}}
\newcommand\FH{\mathcal{H}}
\newcommand\J{\mathcal{J}}
\newcommand\M{\mathcal{M}}
\def \incl {\iota} % inclusion functor
\def \catC {\mf{C}}
\def \catD {\mf{D}}
\def \Set {\mf{Set}}
\def \Poset { \mf{Poset}}
\newcommand{\coalg}[1]{{\bf CoAlg}(#1)}
\newcommand{\bialg}[1]{{\bf BiAlg}(#1)}
\def \Lw { \mf{L^{\m{op}}_{\Sigma}}}
\newcommand\U{\mathcal{U}}
\newcommand\K{\mathcal{K}}
\newcommand{\set}{\mf{Set}}
\newcommand{\prsh}[1]{\set^{#1}}
\def \At {\m{At}}
\newcommand{\seq}{\leftarrow} % Sequent in clauses
\def \predp {\texttt{p}} % Predicate p
\def \predq {\texttt{q}} % Predicate q
\def \NatList {\tx{NatList}}
\def \pList {\tx{List}}
\def \pNat {\tx{Nat}}
\def \tcons {cons}
\def \tzero {zero}
\def \tnil {nil}
\def \tsucc {succ}
\def \tconsZ {c}
\def \tzeroZ {z}
\def \tnilZ {n}
\def \tsuccZ {s}
\def\tr#1{\stackrel{#1}{\to}}          % Labeled Transitions
\newcommand{\nodoor}[1]{*+[F-:<3pt>]{#1}}
\def \tuple#1{\dot{#1}}
\newcommand{\element}[2]{#2(#1)}
\def\QEQ{{%
    \setbox0\hbox{$\longrightarrow$}%
    \rlap{\hbox to \wd0{\hss $\bullet$\hss}}\box0
}}
\newcommand{\lconc}{\ensuremath{\,::\,}} %list concatenation
\newcommand{\lconcc}{\ensuremath{\,\overline{::}\,}} %list concatenation in the cofree
\newcommand{\To}{\Rightarrow}
\newcommand{\li}{\mathcal{L}}
\newcommand{\cof}[1]{\mc{C}({#1})} %cofree
\newcommand{\pg}[1]{{#1}_{\lambda}} %parallel semantics ground
\newcommand{\pgg}[1]{{#1}_{\lambda'}} %parallel semantics ground with parameter LAt
\newcommand{\pa}[1]{{#1}_{\delta}} %parallel semantics arbitrary programs
\newcommand{\paa}[1]{{#1}_{\delta'}} %parallel semantics arbitrary programs with parameter LAt
\newcommand{\liftf}[2]{{#1}^{{\scriptscriptstyle #2}}} %generic application of the lifting functor
\newcommand{\lift}[1]{\breve{#1}} %lifting to [C,Set]
\newcommand{\liftt}[1]{\widehat{#1}} %lifting to [|C|,Set] \newcommand{\endf}[1]{\mc{End}({#1})} %category of endofunctors
\newcommand{\lifts}[1]{\ddot{#1}} %lifting to [C,Set]
\newcommand{\liftLax}[1]{\widetilde{#1}} %lifting to Lax[C,Poset]
\def \PP  {\liftt{\p _c}\liftt{\p _f}} % double powerset in (|C|,Set)
\newcommand{\efuncat}[1]{{\bf End}({#1})} %category of endofunctors
\newcommand{\EM}[1]{\mf{EM}(#1)} % EM algebrae for a Monad
\newcommand{\id}{\m{id}}
\newcommand{\after}{\mathrel{\circ}}
\newcommand{\Kl}{\mathcal{K}{\kern-.2ex}\ell}
\newcommand{\cat}[1]{\ensuremath{\mf{#1}}}
\newcommand{\C}{\cat{C}}
\newcommand{\reprsat}{r_{\mathit{sat}}} % representation map general case
\newcommand{\df}{\ensuremath{\,:\!\!=\,}}
\newcommand{\elist}{\ensuremath{[\,]}}
\newcommand{\KUP}{\mathcal{R}}
\newcommand{\desat}{\overline{d}} %desaturation
\newcommand{\tree}{T} %tree
\newcommand{\acapo}[2]{ \parbox{#1}{\centering #2}}
\begin{document}

\title[Bialgebraic Semantics for Logic Programming]{Bialgebraic Semantics for Logic Programming}
\author[F.~Bonchi]{Filippo Bonchi}
\address{Ecole Normale Sup\'{e}rieure de Lyon, CNRS, Inria, UCBL, Universit\'{e} de Lyon --- Laboratoire de l'Informatique du Parall\'{e}lisme.}
\email{\{filippo.bonchi,fabio.zanasi\}@ens-lyon.fr}

\author[F.~Zanasi]{Fabio Zanasi}
\address{\vspace{-18 pt}}
%\email{fabio.zanasi@ens-lyon.fr}
%\institute{ENS Lyon, Université de Lyon, LIP \\ (UMR 5668 CNRS ENS Lyon UCBL INRIA)}

%% mandatory lists of keywords and classifications:
\keywords{Logic Programming, Coalgebrae on presheaves, Bialgebrae, Compositionality}
%\titlecomment{OPTIONAL comment concerning the title, \eg, if a variant or an extended abstract of the paper has appeared elsewehere}

%%%%%%%%%%%%%%%%%%%%%%%%%%%%%%%%%%%%%%%%%%%%%%%%%%%%%%%%%%%%%%%%%%%%%%%%%%%

%% the abstract has to PRECEED the command \maketitle:
%% be sure not to issue the \maketitle command twice!

\maketitle

\begin{abstract}
Bialgebrae provide an abstract framework encompassing the semantics of different kinds of computational models. %For some of them, a bialgebraic semantics cannot be given in a straightforward way, but a preliminary step called \emph{saturation} is needed.
In this paper we propose a bialgebraic approach to the semantics of logic programming. Our methodology is to study logic programs as reactive systems and exploit abstract techniques developed in that setting. First we use \emph{saturation} to model the operational semantics of logic programs as coalgebrae on presheaves. Then, we make explicit the underlying algebraic structure by using bialgebrae on presheaves. The resulting semantics turns out to be compositional with respect to conjunction and term substitution. Also, it encodes a parallel model of computation, whose soundness is guaranteed by a built-in notion of synchronisation between different threads.

%First, based on a series of recent works, we describe the operational semantics of logic programs using coalgebrae on presheaves. This task poses similar challenges to those appearing in the setting of nominal calculi that we tackle by means of \emph{saturation techniques}. As a second contribution, we extend our framework, modeling logic programs in terms of bialgebrae on presheaves. The resulting semantics turns out to be compositional with respect to conjunction and term substitution. Also, it encodes a parallel model of computation, whose soundness is guaranteed by a built-in notion of synchronisation between different threads.
\end{abstract}

\section{Introduction}

A fundamental tenet for the semantics of programming languages is \emph{compositionality}: the meaning of a program expression is reducible to the meaning of its subexpressions. This allows inductive reasoning on the structure of programs, and provides several techniques to prove
properties of these.

In the last decades, much research has been devoted to develop abstract frameworks for defining compositional semantics for different sorts of computational models. For instance, in the setting of concurrency theory, several rule formats \cite{aceto1999structural} have been introduced for ensuring certain behavioural equivalences to be congruences with respect to the syntactic operators of process algebrae. These works have inspired the Mathematical Operational Semantics by Turi and Plotkin \cite{TuriP97} where the semantics of a language is supposed to be a bialgebra for a given distributive law representing a so called \emph{abstract GSOS specification}.

A main drawback for this approach is its poor expressiveness, in the sense that many relevant computational models do not naturally fit into the bialgebraic framework. Prime examples of these, still from concurrency theory, are Petri nets and the calculus of Mobile Ambients \cite{CardelliG98}: defining the operational behaviour of these systems in terms of their components seems to be an intrinsically complex task that needs serious ingenuity of researchers (see, e.g, \cite{BruniMMS13} for a recent compositional semantics of Petri nets).

Motivated by these considerations, Milner initiated a research program devoted to systematically derive compositional semantics for more flexible semantics specification called \emph{reactive systems} \cite{LeiferMilnerCONCUR00}. As shown in \cite{DBLP:journals/tcs/BonchiM09}, these can be modeled as coalgebrae on presheaf categories and the resulting compositional semantics can be obtained by means of \emph{saturation}, a technique that we will detail later.

\medskip

In this paper we study \emph{logic programs as reactive systems}, applying the aforementioned techniques to obtain a compositional semantics for logic programming.
%In this paper, we give compositional semantics to \emph{logic programs} seen as reactive systems.
%
Our approach consists of two steps. First, we model the saturated semantics of a program by means of coalgebrae on presheaves. This allows us to achieve a first form of compositionality, with respect to the substitution of the logic signature. Then, we extend our approach to a bialgebraic setting, making the resulting semantics compositional also with respect to the internal structure of goals.

%Our approach consists of two phases: first, we give the saturated semantics of a program by mean of coalgebrae on presheaves; then, we build a bialgebraic model out of it. This leads to a two-fold compositionality, the first one with respect to the substitution of the logic signature; the second one with respect to the internal structure of goals.

In the remainder of this introduction, we describe these two steps in more detail.

\subsection*{Coalgebrae on Presheaves and Saturated Semantics} Coalgebrae on presheaves have been successfully employed to provide semantics to \emph{nominal} calculi: sophisticated process calculi with complex mechanisms for variable binding, like the $\pi$-calculus \cite{FioreMS02,DBLP:conf/lics/FioreS06}. The idea is to have an index category $\catC$ of interfaces (or names), and encode as a presheaf $\F \: \catC \to \Set$ the mapping of any object $i$ of $\catC$ to the set of states having $i$ as interface, and any arrow $f \: i \to j$ to a function switching the interface of states from $i$ to $j$. The operational semantics of the calculus will arise as a notion of transition between states, that is, as a coalgebra $\alpha \: \F \to \B (\F)$, where $\B \colon \prsh{\catC} \to \prsh{\catC}$ is a functor on presheaves encoding the kind of behavior that we want to express.

As an arrow in a presheaf category, $\alpha$ has to be a natural transformation, i.e. it should commute with arrows $f \: i \to j$ in the index category $\catC$. Unfortunately, this naturality requirement may fail when the structure of $\catC$ is rich enough, as for instance when non-injective substitutions \cite{Marino05,DBLP:conf/calco/Staton09} or name fusions \cite{DBLP:journals/entcs/Miculan08,BonchiBCG_fusion2012} occur. %From the point of view of semantics, this is a problem of \emph{compositionality}: the operational semantics encoded by $\alpha$ does not commute with name substitutions.
As a concrete example, consider the $\pi$-calculus term $t=\bar{a}<x> | b(y) $ consisting of a process $\bar{a}<x>$ sending a message $x$ on a channel named $a$, in parallel with $b(y)$ receiving a message on a channel named $b$. Since the names $a$ and $b$ are different, the two processes cannot synchronize. Conversely the term $t\theta=\bar{a}<x> | a(y)$, that is obtained by applying the substitution $\theta$ mapping $b$ to $a$, can synchronize.
If $\theta$ is an arrow of the index category $\catC$, then the operational semantics $\alpha$ is not natural since $\alpha(t\theta)\neq \alpha(t)\overline{\theta}$, where $\overline{\theta}$ denotes the application of $\theta$ to the transitions of $t$. As a direct consequence, also the unique morphism to the terminal coalgebra is not natural: this means that the abstract semantics of $\pi$-calculus is not compositional, in the sense that bisimilarity is not a congruence w.r.t. name substitutions.
In order to make bisimilarity a congruence, Sangiorgi introduced in \cite{Sang96} \emph{open bisimilarity}, that is defined by considering the transitions of processes under \emph{all} possible name substitutions $\theta$.

The approach of \emph{saturated semantics} \cite{DBLP:journals/tcs/BonchiM09} can be seen as a generalization of open bisimilarity, relying on analogous principles: the operational semantics $\alpha$ is ``saturated'' w.r.t. the arrows of the index category $\catC$, resulting in a natural transformation $\alpha^{\sharp}$ in $\prsh{\catC}$.
In \cite{BonchiBCG_fusion2012,MontSamm_NetConsciousPi_CoalgSem}, this is achieved by first shifting the definition of $\alpha$ to the category $\prsh{|\catC|}$ of presheaves indexed by the discretization $|\catC|$ of $\catC$.
Since $|\catC|$ does not have other arrow than the identities, $\alpha$ is trivially a natural transformation in this setting. The source of $\alpha$ is $\U(\F) \in \prsh{|\catC|}$, where $\U \: \prsh{\catC} \to \prsh{|\catC|}$ is a forgetful functor defined by composition with the inclusion $\iota \: |\catC| \to \catC$. The functor $\U$ has a right adjoint $\K\: \prsh{|\catC|} \to \prsh{
\catC}$ sending a presheaf to its \emph{right Kan extension} along $\iota$.
The adjoint pair $\U \dashv \K$ induces an isomorphism $(\cdot)_{X,Y}^{\sharp}\colon \prsh{|\catC|}[\U (X),Y] \to \prsh{\catC}[X,\K (Y)]$ mapping $\alpha$ to $\alpha^{\sharp}$. The latter is a natural transformation in $\prsh{\catC}$ and, consequently, the abstract semantics results to be compositional.

\smallskip

In the first part of the paper, we show that the saturated approach can be fruitfully instantiated to \emph{coalgebraic logic programming} \cite{KomMcCuskerPowerAMAST10,KomPowCALCO11,KomPowerCSL11,KatyaJournal}, which consists of a novel semantics for logic programming and a parallel resolution algorithm based on \emph{coinductive trees}. These are a variant of $\wedge\vee$-trees \cite{Gupta94} modeling \emph{parallel} implementations of logic programming, where the soundness of the derivations represented by a tree is guaranteed by the restriction to \emph{term-matching} (whose algorithm, differently from unification, is parallelizable~\cite{MitchellSeqUnification}).

There are two analogies with the $\pi$-calculus: (a) the state space is modeled by a presheaf on the index category $\Lw$, that is the (opposite) \emph{Lawvere Theory} associated with some signature $\Sigma$; (b) the operational semantics given in \cite{KomPowCALCO11} fails to be a natural transformation in $\prsh{\Lw}$: Example \ref{Ex:non_compositional} provides a counter-example which is similar to the $\pi$-calculus term $t$ discussed above.

The authors of \cite{KomPowCALCO11} obviate to (b) by relaxing naturality to \emph{lax naturality}:
the operational semantics $p$ of a logic program is given as an arrow in the category $\m{Lax}(\Lw,\Poset)$ of locally ordered functors $\F \: \Lw \to \Poset$ and lax natural transformations between them. They show the existence of a cofree comonad
that induces a morphism $\bb{\cdot}_p$ mapping atoms (i.e., atomic formulae) to coinductive trees. Since $\bb{\cdot}_p$ is not natural but lax natural, the semantics provided by coinductive trees is not compositional, in the sense that, for some atoms $A$ and substitution $\theta$,
%\begin{eqnarray*}
% \nonumber to remove numbering (before each equation)
  $$\bb{A \theta}_p \neq \bb{A}_p\overline{\theta}$$
%\end{eqnarray*}
where $\bb{A \theta}_p$ is the coinductive tree associated with $A\theta$ and $\bb{A}_p\overline{\theta}$ denotes the result of applying $\theta$ to each atom occurring in the tree $\bb{A}_p$.

\smallskip

Instead of introducing laxness, we propose to tackle the non-naturality of $p$ with a saturated approach. It turns out that, in the context of logic programming, the saturation map $(\cdot)^{\sharp}$ has a neat description in terms of substitution mechanisms: while $p$ performs \emph{term-matching} between the atoms and the heads of clauses of a given logic program, its saturation $p^{\sharp}$ (given as a coalgebra in $\prsh{\Lw}$) performs \emph{unification}. It is worth to remark here that not only most general unifiers are considered but \emph{all} possible unifiers.

A cofree construction leading to a map $\bb{\cdot}_{p^{\sharp}}$ can be obtained by very standard categorical tools, such as terminal sequences \cite{ak:fixed-point-set-functor}. This is possible because, as $\Set$, both $\prsh{\Lw}$ and $\prsh{|\Lw|}$ are (co)complete categories, whereas in the lax approach, $\m{Lax}(\Lw,\Poset)$ not being (co)complete, more indirect and more sophisticated categorical constructions are needed \cite[Sec. 4]{KomPowerCSL11}. By naturality of $p^{\sharp}$, the semantics given by $\bb{\cdot}_{p^{\sharp}}$ turns out to be compositional, as in the desiderata. Analogously to $\bb{\cdot}_{p}$, also $\bb{\cdot}_{p^{\sharp}}$ maps atoms to tree structures, which we call \emph{saturated $\wedge\vee$-trees}. They generalize coinductive trees, in the sense that the latter can be seen as a ``desaturation'' of saturated $\wedge\vee$-trees, where all unifiers that are not term-matchers have been discarded. This observation leads to a \emph{translation} from saturated to coinductive
trees, based on the counit $\epsilon$ of the adjunction $\U \dashv \K$. It follows that our framework encompasses the semantics in \cite{KomPowCALCO11,KomPowerCSL11}.

   Analogously to what is done in \cite{KomPowerCSL11}, we propose a notion of \emph{refutation subtree} of a given saturated $\wedge\vee$-tree, intuitively corresponding to an SLD-refutation of an atomic goal in a program.\fznote{Parte per Katia} In our approach, not all the refutation subtrees represent sound derivations, because the same variable may be substituted for different terms in the various branches. We thus study the class of \emph{synched} refutation subtrees: they are the ones in which, at each step of the represented derivation, \emph{the same} substitution is applied on all the atoms considered on different branches. Refutation subtrees with this property do represent sound derivations and are preserved by the desaturation procedure. This leads to a result of soundness and completeness of our semantics with respect to SLD-resolution, crucially using both compositionality and the translation into coinductive trees.

\subsection*{Bialgebraic Semantics of Goals} In the second part of this paper we extend our framework to model the saturated semantics of goals instead of single atoms. This broadening of perspective is justified by a second form of compositionality that we want to study. Given atoms $A$ and $B$, one can see the goal $\{A,B\}$ as their conjunction $A \wedge B$: the idea is that a resolution for $A \wedge B$ requires a resolution for $A$ \emph{and} one for $B$. Our aim is to take this logical structure into account, proposing a semantics for $A \wedge B$ that can be equivalently given as the ``conjunction'' (at a higher level) of the semantics $\bb{A}_{p^{\sharp}}$ and $\bb{B}_{p^{\sharp}}$. Formally, we will model the structure given by $\wedge$ as an \emph{algebra} on the space of goals. To properly extend our saturated approach, the coalgebra $p^{\sharp}$ encoding a logic program needs to be compatible with such algebraic structure: the formal ingredient to achieve this will be a \emph{distributive law} $\delta$ involving the type of the algebra of goals and the one of the coalgebra $p^{\sharp}$. This will give raise to an extension of $p^{\sharp}$ to a $\delta$-bialgebra $\pa{p}^{\sharp}$ on the space of goals, via a categorical construction that is commonplace in computer science. For instance, when instantiated to a non-deterministic automaton $t$, it yields the well-known powerset construction $\overline{t}$ on $t$: the states of $t$ are like atoms of a program, and the ones of $\overline{t}$ are collections of states, like goals are collections of atoms.

Thanks to the compatibility of the operational semantics $\pa{p}^{\sharp}$, the induced map $\bb{\cdot}_{\pa{p}^{\sharp}}$ will also be compatible with the algebraic structure of goals, resulting in the compositionality property sketched above,
 \[ \bb{A \wedge B}_{\pa{p}^{\sharp}} = \bb{A}_{\pa{p}^{\sharp}}\ \overline{\wedge}\ \bb{B}_{\pa{p}^{\sharp}}\]
 where on the right side $\bb{\cdot}_{\pa{p}^{\sharp}}$ is applied to goals consisting of just one atom, $A$ or $B$. As we did for $\bb{\cdot}_{p^{\sharp}}$, we will represent the targets of $\bb{\cdot}_{\pa{p}^{\sharp}}$ as trees, which we call \emph{saturated $\vee$-trees}. Like saturated $\wedge\vee$-trees, they encode derivations by (generalized) unification, which now apply to goals instead of single atoms. This operational understanding of $\bb{\cdot}_{\pa{p}^{\sharp}}$ allows for a more concrete grasp on the higher order conjunction $\overline{\wedge}$: it can be seen as the operation of ``gluing together'' (depthwise) saturated $\vee$-trees.

 Beyond this form of compositionality, our approach exhibits another appealing feature, arising by definition of $\delta$. When solving a goal $G$ in a program $\mb{P}$, the operational semantics $\pa{p}^{\sharp}$ attempts to perform unification \emph{simultaneously} on each atom in $G$ with heads in $\mb{P}$, by applying \emph{the same} substitution on all atoms in the goal. If this form of ``synchronous'' resolution is not possible --- for instance, if there is no unique substitution for all atoms in $G$ --- then the computation encoded by $\pa{p}^{\sharp}$ ends up in a failure.

 This behavior is rather different from the one of the standard SLD-resolution algorithm for logic programming, where unification is performed \emph{sequentially} on one atom of the goal at a time. However, we are able to show that the semantics $\bb{\cdot}_{\pa{p}^{\sharp}}$ is \emph{sound and complete} with respect to SLD-resolution, meaning that no expressivity is lost in assuming a synchronous derivation procedure as the one above. This result extends the completeness theorem, previously stated for the semantics $\bb{\cdot}_{p^{\sharp}}$, from atomic to arbitrary goals. It relies on a notion of refutation subtree for saturated $\vee$-trees that is in a sense more natural than the one needed for saturated $\wedge\vee$-trees. Whereas in the latter we had to impose specific constraints to ensure that refutation subtrees only represent sound derivations, there is no such need in saturated $\vee$-trees, because the required synchronisation property is guaranteed by construction.
 %In the latter, refutation subtrees represent sound derivations only if the computation on all branches advance with the same substitution at each step, whereas in saturated $\vee$-trees this synchronisation property is given by construction.

\subsection*{A Fistful of Trees} The following table summarises how the saturated derivation trees that we introduce compare with the trees appearing in the logic programming literature. %In this paper two new notions of trees --- saturated $\wedge\vee$- and $\vee$-trees --- are introduced and compared with the various tree descriptions of logic programming derivations appearing in the literature. To orient the reader, we summarise the picture as follows.

\begin{table}[h]
\begin{tabular}{|p{2.4cm}|c|c|c|}
\hline
                                                                                             & \multicolumn{3}{c|}{\cellcolor[HTML]{C0C0C0}\textbf{substitution mechanism}}                                                                              \\ \cline{2-4}
\multirow{-2}{*}{}                                                                                   & \cellcolor[HTML]{9B9B9B}\textbf{most general unification} & \cellcolor[HTML]{9B9B9B}\textbf{term-matching} & \cellcolor[HTML]{9B9B9B}\textbf{unification} \\ \hline
\multicolumn{1}{|c|}{\cellcolor[HTML]{9B9B9B}{\color[HTML]{333333} \textbf{\acapo{2.4cm}{nodes are \\ atoms}}}} & \acapo{4.5cm}{$\wedge\vee$-trees \\  (Def.~\ref{DEF:and-or_par_tree_ground}, also called parallel and-or trees~\cite{Gupta94})}    & \acapo{3.1cm}{coinductive trees\\(Def.~\ref{Def:coinductive_trees_Power}, \cite{KomPowerCSL11,KatyaJournal})}         & \acapo{3.2cm}{saturated $\wedge\vee$-trees \\ (Def.~\ref{def:saturatedtree})}             \\ \hline
\multicolumn{1}{|c|}{\cellcolor[HTML]{9B9B9B}{\color[HTML]{333333} \textbf{\acapo{2.4cm}{nodes are \\ goals}}}}        & SLD-trees (e.g.~\cite{Lloyd93})                                                 &                                                & \acapo{3cm}{saturated $\vee$-trees \\ (Def.~\ref{DEF:or_par_tree_general})}                       \\ \hline
\end{tabular}
\end{table}

The computation described by SLD-trees is inherently sequential, whereas all the others in the table exhibit and-or parallelism (\emph{cf.} Section~\ref{ssec:lpbackground}). More specifically, (saturated) $\wedge\vee$-trees and coinductive trees express \emph{independent} and-parallelism: there is no exchange of information between the computations involving different atoms in the goal. Instead, saturated $\vee$-trees encode a dependent form of and-parallelism: at each step \emph{every} atom of the goal is matched with heads of the program, but they have to agree on the same substitution, thus requiring a form of communication between different threads.
Since independent and-parallelism does not cohere well with unification, (saturated) $\wedge\vee$-trees may represent unsound derivations. Instead, they are always sound by construction in coinductive trees (because of the restriction to term-matching) and saturated $\vee$-trees (because the atoms in the goal are processed synchronously, by applying the same substitution).

\subsection*{Related works.}

Our starting point is the key observation that, in coalgebraic logic programming \cite{KomMcCuskerPowerAMAST10,KomPowCALCO11,KomPowerCSL11,KatyaJournal}, the operational semantics fails to be a natural transformation. As an alternative to the lax approach of the above line of research, we propose saturation which, in the case of logic programming, boils down to unification with respect to all substituions, making the whole approach closer to standard semantics, like Herbrand models~\cite{VanEmden1976_LP,clark1980predicate} (where only ground instances are considered) or the $C$-semantics of \cite{FalaschiPalamidessi_LP} (considering also non-ground instances).

As a result, our approach differs sensibly from \cite{KomMcCuskerPowerAMAST10,KomPowCALCO11,KomPowerCSL11,KatyaJournal}: in that series of works, the aim is to give an operational semantics to coinductive logic programs and, at the same time, to exploit and-or parallelism. In our case, the semantics is only meant to model standard (recursive) logic programs and the synchronisation mechanism built-in in saturated $\vee$-trees imposes a form of dependency to and-parallelism.

The two forms of compositionality that we investigate appear in various forms in the standard literature, see e.g. \cite{Apt:AndComp_LP,Lloyd93}. Our interest is to derive them as the result of applying the categorical machinery --- coalgebrae and bialgebrae on presheaves --- that has been fruitfully adopted in the setting of process calculi, as detailed above. For instance, in the open $\pi$-calculus one aims at the same two forms of compositionality: with respect to name substitution --- corresponding to term substitutions in logic programming --- and parallel composition of processes --- corresponding to conjunction of atoms in a goal.

We should also mention other categorical perspectives on (extensions of) logic programming, such as~\cite{Corradini199251,Kinoshita:1996,journals/tcs/AmatoLM09,DBLP:journals/tcs/BonchiM09}. Amongst these, the most relevant for us is \cite{DBLP:journals/tcs/BonchiM09} since it achieves compositionality with respect to substitutions and $\wedge$ by exploiting a form of saturation:
arrows of the index category are both substitutions and $\wedge$-contexts of the shape $G_1 \wedge - \wedge G_2$ (for some goals $G_1, G_2$).
The starting observation in \cite{DBLP:journals/tcs/BonchiM09} is that the construction of relative pushouts \cite{LeiferMilnerCONCUR00} instantiated to such category captures the notion of most general unifiers.

Beyond logic programming, the idea of using saturation to achieve compositionality is even older than \cite{Sang96} (see e.g. \cite{Montanari:92:FI}). As far as we know, \cite{Corradini1999118} is the first work where saturation is explored in terms of coalgebrae. It is interesting to note that, in \cite{DBLP:journals/tcs/CorradiniGH01}, some of the same authors also proposed laxness as a solution for the lack of compositionality of Petri nets.

A third approach, alternative to laxness and saturation, may be possible by taking a special kind of ``powerobject'' functor as done in \cite{DBLP:journals/entcs/Miculan08,DBLP:conf/calco/Staton09} for giving a coalgebraic semantics to fusion and open $\pi$-calculus.
%
%(one important feature of such functor is the preservation weak-pullbacks, which is also guaranteed by saturation \cite{MontSamm_NetConsciousPi_CoalgSem}). %Note that saturation works for any behavioural functor $\B$. possible We avoid the solution of \cite{DBLP:journals/entcs/Miculan08,DBLP:conf/calco/Staton09}, since amongst our motivations, there is  the will of showing the generality of saturated semantics.
%
We have chosen saturated semantics for its generality: it works for any behavioral functor $\B$ and it models a phenomenon that occurs in many different computational models (see e.g. \cite{DBLP:conf/calco/BonchiM09}).

Finally, the approach consisting in saturating and building a bialgebraic model already appeared in \cite{FioreTuri} (amongst others). In that work, a sort of saturated semantics is achieved by transposing along the adjunction between $\prsh{\mf{I}}$ and $\prsh{\mf{F}}$ obtained from the injection of the category $\mf{I}$ of finite sets and injective functions into the category $\mf{F}$ of all functions.
An interesting construction, missing in \cite{FioreTuri}, is the one of the distributive law $\delta$ for saturated semantics: in our work, $\delta$ is built in a canonical way out of the distributive law for the non-saturated semantics.

   \subsection*{Synopsis} After introducing the necessary background in Section~\ref{SEC:BackgroundJournal}, we recall the framework of coalgebraic logic programming of \cite{KomMcCuskerPowerAMAST10,KomPowCALCO11,KomPowerCSL11,KatyaJournal} in Section \ref{SEC:Background}. In Section~\ref{SEC:SemLogProg} we propose saturated semantics, allowing us to achieve the first compositionality property. In Section~\ref{SEC:Desaturation} we compare saturated semantics with the lax approach of \cite{KomPowCALCO11}. This is instrumental for proving, in Section~\ref{SEC:Completeness}, soundness and completeness of saturated semantics with respect to SLD-resolution on atomic goals.

   In the second part of the paper we present the bialgebraic semantics for arbitrary goals. We start in Section~\ref{sec:parground} and \ref{sec:complparsem_ground} by considering the simpler setting of ground logic programs. In particular, Section~\ref{sec:parground} shows the second compositionality property and Section~\ref{sec:complparsem_ground} draws a comparison with the coalgebraic semantics.
   Section~\ref{sec:pargeneral} and Section~\ref{sec:complparsem_gen} generalize the results of the previous two sections to arbitrary logic programs. In particular, we conclude Section~\ref{sec:complparsem_gen} by proving soundness and completeness of the bialgebraic semantics of goals with respect to SLD-resolution, extending the analogous result for atomic goals in Section~\ref{SEC:Completeness}.

The present work extends the conference paper \cite{DBLP:conf/calco/BonchiZ13} with more examples, proofs and the new material presented in Sections \ref{sec:parground}, \ref{sec:complparsem_ground}, \ref{sec:pargeneral} and \ref{sec:complparsem_gen}.
%The results of Sections \ref{SEC:SemLogProg}-\ref{SEC:Desaturation}-\ref{SEC:Completeness} appeared in a conference version of this paper \cite{DBLP:conf/calco/BonchiZ13}. The material of Sections \ref{sec:parground}-\ref{sec:complparsem_ground}-\ref{sec:pargeneral}-\ref{sec:complparsem_gen} is original.

   %we introduce the bialgebraic semantics for arbitrary goals, starting with the semantics of ground logic programs. We discuss the distributive law $\lambda$ giving the compatibility condition and derive the $\wedge$-compositionality of the resulting semantics. In Section~\ref{sec:complparsem_ground} we use this property to reduce the behavior of the ground semantics of goals to the one of the ground semantics of atoms. Section~\ref{sec:pargeneral} is devoted to generalize our approach to arbitrary logic programs. In particular, we pursue a fairly detailed study of how the distributive law $\delta$ yielding compatibility for arbitrary programs can be retrieved in a canonical way from the one $\lambda$ provided in the ground case. Section~\ref{sec:complparsem_gen} is devoted to prove soundness and completeness of the saturated semantics of goals with respect to SLD-resolution, generalizing the result of Section~\ref{SEC:Completeness}.

% FV
\subsection*{Acknowledgements.} We thank E. Komendantskaya, T. Hirschowitz, U. Montanari, D. Petrisan, J. Power, M. Sammartino and the anonymous referees for the helpful comments. We acknowledge support by project ANR~12IS02001~PACE.
%\paragraph{Acknowledgements.} We thank E. Komendantskaya, T. Hirschowitz, D. Petrisan, J. Power, M. Sammartino, the \emph{Plume} team and the anonymous referees for the helpful comments. Our work is supported by project ANR 12IS02001 PACE.

\section{Background}\label{SEC:BackgroundJournal}
%\pippo{Per tutte le famiglie di funzioni, siano esse trasformazioni naturali o meno, io userei sempre il pedice per denotare i suoi elementi: ad esempio $p_n$, $\eta_{\At}$ ...}

In this section we fix some terminology, notation and basic results, mainly concerning category theory and logic programming.

\subsection{Categories and Presheaves}

Given a (small) category $\catC$, $|\catC|$ denotes the category with the same objects as $\catC$ but no other arrow than the identities. With a little abuse of notation, $X \in \catC$ indicates that $X$ is an object of $\catC$ and $\catC [X,Y]$ the set of arrows from $X$ to $Y$. We denote with $X \times Y$ the product of objects $X,Y\in  \catC$ with projections $\pi_1 \: X \times Y \to X$ and $\pi_2 \: X \times Y \to Y$. Given $Z \in \catC$ and arrows $f \: Z \to X$ and $g\: Z \to Y$, we denote with $<f,g> \: Z \to X \times Y$ the arrow given by universal property of $X \times Y$. We use the notation $\efuncat{\C}$ for the category of endofunctors on $\C$ and natural transformations.
 A $\catC$-indexed \emph{presheaf} is any functor $\G\colon \catC \to \Set$. We write $\prsh{\catC}$ for the category of $\catC$-indexed presheaves and natural transformations.

 Throughout this paper we will need to extend functors on $\Set$ to functors on presheaf categories. For this purpose, it will be useful to state the following construction.

 \begin{definition}\label{def:liftpreshgen} Given a category $\C$, the functor $\liftf{(\cdot)}{\C} \: \efuncat{\set} \to \efuncat{\prsh{\C}}$ is defined as follows.
 \begin{itemize}
   \item Given an object $\F \: \Set \to \Set$, $\liftf{\F}{\C} \: \prsh{\C} \to \prsh{\C}$ is defined on $\G \: \C \to \Set$ as $\F \after \G$ and on $\alpha \: \G \To \FH$ as the natural transformation given by $\Big(\F\G(n) \xrightarrow{\F(\alpha_n)} \F \FH(n) \Big)_{n \in \C}$.
   \item Given an arrow $\gamma \: \F \To \B$ of $\efuncat{\set}$, $\liftf{\gamma}{\C} \: \liftf{\F}{\C} \To \liftf{\B}{\C}$ is a natural transformation $\beta$ given by the family $\Big(\F\G \xrightarrow{{\beta}_{\G}} \B \G \Big)_{\G \in \prsh{\C}}$ of natural transformations, where each ${\beta}_{\G}$ is defined by $\Big(\F\G(n) \xrightarrow{\gamma_{\G(n)}} \B \G(n) \Big)_{n \in \C}$.
 \end{itemize}
 We call $\liftf{\F}{\C}$ and $\liftf{\gamma}{\C}$ \emph{extensions} of $\F$ and $\gamma$ respectively.
 \end{definition}

 We will mainly work with categories of presheaves indexed by $\Lw$ --- the (opposite) Lawvere theory on a signature $\Sigma$, defined in Section \ref{ssec:termsAtomsSubstitutions} --- and its discretization $|\Lw|$. To simplify notation, we will follow the convention of writing $\lift{(\cdot)}$ for the extension functor $\liftf{(\cdot)}{\Lw}$, i.e., $\liftf{(\cdot)}{\C}$ where $\C = \Lw$, and $\liftt{(\cdot)}$ for $\liftf{(\cdot)}{|\Lw|}$.

\subsection{Monads and Distributive Laws}

A \emph{monad} in a category $\C$ is a functor $\T\colon\catC \rightarrow \catC$ together with two natural transformations $\eta\colon \id
\To \T$ and $\mu \colon {\T\T} \To
\T$, called respectively unit and multiplication of $\T$, which are required to satisfy the following equations for any $X \in \catC$: $\mu_X\after\eta_{\T X} = \mu_X\after \T\eta_{X} = \id_X$ and $\T\mu_{X}\after \mu_{X X} = \mu_X \after \mu_X$. We shall also make use of the triple notation $(\T,\eta,\mu)$ for monads.
A distributive law of a {\em monad} $(\T,\eta^\T,\mu^\T)$ over a {\em monad} $(\M,\eta^\M,\mu^\M)$ is a
natural transformation $\lambda \colon
\T\M\To \M\T$ making the diagrams below commute:
\begin{equation*}
%\label{KLlawDiag}
\vcenter{\xymatrix@R-.5pc{
\T X\ar[d]_{\T(\eta^\M_{X})}\ar@{=}[r] & \T X\ar[d]^{\eta^\M_{\T X}}
& \quad
\T\M^{2}X\ar[d]_{\T\mu^\M_{X}}\ar[r]^-{\lambda_{\M X}} &
   \M\T\M X\ar[r]^-{\M\lambda_{X}} &
   \M^{2}\T X\ar[d]^{\mu^\M_{\T X}} \\
\T\M X\ar[r]_-{\lambda_X} & \M\T X
& \quad
\T\M X\ar[rr]_-{\lambda_X} & & \M\T X\\
\M X\ar[u]^{\eta^\T_{\M X}}\ar@{=}[r] & \M X\ar[u]_{\M\eta^\T_{X}}
& \quad
\T^{2}\M X\ar[u]^{\mu^\T_{\M X}}\ar[r]_{\T\lambda_{\M X}} &
   \T\M\T X\ar[r]_{\lambda_{\T X}} &
\M\T^{2} X\ar[u]_{\M\mu^\T_{X}}  }}
\end{equation*}
A distributive law $\lambda \colon
\T\M\To \M\T$ between monads yields a monad $\M\T$ with unit $\eta^{\M\T} \df \eta^{\M}_{\T}\after\eta^{\T}$ and multiplication $\mu^{\M\T} \df \M\mu^{\T} \after \mu^{\M}_{\T\T} \after \M\lambda_{\T}$. We introduce now two weaker notions of the law. A distributive law of a {\em monad} $(\T,\eta^\T,\mu^\T)$ over a {\em functor } $\M$ is a
natural transformation $\lambda \colon
\T\M \To \M\T$ such that only the two bottommost squares above commute. One step further in generalization, we call a distributive law of a {\em functor } $\T$ over a {\em functor} $\M$ any natural transformation from $\T\M$ to $\M\T$.

 With the next proposition we observe that the extension functor $\liftf{(\cdot)}{\C}$ (Definition \ref{def:liftpreshgen}) preserves the monad structure.

\begin{proposition}\label{prop:liftpreservemonads} If $(\T,\eta,\mu)$ is a monad in $\Set$ then $(\liftf{\T}{\C},\liftf{\eta}{\C},\liftf{\mu}{\C})$ is a monad in $\prsh{\C}$. Moreover, if $\lambda \: \T\M \to \M\T$ is a distributive law of monads in $\Set$ then $\liftf{\lambda}{\C} \: \liftf{\T}{\C}\liftf{\M}{\C} \to \liftf{\M}{\C}\liftf{\T}{\C}$ is a distributive law of monads in $\prsh{\C}$.
\end{proposition}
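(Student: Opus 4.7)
The strategy is a pointwise argument exploiting that $\liftf{(\cdot)}{\C}$ is in fact a strict monoidal functor from $(\efuncat{\set}, \circ, \id_{\set})$ to $(\efuncat{\prsh{\C}}, \circ, \id_{\prsh{\C}})$. Concretely, for any $\F, \B \colon \Set \to \Set$ and any $\G \in \prsh{\C}$ one has $\liftf{\F\B}{\C}(\G) = \F\B\G = \liftf{\F}{\C}(\liftf{\B}{\C}(\G))$, and analogously on arrows, hence $\liftf{\F\B}{\C} = \liftf{\F}{\C}\liftf{\B}{\C}$ on the nose and $\liftf{\id_\Set}{\C} = \id_{\prsh{\C}}$. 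A direct componentwise check also shows that $\liftf{(\cdot)}{\C}$ commutes with vertical and horizontal composition of natural transformations.

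With this in place, I would first verify the monad axioms for $(\liftf{\T}{\C}, \liftf{\eta}{\C}, \liftf{\mu}{\C})$. The strict monoidality makes the typings match: $\liftf{\eta}{\C} \colon \id_{\prsh{\C}} \To \liftf{\T}{\C}$ and $\liftf{\mu}{\C} \colon \liftf{\T}{\C}\liftf{\T}{\C} = \liftf{\T\T}{\C} \To \liftf{\T}{\C}$. The associativity and unit diagrams in $\prsh{\C}$, evaluated at any $\G \in \prsh{\C}$ and $n \in \C$, reduce by definition to the corresponding diagrams for $(\T, \eta, \mu)$ at $\G(n) \in \Set$, which commute by hypothesis. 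For instance, the $n$-component of $\liftf{\mu}{\C}_\G \after \liftf{\eta}{\C}_{\liftf{\T}{\C}\G}$ is exactly $\mu_{\G(n)} \after \eta_{\T\G(n)} = \id_{\T\G(n)}$, and the remaining laws follow the same pattern.

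For the second part, the same identifications $\liftf{\T\M}{\C} = \liftf{\T}{\C}\liftf{\M}{\C}$ and $\liftf{\M\T}{\C} = \liftf{\M}{\C}\liftf{\T}{\C}$ ensure that $\liftf{\lambda}{\C}$ has the correct type as a distributive law. Each of the four distributive law squares for $\liftf{\lambda}{\C}$, evaluated at some $(\G, n)$, becomes exactly the corresponding square for $\lambda$ at $\G(n)$, which commutes by assumption; naturality of $\liftf{\lambda}{\C}$ is inherited componentwise from naturality of $\lambda$.

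I do not anticipate a real obstacle: once the strict preservation of composition by $\liftf{(\cdot)}{\C}$ is made explicit, every diagram we need to verify in $\prsh{\C}$ collapses, on components, to a diagram known to commute in $\Set$. The only mild book-keeping is to write the identifications $\liftf{\T\T}{\C} = \liftf{\T}{\C}\liftf{\T}{\C}$, $\liftf{\T\M}{\C} = \liftf{\T}{\C}\liftf{\M}{\C}$ and $\liftf{\M\T}{\C} = \liftf{\M}{\C}\liftf{\T}{\C}$ carefully enough so that the domains and codomains of $\liftf{\mu}{\C}$ and $\liftf{\lambda}{\C}$ match the standard shapes of the monad and distributive-law diagrams.
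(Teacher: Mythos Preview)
Your proposal is correct and follows essentially the same approach as the paper: the paper's proof simply observes that $\liftf{(\cdot)}{\C}$ acts componentwise on arrows of $\efuncat{\set}$, so commutativity of the monad and distributive-law diagrams in $\prsh{\C}$ reduces to the corresponding diagrams in $\Set$. Your version is more explicit about the strict monoidality of $\liftf{(\cdot)}{\C}$ and the type-matching book-keeping, but the underlying argument is the same pointwise reduction.
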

\begin{proof} The action of $\liftf{(\cdot)}{\C}$ on arrows of $\efuncat{\set}$ is given componentwise. This means that commutativity of the diagrams involving  $\liftf{\eta}{\C}$, $\liftf{\mu}{\C}$ and $\liftf{\lambda}{\C}$ in $\prsh{\C}$ follows by the one of the corresponding diagrams in $\Set$.
\end{proof}

\subsection{Algebrae, Coalgebrae and Bialgebrae}

Given a functor $\F \: \catC \to \catC$, a \emph{$\F$-algebra} on $X \in \catC$ is an arrow $h \: \F(X) \to X$, also written as a pair $(X,h)$. A morphism between $\F$-algebrae $(X,h)$ and $(Y,i)$ is an arrow $f \: X \to Y$ such that $f \circ h = i \circ \F(f)$.

Dually, a \emph{$\B$-coalgebra} on $X \in \catC$ is an arrow $p \: X \to \B (X)$, also written as a pair $(X,p)$. A morphism between $\B$-coalgebrae $(X,p)$ and $(Y,q)$ is an arrow $g \: X \to Y$ such that $q \circ g = \B(g) \circ p$. We fix notation $\coalg{\B}$ for the category of $\B$-coalgebrae and their morphisms. If it exists, the \emph{final $\B$-coalgebra} is the terminal object in $\coalg{\B}$. The \emph{cofree $\B$-coalgebra} on $X \in \C$ is given by $(\Omega,\pi_2 \circ \omega \: \Omega \to \B(\Omega))$, where $(\Omega,\omega)$ is the terminal object in $\coalg{X \times \B(\cdot)}$.

Let $\lambda \: \F\B \To \B\F$ be a distributive law between functors $\F,\B \: \C \to \C$. A {\em $\lambda$-bialgebra} is a triple $(X,h,p)$ where $h \: \F X \to X$ is an $\F$-algebra and $p \: X \to \B X$ is a $\B$-coalgebra subject to the compatibility property given by commutativity of the following diagram:
\[
\xymatrix{
\F X \ar[r]^h \ar[d]_{\F p} & X \ar[r]^p & \B X \\
\F \B X \ar[rr]_{\lambda_X} && \B \F X \ar[u]_{\B h}
}
\]
A \emph{$\lambda$-bialgebra morphism} from $(X,h,p)$ to $(Y,i,q)$ is an arrow $f \: X \to Y$ that is both a $\B$-coalgebra morphism from $(X,p)$ to $(Y,q)$ and an $\F$-algebra morphism from $(X,h)$ to $(Y,i)$. We fix notation $\bialg{\lambda}$ for the category of $\lambda$-bialgebrae and their morphisms. If it exists, the \emph{final $\lambda$-bialgebra} is the terminal object in $\bialg{\lambda}$.

Next we record some useful constructions of bialgebrae out of coalgebrae. When $\F$ is a monad and $\lambda$ is a distributive law of the \emph{monad} $\F$ over the functor $\B$, then any $\B \F$-coalgebra canonically lifts to a $\lambda$-bialgebra as guaranteed by the following proposition (for a proof see e.g. \cite{JSS12_traceSemanticsDeterminization}).
\begin{proposition}\label{prop:bialgfreemonad}
 Given a monad $\T\: \C \to \C$ and a functor $\B\: \C \to \C$, let $\lambda \: \T\B \To \B\T$ be a distributive law of the monad $\T$ over the functor $\B$. Given a $\B \T$-coalgebra $p \: X\to \B \T X$, define the $\B$-coalgebra $\pg{p}\: \T X \to \B \T X$ as
\[\xymatrix{\T X \ar[r]^-{\T p} & \T \B \T X \ar[r]^{\lambda_{\T X}} & \B \T \T X \ar[r]^{\B(\mu^{\T}_X)} & \B \T X}.\]
 Then $(\T X,\mu^{\T}_X,\pg{p})$ forms a $\lambda$-bialgebra. Moreover, this assignment extends to a functor from $\coalg{\B \T}$ to $\bialg{\lambda}$ mapping:
 \begin{itemize}
   \item a $\B \T$-coalgebra $(X,p)$ to the $\lambda$-bialgebra $(\T X,\mu^{\T}_X,\pg{p})$;
   \item a morphism $f \: X \to Y$ of $\B \T$-coalgebrae to a morphism $\T f \: \T X \to \T Y$ of $\lambda$-bialgebrae.
 \end{itemize}
\end{proposition}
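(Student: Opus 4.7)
The claim is that the map sending a $\B\T$-coalgebra $(X,p)$ to $(\T X,\mu^{\T}_X,\pg{p})$ is a well-defined functor from $\coalg{\B\T}$ to $\bialg{\lambda}$. I split the verification into three parts: (i) the triple $(\T X,\mu^{\T}_X,\pg{p})$ is a $\lambda$-bialgebra, (ii) if $f \: X \to Y$ is a morphism of $\B\T$-coalgebrae, then $\T f$ is a morphism of $\lambda$-bialgebrae, and (iii) identities and composition are preserved. Part (iii) is immediate from the functoriality of $\T$, so I focus on (i) and (ii).

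For (i), the pair $(\T X, \mu^{\T}_X)$ is an Eilenberg--Moore algebra for $\T$ by the associativity axiom of the monad, and $\pg{p}$ is a $\B$-coalgebra by definition. It remains to check the pentagon
\[
\pg{p} \after \mu^{\T}_X \;=\; \B(\mu^{\T}_X) \after \lambda_{\T X} \after \T \pg{p}.
\]
The plan is to expand both sides and reduce them to a common normal form using: (a) naturality of $\mu^{\T}$ applied to $p$, which rewrites $\T p \after \mu^{\T}_X$ as $\mu^{\T}_{\B\T X} \after \T^2 p$; (b) the distributive-law axiom at $\T X$, namely $\lambda_{\T X} \after \mu^{\T}_{\B\T X} = \B\mu^{\T}_{\T X} \after \lambda_{\T^2 X} \after \T\lambda_{\T X}$, applied on the left-hand side; (c) naturality of $\lambda$ at $\mu^{\T}_X \: \T^2 X \to \T X$, which rewrites $\lambda_{\T X} \after \T\B\mu^{\T}_X$ as $\B\T\mu^{\T}_X \after \lambda_{\T^2 X}$, applied on the right-hand side; (d) the associativity of $\mu^{\T}$ to identify the remaining prefixes $\B(\mu^{\T}_X \after \mu^{\T}_{\T X})$ and $\B(\mu^{\T}_X \after \T\mu^{\T}_X)$. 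The expected main obstacle is making sure the rewrites apply at the correct components; once (a)--(d) are lined up, both sides collapse to $\B\mu^{\T}_X \after \B\T\mu^{\T}_X \after \lambda_{\T^2 X} \after \T\lambda_{\T X} \after \T^2 p$.

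For (ii), suppose $\B\T f \after p = q \after f$. That $\T f$ is a $\T$-algebra morphism from $(\T X,\mu^{\T}_X)$ to $(\T Y,\mu^{\T}_Y)$ follows immediately from naturality of $\mu^{\T}$. To see that $\T f$ is a $\B$-coalgebra morphism from $(\T X,\pg{p})$ to $(\T Y,\pg{q})$, I would chain the following rewrites on $\B\T f \after \pg{p}$: use functoriality of $\B$ and naturality of $\mu^{\T}$ to move $\T f$ past $\mu^{\T}_X$, then use naturality of $\lambda$ at $\T f$ to move $\B\T\T f$ past $\lambda_{\T X}$, and finally invoke the assumption on $f$ inside $\T(\cdot)$ to obtain $\pg{q} \after \T f$. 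All these are straightforward instances of naturality diagrams; there is no subtlety analogous to (i).

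Part (iii) is a routine verification: $\T$ preserves identities and composition, so the assignment $f \mapsto \T f$ defines a functor on morphisms, and compatibility with the bialgebra structure was checked in (ii). The conceptual core of the proof is therefore the pentagon computation in (i), which is essentially the content of the Eilenberg--Moore lifting for distributive laws of a monad over a functor, and it is this that is referred to in the cited reference.
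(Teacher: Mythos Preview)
Your proposal is correct and follows the standard argument. Note, however, that the paper does not supply its own proof of this proposition: it simply refers the reader to \cite{JSS12_traceSemanticsDeterminization}. Your write-up therefore fills in what the paper leaves to the literature, and the route you take (naturality of $\mu^{\T}$, the multiplication axiom of the distributive law, naturality of $\lambda$, and monad associativity for part (i); naturality of $\mu^{\T}$ and $\lambda$ together with the coalgebra-morphism assumption for part (ii)) is exactly the expected one.
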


We recall also the following folklore result (see e.g. \cite{Klin11_BialgebrasSOSIntro}) on the relation between final coalgebrae and final bialgebrae.
\begin{proposition}\label{prop:finalcoalgBialg}
 Let $\lambda \: \F\B \To \B\F$ be a distributive law between functors $\F,\B\: \C \to \C$ and suppose that a final $\B$-coalgebra $c \: X \to \B X$ exists. Form the $\B$-coalgebra $h \: \F X \xrightarrow{\F c} \F \B X \xrightarrow{\lambda_X} \B \F X$ and let $h' \: \F X \to X$ be the unique $\B$-coalgebra morphism given by finality of $c \: X \to \B X$. Then $(X,h',c)$ is the final $\lambda$-bialgebra.
\end{proposition}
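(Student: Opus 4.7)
The plan is to verify first that $(X,h',c)$ qualifies as a $\lambda$-bialgebra and then to establish its universal property, with both parts leveraging finality of $c$ in $\coalg{\B}$.

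For the compatibility square, note that by construction $h'$ is a $\B$-coalgebra morphism from $(\F X, h)$ to $(X,c)$, which literally says $c \circ h' = \B h' \circ h = \B h' \circ \lambda_X \circ \F c$. This is exactly the commutativity of the bialgebra square for $(X,h',c)$, so the bialgebra law comes for free from the definition of $h'$.

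For finality, I would start from an arbitrary $\lambda$-bialgebra $(Y,i,q)$ and define $f \: Y \to X$ to be the unique $\B$-coalgebra morphism from $(Y,q)$ to $(X,c)$ given by finality. The main task is to check that $f$ is also an $\F$-algebra morphism, i.e.\ $f \circ i = h' \circ \F f$. The strategy is to exhibit both sides as $\B$-coalgebra morphisms from the same source coalgebra into $(X,c)$ and then invoke finality of $c$ once more. The natural candidate for the source is $(\F Y,\, \lambda_Y \circ \F q)$: the bialgebra law for $(Y,i,q)$ ensures that $i \: (\F Y, \lambda_Y \circ \F q) \to (Y,q)$ is a $\B$-coalgebra morphism, whence so is $f \circ i \: (\F Y, \lambda_Y \circ \F q) \to (X,c)$. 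For $h' \circ \F f$, one computes
\[
c \circ h' \circ \F f \;=\; \B h' \circ \lambda_X \circ \F c \circ \F f \;=\; \B h' \circ \lambda_X \circ \F \B f \circ \F q \;=\; \B h' \circ \B \F f \circ \lambda_Y \circ \F q \;=\; \B(h' \circ \F f) \circ \lambda_Y \circ \F q,
\]
using the bialgebra square for $(X,h',c)$, the fact that $f$ is a $\B$-coalgebra morphism, naturality of $\lambda$, and functoriality of $\B$. Thus $h' \circ \F f$ is also a $\B$-coalgebra morphism from $(\F Y, \lambda_Y \circ \F q)$ to $(X,c)$, so by finality it must agree with $f \circ i$.

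Uniqueness of $f$ as a $\lambda$-bialgebra morphism is immediate: any such morphism is in particular a $\B$-coalgebra morphism from $(Y,q)$ to $(X,c)$, and these are unique by finality of $c$. The only potentially delicate step is the computation showing $h' \circ \F f$ is a coalgebra morphism, where naturality of $\lambda$ is used essentially; everything else reduces to unfolding the definition of $h'$ and the bialgebra compatibility squares of $(Y,i,q)$ and $(X,h',c)$.
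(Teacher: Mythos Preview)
Your proof is correct and is essentially the standard argument for this folklore result. The paper does not supply its own proof; it simply records the proposition with a reference to \cite{Klin11_BialgebrasSOSIntro}, so there is nothing to compare your argument against beyond noting that it matches the usual textbook approach.
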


\subsection{Terms, Atoms and Substitutions}\label{ssec:termsAtomsSubstitutions}

We fix a \emph{signature} $\Sigma$ of function symbols, each equipped with a fixed arity, and a countably infinite set $\m{Var} = \{x_1,x_2,x_3,\dots\}$ of variables. We model substitutions and unification of terms over $\Sigma$ and $\m{Var}$ according to the categorical perspective of \cite{Goguen89whatis,Bruni01aninteractive}. To this aim, let the (opposite) \emph{Lawvere Theory} of $\Sigma$ be a category $\Lw$ where objects are natural numbers, with $n \in \Lw$ intuitively representing variables $x_1,x_2,\dots,x_n$ from $\m{Var}$.
%Observe that $\Lw$ has (strictly associative) finite coproducts, given by sum of natural numbers.
For any two $n,m \in \Lw$, the set $\Lw [n,m]$ consists of all $n$-tuples $<t_1,\dots,t_n>$ of terms where only variables among $x_1,\dots,x_m$ occur. The identity on $n \in \Lw$, denoted by $\m{id}_n$, is given by the tuple $<x_1,\dots,x_n>$. The composition of $<t_1^1,\dots,t_n^1> \: n \to m$ and $<t_1^2,\dots,t_m^2> \: m \to m'$ is the tuple $<t_1,\dots,t_n> \: n \to m'$, where $t_i$ is the term $t_i^1$ in which every variable $x_j$ has been replaced with $t_j^2$, for $1 \leq j \leq m$ and $1 \leq i \leq n$.

We call \emph{substitutions} the arrows of $\Lw$ and use Greek letters $\theta,\ \sigma$ and $\tau$ to denote them.  Given $\theta_1 \: n \to m_1$ and $\theta_2 \: n \to m_2$, a \emph{unifier} of $\theta_1$ and $\theta_2$ is a pair of substitutions $\sigma\: m_1 \to m$ and $\tau \: m_2 \to m$, where $m$ is some object of $\Lw$, such that $\sigma \circ \theta_1 = \tau \circ \theta_2$. The \emph{most general unifier} of $\theta_1$ and $\theta_2$ is a unifier with a universal property, i.e. a pushout of the diagram $m_1 \xleftarrow{\theta_1} n \xrightarrow{\theta_2} m_2$.

An \emph{alphabet} $\mc{A}$ consists of a signature $\Sigma$, a set of variables $\m{Var}$ and a set of predicate symbols $P, P_1, P_2,\dots$ each assigned an arity. Given $P$ of arity $n$ and $\Sigma$-terms $t_1,\dots,t_n$, $P(t_1,\dots,t_n)$ is called an \emph{atom}. We use Latin capital letters $A,B,\dots$ for atoms.
%An atom $A = P(t_1,\dots,t_n)$ is \emph{ground} if each $t_i$ does not contain any variable.
%The set $\At$ consists of all atoms constructed from the symbols of the alphabet $\mc{A}$.
Given a substitution $\theta = <t_1,\dots,t_n> \: n \to m$ and an atom $A$ with variables among $x_1,\dots,x_n$, we adopt the standard notation of logic programming in denoting with $A \theta$ the atom obtained by replacing $x_i$ with $t_i$ in $A$, for $1 \leq i \leq n$. The atom $A \theta$ is called a \emph{substitution instance} of $A$. The notation $\{A_1,\dots,A_m\}\theta$ is a shorthand for $\{A_1\theta,\dots,A_m\theta\}$. Given atoms $A_1$ and $A_2$, we say that $A_1$ \emph{unifies} with $A_2$ (equivalently, they are \emph{unifiable}) if they are of the form $A_1 = P(t_1,\dots,t_n)$, $A_2 = P(t_1',\dots,t_n')$ and a unifier $<\sigma,\tau>$ of $<t_1,\dots,t_n>$ and $<t_1',\dots,t_n'>$ exists. Observe that, by definition of unifier, this amounts to saying that $A_1 \sigma = A_2 \tau$. \emph{Term matching} is a particular case of unification, where $\sigma$ is the identity substitution. In this case we say that $<\sigma,\tau>$ is a \emph{term-matcher} of $A_1$ and $A_2$, meaning that $A_1 = A_2 \tau$.

\subsection{Logic Programming}\label{ssec:lpbackground}

A \emph{logic program} $\mb{P}$ consists of a finite set of \emph{clauses} $C$ written as $H \seq B_1,\dots,B_k$. The components $H$ and $B_1,\dots,B_k$ are atoms, where $H$ is called the \emph{head} of $C$ and $B_1,\dots,B_k$ form the \emph{body} of $C$. One can think of $H \seq B_1,\dots,B_k$ as representing the first-order formula $(B_1 \wedge \dots \wedge B_k) \rightarrow H$. We say that $\mb{P}$ is \emph{ground} if only ground atoms (i.e. without variables) occur in its clauses.

The central algorithm of logic programming is SLD-resolution, checking whether a finite collection of atoms $G$ (called a \emph{goal} and usually modeled as a set or a list) is \emph{refutable} in $\mb{P}$. A run of the algorithm on inputs $G$ and $\mb{P}$ gives rise to an \emph{SLD-derivation}, whose steps of computation can be sketched as follows. At the initial step $0$, a set of atoms $G_0$ (the current goal) is initialized as $G$. At each step $i$, an atom $A_i$ is selected in the current goal $G_i$ and one checks whether $A_i$ is unifiable with the head of some clause of the program. If not, the computation terminates with a failure. Otherwise, one such clause $C_i\ =\ H \seq B_1,\dots,B_k$ is selected: by a classical result, since $A_i$ and $H$ unify, they have also a most general unifier $<\sigma_i,\tau_i>$. The goal $G_{i+1}$ for the step $i+1$ is given as
$$\{B_1,\dots,B_k\}\tau_i\ \cup\ (G_{i}\setminus \{A_i\})\sigma_i.$$
Such a computation is called an \emph{SLD-refutation} if it terminates in a finite number (say $n$) of steps with $G_{n} = \emptyset$. In this case one calls \emph{computed answer} the substitution given by composing the first projections $\sigma_n,\dots,\sigma_0$ of the most general unifiers associated with each step of the computation. The goal $G$ is \emph{refutable} in $\mb{P}$ if an SLD-refutation for $G$ in $\mb{P}$ exists.
A \emph{correct answer} is a substitution $\theta$ for which there exist a computed answer $\tau$ and a substituion $\sigma$ such that $\sigma \circ \tau = \theta$.
%
%A variation of the SLD-resolution algorithm is the one in which at each step a unifier (not necessarily the most general one) is computed. In this case, the output substitution, obtained as the computed answer above, is called instead \emph{correct answer}. Just as the most general unifier is the ``smallest'' unifier, the computed answer is the smallest correct answer. Indeed, if $G$ is refutable in $\mb{P}$ with computed answer $\tau$, then for any correct answer $\theta$ for the same goal and program there exists $\sigma$ such that $\sigma \circ \tau = \theta$.
%
We refer to \cite{Lloyd93} for a more detailed introduction to SLD-resolution.
% FV : we refer to Appendix~\ref{APP:SLD+AndORTrees} for more details.

\fznote{For Katya}\begin{convention} In any derivation of $G$ in $\mb{P}$, the standard convention is that the variables occurring in the clause $C_i$ considered at step $i$ do not appear in goals $G_{i-1},\dots,G_0$. This guarantees that the computed answer is a well-defined substitution and may require a dynamic (i.e. at step $i$) renaming of variables appearing in $C_i$. The associated procedure is called \emph{standardizing the variables apart} and we assume it throughout the paper without explicit mention. It also justifies our definition (Section \ref{ssec:termsAtomsSubstitutions}) of the most general unifier as
pushout
of two substitutions \emph{with
different target}, whereas it is also modeled in the literature as the coequalizer of two substitutions \emph{with the same target}, see e.g. \cite{Goguen89whatis}. The different target corresponds to the two substitutions depending on disjoint sets of variables.\end{convention}

Relevant for our exposition are \emph{$\wedge\vee$-trees} (``and-or trees'') \cite{Gupta94}, which represent executions of SLD-resolution exploiting two forms of parallelism: \emph{and-parallelism}, corresponding to simultaneous refutation-search of multiple atoms in a goal, and \emph{or-parallelism}, exploring multiple attempts to refute the same goal. These are also called \emph{and-or parallel derivation tress} in \cite{KatyaJournal}.

\begin{definition}\label{DEF:and-or_par_tree_ground}
Given a logic program $\mb{P}$ and an atom $A$, the (parallel) \emph{$\wedge\vee$-tree} for $A$ in $\mb{P}$ is the possibly infinite tree $\tree$ satisfying the following properties:
\begin{enumerate}
  \item Each node in $\tree$ is either an $\wedge$-node or an $\vee$-node.
  \item Each $\wedge$-node is labeled with one atom and its children are $\vee$-nodes.
  \item The root of $\tree$ is an $\wedge$-node labeled with $A$.
  \item Each $\vee$-node is labeled with $\bullet$ and its children are $\wedge$-nodes.
  \item For every $\wedge$-node $s$ in $\tree$, let $A'$ be its label. For every clause $H\seq B_1,\dots,B_k$ of $\mb{P}$ and most general unifier $<\sigma,\tau>$ of $A'$ and $H$, $s$ has exactly one child $t$, and viceversa. For each atom $B$ in
      $\{B_1,\dots,B_k\}\tau$, $t$ has exactly one child labeled with $B$, and viceversa.
  %$k$ children, with $B_i \tau$ labeling the $i$th child.
\end{enumerate}
\end{definition}

\noindent As standard for any tree, we have a notion of \emph{depth}: the root is at depth $0$ and depth $i+1$ is given by the children of nodes at depth $i$. In our graphical representation of trees, we draw arcs between a node and its children which we call \emph{edges}. A \emph{subtree} of a tree $\tree$ is a set of nodes of $\tree$ forming a tree $\tree'$, such that the child relation between nodes in $\tree'$ agrees with the one they have in $\tree$.

An SLD-resolution for the singleton goal $\{A\}$ is represented as a particular kind of subtree of the $\wedge\vee$-tree for $A$, called \emph{derivation subtree}. The intuition is that derivation subtrees encode ``deterministic'' computations, that is, no branching given by or-parallelism is allowed. \emph{Refutation subtrees} are those derivation subtrees yielding an SLD-refutation of $\{A\}$: all paths lead to a leaf with no atoms left to be refuted.

\begin{definition}\label{Def:subtree_Power} Let $\tree$ be the $\wedge\vee$-tree for an atom $A$ in a program $\mb{P}$. A subtree $\tree'$ of $\tree$ is a \emph{derivation subtree} if it satisfies the following conditions:
\begin{enumerate}
  \item the root of $\tree'$ is the root of $\tree$;
  \item if an $\wedge$-node of $\tree$ belongs to $\tree'$, then just one of its children belongs to $\tree'$;
  \item if an $\vee$-node of $\tree$  belongs to $\tree'$, then all its children belong to $\tree'$.
\end{enumerate}
A \emph{refutation subtree} (called success subtree in \cite{KomPowerCSL11}) is a finite derivation subtree with only $\vee$-nodes as leaves.
 \end{definition}

\section{Coalgebraic Logic Programming}\label{SEC:Background}
%\pippo{Per tutte le famiglie di funzioni, siano esse trasformazioni naturali o meno, io userei sempre il pedice per denotare i suoi elementi: ad esempio $p_n$, $\eta_{\At}$ ...}

In this section we recall the framework of coalgebraic logic programming, as introduced in \cite{KomMcCuskerPowerAMAST10,KomPowCALCO11,KomPowerCSL11,KatyaJournal}.

\subsection{The Ground Case}\label{ssec:groundcase}

We begin by considering the coalgebraic semantics of ground logic programs \cite{KomMcCuskerPowerAMAST10}. For the sequel we fix an alphabet $\mc{A}$, a set $\At$ of ground atoms and a ground logic program $\mb{P}$. The behavior of $\mb{P}$ is represented by a coalgebra $p \: \At \to \p_f \p_f (\At)$ on $\Set$, where $\p _f$ is the finite powerset functor and $p$ is defined as follows:
\begin{eqnarray*}\label{Eq:def_coalg_p}
% \nonumber to remove numbering (before each equation)
  p \: \ A\ \mapsto\ \{\{B_1,\dots,B_k\}\ \mid\ H \seq B_1,\dots,B_k \text{ is a clause of }\mb{P} \text{ and }A = H\}.
\end{eqnarray*}
The idea is that $p$ maps an atom $A \in \At$ to the set of bodies of clauses of $\mb{P}$ whose head $H$ unifies with $A$, i.e. (in the ground case) $A = H$. Therefore $p(A) \in \p_f \p_f (\At)$ can be seen as representing the $\wedge\vee$-tree of $A$ in $\mb{P}$ up to depth $2$, according to Definition \ref{DEF:and-or_par_tree_ground}: each element $\{B_1,\dots,B_k\}$ of $p(A)$ corresponds to a child of the root, whose children are labeled with $B_1,\dots,B_k$. The full tree is recovered as an element of $\mc{C}(\p _f \p _f)(\At)$, where $\mc{C}(\p _f \p _f)$ is the \emph{cofree comonad} on $\p_f \p_f$, standardly provided by the following construction \cite{ak:fixed-point-set-functor,Worrell99}.

\begin{construction}\label{Constr:cofree_ground} The terminal sequence for the functor $\At \times \p_f \p_f (\cdot) \: \Set \to \Set$ consists of sequences of objects $X_{\alpha}$ and arrows $\zeta_{\alpha}\: X_{\alpha+1} \to X_{\alpha}$, defined by induction on $\alpha$ as follows.
\begin{align*}
 % \nonumber to remove numbering (before each equation)
   X_{\alpha}\ \df \ \left\{
	\begin{array}{ll}
        \At &\ \ \alpha = 0 \\
		\At \times \p_f \p_f (X_{\beta}) &\ \ \alpha = \beta +1
	\end{array}
\right.\ &\
  \zeta_{\alpha}\ \df \ \left\{
	\begin{array}{ll}
        \pi_1 &\ \ \alpha = 0 \\
		\m{id}_{\At} \times \p_f \p_f (\zeta_{\beta}) &\ \ \alpha = \beta +1
	\end{array}
\right.
 \end{align*}
For $\alpha$ a limit ordinal, $X_{\alpha}$ is given as a limit of the sequence and a function
$\zeta_{\alpha}\: X_{\alpha} \to X_{\beta}$ is given for each $\beta \ls \alpha$ by the limiting property of $X_{\alpha}$.

By \cite{Worrell99} it follows that the sequence given above converges to a limit $X_{\gamma}$ such that $\zeta_{\gamma} \: X_{\gamma+1} \to X_{\gamma}$ is an isomorphism forming the final $\At \times \p_f\p_f(\cdot)$-coalgebra $(X_{\gamma},\zeta_{\gamma}^{-1})$. Since $X_{\gamma+1}$ is defined as $\At \times \p_f \p_f (X_{\gamma})$, there is a projection function $\pi_2 \: X_{\gamma +1} \to \p_f \p_f (X_{\gamma})$ which makes $\pi_2 \circ \zeta^{-1}_{\gamma} \:  X_{\gamma} \to \p_f \p_f (X_{\gamma})$ the \emph{cofree $\p_f \p_f$-coalgebra} on $\At$.
This induces the \emph{cofree comonad} $\mc{C}(\p_f \p_f)\: \Set \to \Set$ on $\p_f \p_f$ as a functor mapping $\At$ to $X_{\gamma}$.
\end{construction}

As the elements of the cofree $\p_f$-coalgebra on a set $X$ are standardly presented as finitely branching trees where nodes have elements of $X$ as labels \cite{Worrell99,Rutten00}, those of the cofree $\p_f\p_f$-coalgebra on $X$ can be seen as finitely branching trees with two sorts of nodes occurring at alternating depth, where only one sort has the $X$-labeling.
We now define a $\mc{C}(\p _f \p _f)$-coalgebra $\bb{\cdot}_p\colon \At \to \mc{C}(\p _f \p _f)(\At)$.

\begin{construction}\label{Constr:coalgComonad_ground}
Given a ground program $\mb{P}$, let $p \: \At \to \p_f \p_f (\At)$ be the coalgebra associated with $\mb{P}$. We define a cone $\{p_{\alpha}\: \At \to X_{\alpha}\}_{\alpha \ls \gamma}$ on the terminal sequence of Construction \ref{Constr:cofree_ground} as follows:
\begin{eqnarray*}
 % \nonumber to remove numbering (before each equation)
   p_{\alpha} &\df& \left\{
	\begin{array}{ll}
        \m{id}_{\At} &\ \ \alpha = 0 \\
		< \m{id}_{\At} , (\p_f \p_f  (p_{\beta}) \circ p) > &\ \ \alpha = \beta +1.
	\end{array}
\right.
\end{eqnarray*}
For $\alpha$ a limit ordinal, $p_{\alpha}\: \At \to X_{\alpha}$ is provided by the limiting property of $X_{\alpha}$. Then in particular $X_{\gamma} = \mc{C}(\p_f \p_f)(\At)$ yields a function $\bb{\cdot}_p\: \At \to \mc{C}(\p_f \p_f)(\At)$.
\end{construction}
Given an atom $A \in \At$, the tree $\bb{A}_p \in \mc{C}(\p _f \p _f)(\At)$ is built by iteratively applying the map $p$, first to $A$, then to each atom in $p(A)$, and so on. For each natural number $m$, $p_{m}$ maps $A$ to its $\wedge\vee$-tree up to depth $m$. As shown in \cite{KomMcCuskerPowerAMAST10}, the limit $\bb{\cdot}_p$ of all such approximations provides the full $\wedge\vee$-tree of $A$.
\begin{example}\label{ex:ground}
Consider the following ground logic program, based on an alphabet consisting of a signature $\{a^0,b^0,c^0\}$ and predicates $\predp (-,-)$, $\predq (-)$.
 \begin{align*}
% \nonumber to remove numbering (before each equation)
  \predp (b,c) &\ \seq\ \predq (a),\predq (b),\predq (c) & \predp (b,b) &\ \seq\ \predq (c) \\
  \predp (b,b) &\ \seq\ \predp (b,a),\predp (b,c) &   \predq (c) &\ \seq
\end{align*}
The corresponding coalgebra $p \: \At \to \p_f \p_f (\At)$ and the $\wedge\vee$-tree $\bb{\predp (b,b)}_p \in \mc{C}(\p_f \p_f)(\At)$ are depicted below on the left and on the right, respectively.
\begin{eqnarray*}
\begin{array}{rcl}
p(\predp (b,c)) & = & \{ \{ \predq (a),\predq (b),\predq (c) \} \} \\
p(\predp (b,b))    & = & \{ \{ \predp (b,a),\predp (b,c)\}\{\predq (c)\} \} \\
p(\predq (c))      & = & \{\{\}\}\\
p(A)      & = & \{ \} \text{ for } A\in \At \setminus  \{\predp (b,c), \predp (b,b), \predq (c) \}
\vspace{-2cm}\end{array}
\qquad \qquad
{\scriptsize
\xymatrix@C=.1cm@R=.1cm{
                 & \ar@{-}[ld] \predp (b,b) \ar@{-}[rd] & & &\\
  \bullet \ar@{-}[d]  &                              & \bullet \ar@{-}[ld] \ar@{-}[rd] & &\\
  \predq (c) \ar@{-}[d] &      \predp (b,a)          &                   & \predp (b,c) \ar@{-}[d] &\\
 \bullet           &                            &                   & \bullet \ar@{-}[ld] \ar@{-}[d] \ar@{-}[rd] & \\
             &                            & \predq (a)        & \predq (b)                   & \predq (c) \ar@{-}[d] \\
             &                            &                   &                              & \bullet}
}
\end{eqnarray*}
\end{example}
\subsection{The General Case} In the sequel we recall the extension of the coalgebraic semantics to arbitrary (i.e. possibly non-ground) logic programs presented in \cite{KomPowCALCO11,KomPowerCSL11}. A motivating observation for their approach is that, in presence of variables, $\wedge\vee$-trees are not guaranteed to represent sound derivations. The problem lies in the interplay between variable dependencies and unification, which makes SLD-derivations for logic programs inherently \emph{sequential} processes \cite{MitchellSeqUnification}.

\begin{example}\label{Ex:unsound_AndOrTree} Consider the signature $\Sigma = \{\tcons^2, \tsucc^1, \tzero^0, \tnil^0\}$ and the predicates $\pList (-)$, $\pNat (-)$. The program $\NatList$, encoding the definition of lists of natural numbers, will be our running example of a non-ground logic program.
 \begin{align*}
% \nonumber to remove numbering (before each equation)
  \pList(\tcons (x_1, x_2)) &\ \seq\ \pNat(x_1), \pList(x_2) & \pList(\tnil) &\ \seq  \\
  \pNat(\tsucc(x_1)) &\ \seq\ \pNat(x_1) &   \pNat(\tzero) &\ \seq
\end{align*}
Let $A$ be the atom $\pList ( \tcons (x_1, \tcons (x_2,x_1)))$. It is intuitively clear that there is no substitution of variables making $A$ represent a list of natural numbers: we should replace $x_1$ with a ``number'' (for instance $\tzero$) in its first occurrence and with a ``list'' (for instance $\tnil$) in its second occurrence. Consequently, there is no SLD-refutation for $\{A\}$ in $\NatList$. However, consider the $\wedge\vee$-tree of $A$ in $\NatList$, for which we provide a partial representation as follows.
{\scriptsize
\[\xymatrix@C=.1cm@R=.1cm{
                     &               & \pList ( \tcons (x_1, \tcons (x_2,x_1))) \ar@{-}[d] & & &\\
                     &               &        \bullet \ar@{-}[ld] \ar@{-}[rd]                           & & &\\
                     &\pNat(x_1)  \ar@{-}[ld] \ar@{-}[d]  &                                                   &   \pList ( \tcons (x_2,x_1)) \ar@{-}[d]  & &\\
 \bullet   \ar@{-}[d]           &\bullet    &                                              &        \bullet    \ar@{-}[ld] \ar@{-}[rd]   & \\
 \dots  &    &                  \pNat(x_2) \ar@{-}[rd] \ar@{-}[d]      &                          & \pList(x_1)   \ar@{-}[d] \ar@{-}[rd]  &   \\
          &             &                \dots    &          \bullet   &     \bullet  &  \dots
}
\]
}
The above tree seems to yield an SLD-refutation: $\pList ( \tcons (x_1, \tcons (x_2,x_1)))$ is refuted by proving $\pNat(x_1)$ and
$\pList ( \tcons (x_2,x_1))$.
However, the associated computed answer would be ill-defined, as it is given by substituting $x_2$ with $\tzero$ and $x_1$ both with $\tzero$ and with $\tnil$ (the computed answer of $\pNat(x_1)$ maps $x_1$ to $\tzero$ and the computed answer of $\pList ( \tcons (x_2,x_1))$ maps $x_1$ to $\tnil$).
\end{example}

To obviate to this problem, in \cite{KomPowCALCO11} \emph{coinductive trees} are introduced as a sound variant of $\wedge\vee$-trees, where unification is restricted to term-matching. This constraint is sufficient to guarantee that coinductive trees only represent sound derivations: the key intuition is that a term-matcher is a unifier that leaves untouched the current goal, meaning that the ``previous history'' of the derivation remains uncorrupted.

Before formally defining coinductive trees, it is worth recalling that, in \cite{KomPowCALCO11}, the collection of atoms (based on an alphabet $\mc{A}$) is modeled as a presheaf $\At \: \Lw \to \Set$. The index category is the (opposite) \emph{Lawvere Theory} $\Lw$ of $\Sigma$, as defined above. For each natural number $n \in \Lw$, $\At (n)$ is defined as the set of atoms with variables among $x_1,\dots,x_n$. Given an arrow $\theta \in \Lw[n,m]$, the function $\At(\theta)\: \At(n) \to \At(m)$ is defined by substitution, i.e. $\At(\theta)(A) \df A \theta$. By definition, whenever an atom $A$ belongs to $At(n)$, then it also belongs to $At(n')$, for all $n' \geq n$. However, the occurrences of the same atom in $At(n)$ and $At(n')$ (for $n\neq n'$) are considered distinct: the atoms $A\in At(n)$ and $A\in At(n')$ can be thought of as two states $x_1, \dots, x_n \vdash A$ and $x_1, \dots, x_{n'} \vdash A$ with two different interfaces $x_1, \dots, x_n$ and $x_1, \dots, x_{n'}$.
For this reason, when referring to an atom $A$, it is important to always specify the set $At(n)$ to which it belongs.
\begin{definition}\label{Def:coinductive_trees_Power} Given a logic program $\mb{P}$, a natural number $n$ and an atom $A\in At(n)$, the \emph{$n$-coinductive tree} for $A$ in $\mb{P}$ is the possibly infinite tree $\tree$ satisfying properties 1-4 of Definition \ref{DEF:and-or_par_tree_ground} and property 5 replaced by the following\footnote{Our notion of coinductive tree corresponds to the notion of coinductive forest of breadth $n$ as in \cite[Def.~4.4]{KomPowerCSL11}, the only difference being that we ``glue'' together all trees of the forest into a single tree. It agrees with the definition given in \cite[Def.~4.1]{KatyaJournal} apart for the fact that the parameter $n$ is fixed.}:
 \begin{enumerate}\setcounter{enumi}{4}
  \item For every $\wedge$-node $s$ in $\tree$, let $A' \in \At(n)$ be its label. For every clause $H\seq B_1,\dots,B_k$ of $\mb{P}$ and every term-matcher $<\m{id}_n,\tau>$ of $A'$ and $H$, with $B_1\tau,\dots, B_k\tau \in At(n)$, $s$ has exactly one child $t$, and viceversa. For each atom $B$ in $\{B_1,\dots,B_k\}\tau$, $t$ has exactly one child labeled with $B$, and viceversa.
 \end{enumerate}
\end{definition}
\noindent We recall from \cite{KomPowCALCO11} the categorical formalization of this class of trees. The first step is to generalize the definition of the coalgebra $p$ associated with a program $\mb{P}$. Definition \ref{Def:coinductive_trees_Power} suggests how $p$ should act on an atom $A\in At(n)$, for a fixed $n$:
\begin{align}\label{Eq:KomPow_termMatching}
              A    \mapsto \{ \{B_1,\dots,B_k\}\tau \mid &\ H \seq B_1,\dots,B_k \text{ is a clause of }\mb{P}\text{,} \nonumber \\
                                                           &\  A = H \tau \text{ and } B_1\tau,\dots,B_k\tau \in \At(n)\}.
 \end{align}
For each clause $H \seq B_1,\dots,B_k$, there might be infinitely (but countably) many substitutions $\tau$ such that $A = H \tau$ (see e.g. \cite{KomPowCALCO11}). Thus the object on the right-hand side of \eqref{Eq:KomPow_termMatching} will be associated with the functor $\p_c \p_f \: \Set \to \Set$, where $\p_c$ and $\p_f$ are respectively the countable powerset functor and the finite powerset functor. In order to formalize this as a coalgebra on $\At \: \Lw \to \Set$, let $\lift{\p_c} \: \prsh{\Lw} \to \prsh{\Lw}$ and $\lift{\p_f}\: \prsh{\Lw} \to \prsh{\Lw}$ be extensions of $\p_c$ and $\p_f$ respectively, given according to Definition \ref{def:liftpreshgen}. Then one would like to fix \eqref{Eq:KomPow_termMatching} as the definition of the $n$-component of a natural transformation $p\: \At \to \lift{\p_c}\lift{\p_f} (\At)$. The key problem with this
formulation is that $p$ would \emph{not} be a natural transformation, as shown by the
following example.

\begin{example}\label{Ex:non_compositional} Let $\NatList$ be the same program of Example \ref{Ex:unsound_AndOrTree}. Fix a substitution $\theta = <\tnil> \: 1 \to 0$ and, for each $n \in \Lw$, suppose that $p(n) \: \At(n) \to \lift{\p_c}\lift{\p_f} (\At) (n)$ is defined according to \eqref{Eq:KomPow_termMatching}. Then the following square does not commute.

{
\[\xymatrix{
& \At(1) \ar[d]_{\At(\theta)} \ar[r]^-{p(1)} & \lift{\p_c}\lift{\p_f}(\At)(1) \ar[d]^{\lift{\p_c}\lift{\p_f}(\At)(\theta)} \\
& \At(0) \ar[r]_-{p(0)} & \lift{\p_c}\lift{\p_f}(\At)(0)
}
\]
}
A counterexample is provided by the atom $\pList(x_1) \in \At(1)$. Passing through the bottom-left corner of the square, $\pList(x_1)$ is mapped first to $\pList(\tnil) \in \At(0)$ and then to $\{\emptyset\} \in \lift{\p_c}\lift{\p_f}(\At)(0)$ - intuitively, this yields a refutation of the goal $\{\pList(x_1)\}$ with substitution of $x_1$ with $\tnil$.  Passing through the top-right corner, $\pList(x_1)$ is mapped first to $\emptyset \in \lift{\p_c}\lift{\p_f}(\At)(1)$ and then to $\emptyset \in \lift{\p_c}\lift{\p_f}(\At)(0)$, i.e. the computation ends up in a failure.
\end{example}
In \cite[Sec.4]{KomPowCALCO11} the authors overcome this difficulty by relaxing the naturality requirement. The morphism $p$ is defined as a $\liftLax{\p_c}\liftLax{\p_f}$-coalgebra in the category $\m{Lax}(\Lw,\Poset)$ of locally ordered functors $\F \: \Lw \to \Poset$ and \emph{lax} natural transformations, with each component $p(n)$ given according to \eqref{Eq:KomPow_termMatching} and $\liftLax{\p_c}\liftLax{\p_f}$ the extension of $\lift{\p_c}\lift{\p_f}$ to an endofunctor on $\m{Lax}(\Lw,\Poset)$.

The lax approach fixes the problem, but presents also some drawbacks. Unlike the categories $\Set$ and $\prsh{\Lw}$, $\m{Lax}(\Lw,\Poset)$ is neither complete nor cocomplete, meaning that a cofree comonad on $\liftLax{\p_c}\liftLax{\p_f}$ cannot be retrieved through the standard Constructions \ref{Constr:cofree_ground} and \ref{Constr:coalgComonad_ground} that were used in the ground case. Moreover, the category of $\liftLax{\p_c}\liftLax{\p_f}$-coalgebrae becomes problematic, because coalgebra maps are subject to a commutativity property stricter than the one of lax natural transformations. These two issues force the formalization of non-ground logic program to use quite different (and more sophisticated) categorical tools than the ones employed for the ground case. Finally, as stressed in the Introduction, the laxness of $p$ makes the resulting semantics not compositional.

\section{Saturated Semantics}\label{SEC:SemLogProg}

Motivated by the observations of the previous section, we propose a \emph{saturated approach} to the semantics of logic programs. For this purpose, we consider an adjunction between presheaf categories as depicted on the left.
$$\xymatrix@R=10pt{\\ \prsh{\Lw} \ar@(ur,ul)[rr]^{\U} &\bot & \ar@(dl,dr)[ll]^{\K} \prsh{|\Lw|}} \qquad \qquad \qquad \qquad \xymatrix@R=10pt{|\Lw| \ar@{^{(}->}[r]^{\iota} \ar[dd]_{\F}&  \Lw \ar[ldd]^{\K(\F)}\\ \\
 \set}$$
The left adjoint $\U$ is the forgetful functor, given by precomposition with the inclusion functor
$\incl\: |\Lw| \hookrightarrow \Lw$.
As shown in~\cite[Th.X.1]{mclane}, $\U$ has a right adjoint $\K\: \prsh{|\Lw|}\to \prsh{\Lw}$ sending $\F \: |\Lw| \to \Set$ to its \emph{right Kan extension} along~$\incl$. This is a presheaf $\K(\F) \: \Lw \to \Set$ mapping an object $n$ of $\Lw$ to
\begin{eqnarray*}
  \K(\F)(n) &\df& \prod_{\theta \in \Lw[n,m]} \F (m)
\end{eqnarray*}
where $m$ is any object of $\Lw$.
Intuitively, $\K(\F)(n)$ is a set of tuples indexed by arrows with source $n$ and such that,
at index $\theta\colon n \to m$, there are elements of $\F(m)$. We use $\tuple{x}$ $\tuple{y}, \dots$ to denote such tuples
and we write $\element{\theta}{\tuple{x}}$ to denote the element at index $\theta$ of the tuple $\tuple{x}$.
Alternatively, when it is important to show how the elements depend from the indexes, we use $<x>_{\theta:n\to m}$ (or simply $<x>_{\theta}$) to denote the tuple having at index $\theta$ the element $x$.
With this notation, we can express the behavior of $\K(\F) \: \Lw \to \Set$ on an arrow $\theta \colon n \to m$ as
\begin{equation}\label{eq:deinitionKtheta}
% \nonumber to remove numbering (before each equation)
  \K(\F)(\theta) \colon \tuple{x} \mapsto <\element{\sigma\circ\theta}{\tuple{x}}>_{\sigma:m\to m'}.
\end{equation}
The tuple $<\element{\sigma\circ\theta}{\tuple{x}}>_\sigma \in \K(\F)(m)$ can be intuitively read as follows: for each $\sigma \in \Lw[m,m']$, the element indexed by $\sigma$ is the one indexed by $\sigma\circ\theta \in \Lw[n,m']$ in the input tuple $\tuple{x}$.

All this concerns the behavior of $\K$ on the objects of $\prsh{|\Lw|}$.
For an arrow $f\: \F \to \G$ in $\prsh{|\Lw|}$, the natural transformation $\K(f)$ is defined as an indexwise application of $f$ on tuples from $\K(\F)$. For all $n\in \Lw$, $\tuple{x}\in \K(\F)(n)$,
\begin{equation*}\label{eq:deinitionKsuFrecce}
  \K(f)(n) \colon \tuple{x} \mapsto <f(m)(\element{\theta}{\tuple{x}})>_{\theta: n \to m}.
\end{equation*}
For any presheaf $\F \: \Lw \to \Set$, the unit $\eta$ of the adjunction is instantiated to a morphism $\eta_{\F}\: \F \to \K\U(\F)$ given as follows: for all $n\in \Lw$, $X\in \F(n)$,
\begin{equation}\label{eq:deinitionUnit}
  \eta_{\F} (n)\: X  \mapsto < \F(\theta)(X)>_{\theta:n\to m}.
\end{equation}
When taking $\F$ to be $\At$, $\eta_{\At}\: \At \to \K\U(\At)$ maps an atom to its \emph{saturation}: for each $A\in At(n)$,
the tuple $\eta_{\At}(n)(A)$ consists of all substitution instances $\At(\theta)(A) = A \theta$ of $A$, each indexed by the corresponding $\theta \in \Lw[n,m]$.

\smallskip

As shown in Example \ref{Ex:non_compositional}, given a program $\mb{P}$, the family of functions $p$ defined by \eqref{Eq:KomPow_termMatching} fails to be a morphism in $\prsh{\Lw}$. However, it forms a morphism in $\prsh{|\Lw|}$
$$p \colon \U At \to \PP (\U At)$$
where $\liftt{\p _c}$ and $\liftt{\p _f}$ denote the extensions of $\p_c$ and $\p_f$ to $\prsh{|\Lw|}$, given as in Definition \ref{def:liftpreshgen}. The
naturality requirement is trivially satisfied in $\prsh{|\Lw|}$, since $|\Lw|$ is discrete.
The adjunction induces a morphism $p^{\sharp} \colon At \to \K \PP \U (At)$ in $\prsh{\Lw}$, defined as
\begin{eqnarray}
% \nonumber to remove numbering (before each equation)
  \At \ \xrightarrow{\eta _{\At}} \ \K\U (\At) \ \xrightarrow{\K(p)} \ \K \PP \U (\At).
\end{eqnarray}
In the sequel, we write $\FS$ for $\K \PP \U$. The idea is to let $\FS$ play the same role as $\p_f\p_f$ in the ground case, with the coalgebra $p^{\sharp} \: At \to \FS (At)$ encoding the program $\mb{P}$. An atom $A \in \At(n)$ is mapped to $<p(m)(A\sigma)>_{\sigma : n \to m}$, that is:
\begin{align}\label{Eq:Sat_p}
% \nonumber to remove numbering (before each equation)
  p^{\sharp}(n) \: A  & \mapsto <\{ \{B_1,\dots,B_k\}\tau\ |\ H \seq B_1,\dots,B_k \text{ is a clause of }\mb{P}\text{, } \nonumber \\
                   & \hspace{2cm} A \sigma = H \tau \text{ and } B_1\tau, \dots, B_k\tau \in \At(m)\}>_{\sigma:n\to m}.
\end{align}
Intuitively, $p^{\sharp}(n)$ retrieves all unifiers $<\sigma,\tau>$ of $A$ and heads of $\mb{P}$: first, $A \sigma \in \At(m)$ arises as a component of the saturation of $A$, according to $\eta_{\At}(n)$; then, the substitution $\tau$ is given by term-matching on $A \sigma$, according to $K(p)(m)$.

By naturality of $p^{\sharp}$, we achieve the property of ``commuting with substitutions'' that was precluded by the term-matching approach, as shown by the following rephrasing of Example~\ref{Ex:non_compositional}.
\begin{example}\label{Ex:sat_is_compositional} Consider the same square of Example \ref{Ex:non_compositional}, with $p^{\sharp}$ in place of $p$ and $\FS$ in place of $\lift{\p_c}\lift{\p_f}$. The atom $\pList(x_1) \in \At(1)$ together with the substitution $\theta = <\tnil> \: 1 \to 0$ does not constitute a counterexample to commutativity anymore. Indeed $p^{\sharp}(1)$ maps $\pList(x_1)$ to the tuple $<p(n)(\pList(x_1) \sigma)>_{\sigma \: 1 \to n}$, which is then mapped by $\FS(\At)(\theta)$ to $<p(n)(\pList(x_1)\sigma'\circ \theta)>_{\sigma' \: 0 \to n}$ according to \eqref{eq:deinitionKtheta}. Observe that the latter is just the tuple $<p(n)(\pList(\tnil)\sigma')>_{\sigma' \: 0 \to n}$ obtained by applying first $\At(\theta)$ and then $p^{\sharp}(0)$ to $\pList(x_1)$.
\end{example}
Another benefit of saturated semantics is that $p^{\sharp} \: \At \to \FS(\At)$ lives in a (co)com\-plete category which behaves (pointwise) as $\Set$. This allows us to follow the same steps as in the ground case, constructing a coalgebra for the cofree comonad $\mc{C}(\FS)$ as a straightforward generalization of Constructions \ref{Constr:cofree_ground} and \ref{Constr:coalgComonad_ground}. For this purpose, we first need to verify the following technical lemma.

\begin{proposition}\label{Prop:terminalForKPPU_converges} The functor $\FS$ is accessible and the terminal sequence for $\At \times \FS (\cdot)$ converges to a final $\At \times \FS (\cdot)$-coalgebra. \end{proposition}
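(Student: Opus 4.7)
The plan is to show that $\FS$ is $\omega_1$-accessible and then invoke the standard theorem that an accessible endofunctor on a locally presentable category has a terminal sequence converging to a final coalgebra (see \cite{ak:fixed-point-set-functor}). Since $\prsh{\Lw}$ is locally finitely presentable and the functor $\At \times (-)$ preserves all colimits (as $\Set$ is cartesian closed and colimits in $\prsh{\Lw}$ are pointwise), the accessibility of $\FS$ transfers to $\At \times \FS(-)$, yielding the desired final coalgebra as the limit of the terminal sequence.

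To prove that $\FS = \K \liftt{\p_c} \liftt{\p_f} \U$ is $\omega_1$-accessible I would proceed componentwise. The functors $\p_f$ and $\p_c$ on $\Set$ are $\omega$- and $\omega_1$-accessible respectively, and the extension construction of Definition~\ref{def:liftpreshgen} is defined pointwise on presheaves; since colimits in $\prsh{|\Lw|}$ are themselves computed pointwise, the extensions $\liftt{\p_f}$ and $\liftt{\p_c}$ inherit the same degree of accessibility. The forgetful functor $\U$ is a left adjoint and hence preserves all colimits. The composition of these three components is thus $\omega_1$-accessible.

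The main obstacle is $\K$: being a right adjoint, it generally preserves limits rather than colimits. Here I would exploit the explicit pointwise formula $\K(\F)(n) = \prod_{\theta \colon n \to m} \F(m)$, which expresses $\K$ at each stage as a product indexed by the hom-sets of $\Lw$ out of $n$. Under the standing assumption that $\Sigma$ and $\Var$ are countable, every such hom-set is countable, so the product has at most countably many factors. A countable product in $\Set$ preserves $\omega_1$-filtered colimits --- any countable family of elements of such a colimit can be simultaneously lifted to a common stage of the diagram --- and since colimits in $\prsh{\Lw}$ are again pointwise, $\K$ preserves $\omega_1$-filtered colimits.

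Combining these observations, $\FS$ preserves $\omega_1$-filtered colimits, and so does $\At \times \FS(-)$. Applying the convergence theorem for terminal sequences of accessible endofunctors on locally presentable categories, the sequence stabilizes at some regular ordinal $\gamma$, and the isomorphism at stage $\gamma$ endows the limit with the structure of a final $\At \times \FS(-)$-coalgebra.
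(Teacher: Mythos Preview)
Your argument is correct, but it takes a noticeably different route from the paper's.

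The paper invokes \cite[Th.~7]{Worrell99}, which requires $\FS$ to be both accessible \emph{and mono-preserving}; it therefore spends half its proof checking that each of $\K$, $\U$, $\liftt{\p_c}$, $\liftt{\p_f}$ preserves monos (using in particular that $\U$ is also a \emph{right} adjoint, to the left Kan extension along $\iota$). You bypass mono-preservation entirely by appealing to the general convergence theorem for accessible endofunctors on locally presentable categories, which is a legitimate alternative. For the accessibility of $\K$, the paper uses the abstract fact that adjoint functors between locally presentable categories are accessible \cite[Prop.~2.23]{adamek/rosicky:1994}, whereas you argue concretely from the pointwise product formula and the countability of the hom-sets of $\Lw$. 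Your approach has the advantage of producing an explicit accessibility rank ($\omega_1$), but it relies on $\Sigma$ being countable---a reasonable assumption in logic programming, though not one the paper states explicitly. The paper's abstract argument avoids this hypothesis. Both proofs are short; yours is perhaps more informative about \emph{why} $\K$ is accessible, while the paper's is more robust to the size of the signature.
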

\begin{proof} By \cite[Th.7]{Worrell99}, in order to show convergence of the terminal sequence it suffices to prove that $\FS$ is an accessible mono-preserving functor. Since these properties are preserved by composition, we show them separately for each component of $\FS$:
\begin{itemize}
  \item Being adjoint functors between accessible categories, $\K$ and $\U$ are accessible themselves \cite[Prop.2.23]{adamek/rosicky:1994}. Moreover, they are both right adjoints: in particular, $\U$ is right adjoint to the left Kan extension functor along $\iota \: |\catC| \hookrightarrow \catC$. It follows that both preserve limits, whence they preserve monos.
  \item Concerning functors $\liftt{\p_c}\: \prsh{|\Lw|} \to\prsh{|\Lw|}$ and $\liftt{\p_f}\: \prsh{|\Lw|} \to\prsh{|\Lw|}$, it is well-known that $\p_c \: \Set \to \Set$ and $\p_f \: \Set \to \Set$ are both mono-preserving accessible functors on $\Set$. It follows that $\liftt{\p_c}$ and $\liftt{\p_f}$ also have these properties, because (co)limits in presheaf categories are computed objectwise and monos are exactly the objectwise injective morphisms (as shown for instance in \cite[Ch.6]{MaclaneMoerdijk_SheavesGeometryLogic}).
\end{itemize}
\end{proof}

\begin{construction}\label{Constr:cofree_sat} The terminal sequence for the functor $\At \times \FS (\cdot) \: \prsh{\Lw} \to \prsh{\Lw}$ consists of a sequence of objects $X_{\alpha}$ and arrows $\delta_{\alpha}\: X_{\alpha+1} \to X_{\alpha}$, which are defined just as in Construction \ref{Constr:cofree_ground}, with $\FS$ replacing $\p_f \p_f$.
By Proposition \ref{Prop:terminalForKPPU_converges}, this sequence converges to a limit $X_{\gamma}$ such that $X_{\gamma} \cong X_{\gamma+1}$ and $X_{\gamma}$ is the carrier of the cofree $\FS$-coalgebra on $\At$.
\end{construction}

 Since $\FS$ is accessible (Proposition \ref{Prop:terminalForKPPU_converges}), the cofree comonad $\mc{C}(\FS)$ exists and maps $\At$ to $X_{\gamma}$ given as in Construction \ref{Constr:cofree_sat}. Below we provide a $\mc{C}(\FS)$-coalgebra structure $\bb{\cdot}_{p^{\sharp}} \: \At \to \mc{C}(\FS)(\At)$ to $\At$.

\begin{construction}\label{Constr:coalgComonad_sat} The terminal sequence for $\At \times \FS(\cdot)$ induces a cone $\{p_{\alpha}^{\sharp}\: \At \to X_{\alpha}\}_{\alpha \ls \gamma}$ as in Construction \ref{Constr:coalgComonad_ground} with $p^{\sharp}$ and $\FS$ replacing $p$ and $\p_f \p_f$.
This yields a natural transformation $\bb{\cdot}_{p^{\sharp}}\: \At \to X_{\gamma}$, where $X_{\gamma} = \mc{C}(\FS)(\At)$.
\end{construction}
As in the ground case, the coalgebra $\bb{\cdot}_{p^{\sharp}}$ is constructed as an iterative application of $p^{\sharp}$: we call \emph{saturated $\wedge\vee$-tree} the associated tree structure.
\begin{definition}\label{def:saturatedtree}
Given a logic program $\mb{P}$, a natural number $n$ and an atom $A\in At(n)$, the \emph{saturated $\wedge\vee$-tree} for $A$ in $\mb{P}$ is the possibly infinite tree $\tree$ satisfying properties 1-3 of Definition \ref{DEF:and-or_par_tree_ground} and properties 4 and 5 replaced by the following:
 \begin{enumerate}\setcounter{enumi}{3}
  \item Each $\vee$-node is labeled with a substitution $\sigma$ and its children are $\wedge$-nodes.
  \item For every $\wedge$-node $s$ in $\tree$, let $A' \in \At(n')$ be its label. For every clause $H\seq B_1,\dots,B_k$ of $\mb{P}$ and every unifier $<\sigma,\tau>$ of $A'$ and $H$, with $\sigma \: n' \to m'$ and $B_1\tau,\dots, B_k\tau \in At(m')$, $s$ has exactly one child $t$ labeled with $\sigma$, and viceversa. For each atom $B$ in $\{B_1,\dots,B_k\}\tau$, $t$ has exactly one child labeled with $B$, and viceversa.
  % \item For every $\wedge$-node $s$ in $\tree$, let $A'\in At(n')$ be its label. Children of $s$ are $\vee$-nodes labeled with $\sigma \colon n'\to m'$ in 1-1 correspondence with clauses $C\ =\ H\seq B_1,\dots,B_k$ of $\mb{P}$ such that $<\sigma,\tau>$ is a unifier of $A'$ and $H$ for some substitution $\tau$. For each such child $t$, $t$ has exactly $k$ $\wedge$-nodes $t_1,\dots,t_k$ as children, where $t_j$ is labeled with $B_j \tau \in At(m')$ for $1 \leq j \leq k$.
 \end{enumerate}
\end{definition}

\noindent We have now seen three kinds of tree, exhibiting different substitution mechanisms. In saturated $\wedge\vee$-trees one considers all the unifiers, whereas in $\wedge\vee$-trees and coinductive trees one restricts to most general unifiers and term-matchers respectively. Moreover, in a coinductive tree each $\wedge$-node is labeled with an atom in $\At(n)$ for a fixed $n$, while in a saturated $\wedge\vee$-tree $n$ can dynamically change.

\begin{example}\label{ex:saturatedlist}
Part of the infinite saturated $\wedge\vee$-tree of $\pList ( x_{ 1}) \in \At(1)$ in $\NatList$ is depicted below. Note that not all labels of $\wedge$-nodes belong to $At(1)$, as it would be the case for a coinductive tree: such information is inherited from the label of the parent $\vee$-node, which is now a substitution. For instance, both $\pNat(x_1)$ and $\pList(x_2)$ belong to $At(2)$, since their parent is labeled with $<\tcons(x_{1},x_2)> \: 1 \to 2$ (using the convention that the target of a substitution is the largest index appearing among its variables).
\begin{center}
{\scriptsize
 \[
% \begin{tiny}
% \def\objectstyle{\scriptstyle}\def\labelstyle{\scriptstyle}
\xymatrix@C=.1cm@R=.2cm{
 & && \pList ( x_{ 1}) \ar@{-}@(l,u)[llld] \ar@{-}@(dl,u)[ld] \ar@{-}@(dr,u)[drr] \ar@{-}@(r,u)[drrr] \\
 \nodoor{<\tnil>} &     &        \nodoor{<\tcons(x_{1},x_2)>} \ar@{-}[ld] \ar@{-}[d]    &         & & \nodoor{<\tcons(x_1,\tcons(x_1,x_2))>}  \ar@{-}[d] \ar@{-}[rd] & \dots \\
&   \pNat(x_1) \ar@{-}[ld] \ar@{-}[d]    & \pList(x_2)   \ar@{-}[rd] \ar@{-}[d]        &   &   & \pNat(x_1) \ar@{-}[ld] \ar@{-}[d]    & \pList(\tcons(x_1,x_2))   \ar@{-}[d] &\\
 \nodoor{<\tzero,x_2>}             &  \dots   &    \nodoor{<x_1,\tnil>}   & \dots    & \dots  &       \nodoor{<\tzero,x_2>}      &    \dots  \\
}
%\end{tiny}
\]
}
\end{center}\smallskip
\end{example}

\noindent Next we verify that the notion of saturated $\wedge\vee$-tree is indeed the one given by saturated semantics. In the following proposition and in the rest of the paper, with an abuse of notation we use $\bb{A}_{p^{\sharp}}$ to denote the application of $\bb{\cdot}_{p^{\sharp}}(n)$ to $A\in At(n)$ without mentioning the object $n\in \Lw$.

\begin{proposition}[Adequacy] \label{prop:adequacy_sat} For all $n$ and $A \in \At(n)$, the saturated $\wedge\vee$-tree of $A$ in a program $\mb{P}$ is $\bb{A}_{p^{\sharp}}$.
\end{proposition}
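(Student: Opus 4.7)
The plan is to prove the proposition by induction along the terminal sequence of Construction~\ref{Constr:cofree_sat}, showing that at each ordinal stage $\alpha$ the approximation $p^{\sharp}_{\alpha}(n)(A)$ agrees with the saturated $\wedge\vee$-tree of $A$ truncated at a corresponding depth, and then concluding by passing to the limit $\gamma$. Concretely, I would fix the following correspondence between the elements of $X_{\alpha}$ and trees: an element of $X_{\beta+1}=\At\times\FS(X_{\beta})$ is read as a tree whose root is an $\wedge$-node labeled by the first projection and whose children are built, for each $\sigma\colon n\to m$ indexing the $\K$-tuple, from the countable family in $\liftt{\p_c}\liftt{\p_f}\U(X_{\beta})(m)$ by creating one $\vee$-node labeled $\sigma$ for each finite set in that family, with one $\wedge$-node child (coming from $X_{\beta}$) per element of the set.

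Under this reading, the base case $p^{\sharp}_{0}(n)(A)=A$ is clear. For the successor step, assume inductively that $p^{\sharp}_{\beta}$ produces the saturated $\wedge\vee$-tree truncated at depth~$\beta$. Since
\[ p^{\sharp}_{\beta+1}(n)(A)\ =\ \bigl(A,\ \FS(p^{\sharp}_{\beta})\bigl(p^{\sharp}(n)(A)\bigr)\bigr), \]
the second component is obtained by postcomposing each atom appearing in $p^{\sharp}(n)(A)$ with $p^{\sharp}_{\beta}$. Using formula~\eqref{Eq:Sat_p}, for every $\sigma\colon n\to m$ the $\sigma$-entry of $p^{\sharp}(n)(A)$ is exactly the collection of sets $\{B_{1}\tau,\dots,B_{k}\tau\}$ arising from unifiers $\langle\sigma,\tau\rangle$ of $A$ with heads of clauses of $\mb{P}$; by the inductive hypothesis each $B_{i}\tau$ is then expanded into its saturated $\wedge\vee$-subtree truncated at depth~$\beta$. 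Translating this through the correspondence above yields exactly the saturated $\wedge\vee$-tree of $A$ truncated at depth $\beta+1$, as prescribed by clauses~4 and~5 of Definition~\ref{def:saturatedtree}.

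For the limit step, by Proposition~\ref{Prop:terminalForKPPU_converges} the terminal sequence converges, hence $X_{\gamma}$ can be identified with the limit of its approximations and $\bb{A}_{p^{\sharp}}$ is uniquely determined by the cone $\{p^{\sharp}_{\alpha}(n)(A)\}_{\alpha<\gamma}$. Since the saturated $\wedge\vee$-tree of $A$ is also uniquely determined by its depth-$\alpha$ truncations, and these truncations coincide by the inductive argument, we conclude $\bb{A}_{p^{\sharp}}$ equals the saturated $\wedge\vee$-tree of $A$.

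The principal obstacle I anticipate is making the bijective correspondence between elements of $X_{\beta+1}$ and trees-up-to-depth-$(\beta+1)$ entirely precise: one has to chase through the definition of $\K$ in~\eqref{eq:deinitionKtheta}, the pointwise action of $\liftt{\p_c}\liftt{\p_f}$, and the unit $\eta_{\At}$ from~\eqref{eq:deinitionUnit} in order to check that the $\K$-indexing by $\sigma$ matches the $\vee$-node labelings and that the ``same $\sigma$, different $\tau$'' multiplicity of unifiers is faithfully recorded by the outer $\liftt{\p_c}$ layer. Once this bookkeeping is in place, the inductive step amounts to a direct reading of~\eqref{Eq:Sat_p}, and the limit step is standard.
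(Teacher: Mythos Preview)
Your proof is correct, but it takes a different route from the paper's. You argue by transfinite induction along the terminal sequence of Construction~\ref{Constr:cofree_sat}, identifying each $p^{\sharp}_{\alpha}(n)(A)$ with the depth-$\alpha$ truncation of the saturated $\wedge\vee$-tree and then passing to the limit. The paper instead invokes finality directly: it observes that the assignment $A \mapsto T_A$ (sending $A$ to its saturated $\wedge\vee$-tree) is an $\At \times \FS(\cdot)$-coalgebra morphism from $(\At,\langle\id,p^{\sharp}\rangle)$ into the final coalgebra $\cof{\FS}(\At)$, and since $\bb{\cdot}_{p^{\sharp}}$ is also such a morphism, the two must coincide by uniqueness.

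The finality argument is shorter and avoids the transfinite bookkeeping you flag as the ``principal obstacle'' --- in particular it sidesteps having to say what a depth-$\alpha$ truncation means at limit ordinals beyond $\omega$ (recall that the sequence for $\FS$ need not stabilise at $\omega$). On the other hand, your approach makes the approximation structure completely explicit and verifies directly that the cone $\{p^{\sharp}_{\alpha}\}$ computes exactly the finite unfoldings of the tree, which is informative in its own right. Both arguments ultimately rest on the same correspondence between elements of $X_{\beta+1}$ and trees of bounded depth; the paper packages this as ``$A\mapsto T_A$ is a coalgebra morphism'', whereas you unfold it stage by stage.
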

\begin{proof} Fix $n \in \Lw$. Just as Construction \ref{Constr:cofree_ground} allows to describe the cofree $\p_f\p_f$-coalgebra $\cof{\p_f\p_f}(\At)$ on $\At$ as the space of trees with two sorts of nodes occurring at alternating depth and one sort labelled by $\At$, observing Construction \ref{Constr:cofree_sat} we can provide a similar description for the elements of $\cof{\FS}(\At)(n)$. Those are also trees with two sorts of nodes occurring at alternating depth. One sort (the one of $\wedge$-nodes), is labeled with elements of $\At(m)$ for some $m \in \Lw$. The root itself is an $\wedge$-node labeled with some atom in $\At(n)$. The $\vee$-nodes children of an $\wedge$-node $B \in \At(m)$ are not given by a plain set, as in the ground case, but instead by a tuple $\tuple{x} \in \FS(\At)(m)$. We can represent $\tuple{x}$ by drawing a child $\vee$-node $t$ labeled with the substitution $\theta \: m \to m'$ for each element $S_t$ of the set $\tuple{x}(\theta) \in \PP \At(m)$. The $\wedge$-node children of $t$ are labeled with the elements of $S_t \in \liftt{\p _f} \At(m')$.
 %Since $\cof{\FS}(\At)(n)$ is a limiting object $X_{\gamma}(n)$ of the terminal sequence described in Construction \ref{Constr:cofree_sat}, its elements can be thought as (possibly infinite) trees, whose depths may be calculated via the cone $\{\delta_{\alpha}(n) \: X_{\gamma}(n) \to X_{\alpha}(n)\}_{\alpha \ls \gamma}$. For such a tree $T \in \cof{\FS}(\At)(n)$, the root is an element $\delta_0(n) = A$ of $X_0 = \At(n)$. The children of $A$ are given by an element $\delta_1(n) = <A, \tuple{x}> \in X_1(n) = \At(n) \times \FS(\At)(n)$: we represent the tuple $\tuple{x}$ by drawing a children $t$ (an $\vee$-node) labeled with the substitution $\theta \: n \to m$ for each element of the set $\tuple{x}(\theta) \in \p_c\p_f \At(m)$. Such a $t$ then corresponds to a set $S_t \in \p_f \At(m)$ of atoms that we draw as children ($\wedge$-nodes) of $t$. The value $\delta_2(n)$ will tell us which are the $\vee$-nodes children of each of those $\wedge$-nodes, and so on.

%The isomorphism $\delta_{\gamma}^{-1}(n) \: \cof{\FS}(\At)(n) \xrightarrow{\cong} \At \times \FS(\cof{\FS}(\At))(n)$ can be similarly described: it assigns to a tree $\tree$ in $\cof{\FS}(\At)(n)$ a pair $<A,\tuple{y}>$, where $A_T \in \At(n)$ is the root of $\tree$ and, for each $\theta$, $\tuple{y}(\theta)$ is a set of sets of the form $\{T_1^B,\dots,T_k^B\}$, where $B$ is an ($\wedge$-node) children of an ($\vee$-node) children $\theta$ of $A_T$ and $\{T_1^B,\dots,T_k^B\}$ are the subtrees of $\tree$ rooted at $B$. \fznote{Vale la pena di descrivere l'iso o tolgo?}

The saturated $\wedge\vee$-tree for an atom $A \in \At(n)$, as in Definition \ref{def:saturatedtree}, is a tree of the above kind and thus an element of $\cof{\FS}(\At)(n)$. The function $A \mapsto T_A$ mapping $A$ in its saturated $\wedge\vee$-tree $T_A$ extends to an $\At \times \FS(\cdot)$-coalgebra morphism from $\At$ --- with coalgebraic structure given by $<\id,p^{\sharp}>$ --- to $\cof{\FS}(\At)$. By construction
%\fznote{This comes by universal property of the colimit. Need to be more specific?}
$\bb{\cdot}_{p^{\sharp}}$ is also an $\At \times \FS(\cdot)$-coalgebra morphism. Since $\cof{\FS}(\At)$ is the final $\At \times \FS(\cdot)$-coalgebra, these two morphisms must coincide, meaning that $\bb{\cdot}_{p^{\sharp}}$ maps $A$ into its saturated $\wedge\vee$-tree $T_A$.
\end{proof}

For an arrow $\theta \in \Lw[n,m]$, we write $\overline{\theta}$ for $\mc{C}(\FS)(\At)(\theta) \colon \mc{C}(\FS)(\At)(n) \to \mc{C}(\FS)(\At)(m)$.
With this notation, we can state the compositionality result motivating our approach. Its proof is an immediate consequence of the naturality of $\bb{\cdot}_{p^{\sharp}}$.

\begin{theorem}[Compositionality]\label{thm:compo} For all atoms $A \in \At(n)$ and substitutions $\theta \in \Lw[n,m]$,
\begin{equation*}
    \bb{A\theta}_{p^{\sharp}} = \bb{A}_{p^{\sharp}}\overline{\theta}.
\end{equation*}
%where $\overline{\theta}$ is a shorthand for $\mc{C}(\FS)(\At)(\theta) \colon \mc{C}(\FS)(\At)(n) \to \mc{C}(\FS)(\At)(m)$.
\end{theorem}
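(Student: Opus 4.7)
The plan is to read off the equation directly from the naturality square of $\bb{\cdot}_{p^{\sharp}}$ as a morphism in $\prsh{\Lw}$. Indeed, Construction~\ref{Constr:coalgComonad_sat} produces $\bb{\cdot}_{p^{\sharp}}\: \At \to \mc{C}(\FS)(\At)$ as a natural transformation, so the first step is just to instantiate naturality at the substitution $\theta \in \Lw[n,m]$, obtaining the commutative square
\[
\xymatrix{
\At(n) \ar[r]^-{\bb{\cdot}_{p^{\sharp}}(n)} \ar[d]_{\At(\theta)} & \mc{C}(\FS)(\At)(n) \ar[d]^{\mc{C}(\FS)(\At)(\theta)} \\
\At(m) \ar[r]_-{\bb{\cdot}_{p^{\sharp}}(m)} & \mc{C}(\FS)(\At)(m).
}
\]
The left-then-bottom path sends $A$ to $\bb{A\theta}_{p^{\sharp}}$ (by definition of the action of $\At$ on arrows and our convention of suppressing the component index of $\bb{\cdot}_{p^{\sharp}}$), while the top-then-right path sends $A$ to $\bb{A}_{p^{\sharp}}\,\overline{\theta}$ by the definition of $\overline{\theta}$. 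Equating the two paths yields the statement.

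The only thing left to justify, then, is the naturality of $\bb{\cdot}_{p^{\sharp}}$, and this is where I expect the mild verification to lie, though it is already essentially present in Constructions~\ref{Constr:cofree_sat} and \ref{Constr:coalgComonad_sat}. The coalgebra $p^{\sharp}\: \At \to \FS(\At)$ is natural by construction, being the composite $\K(p) \after \eta_{\At}$ of two morphisms in $\prsh{\Lw}$ (the unit of $\U \dashv \K$ and the image under $\K$ of the morphism $p$ of $\prsh{|\Lw|}$). Each stage $p_{\alpha}^{\sharp}\: \At \to X_{\alpha}$ of the cone defining $\bb{\cdot}_{p^{\sharp}}$ is then a natural transformation: at successor ordinals this follows from naturality of $p^{\sharp}$, $p_{\beta}^{\sharp}$, $\id_{\At}$ and of the pairing, using that $\FS$ is a functor on $\prsh{\Lw}$; at limit ordinals it follows because limits in $\prsh{\Lw}$ are computed pointwise, so a cone of natural transformations yields a natural mediating morphism. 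Applying this one last time at $\alpha = \gamma$, the mediating morphism $\bb{\cdot}_{p^{\sharp}}\: \At \to X_{\gamma} = \mc{C}(\FS)(\At)$ is itself natural, which is all that the argument requires.
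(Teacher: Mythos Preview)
Your proof is correct and follows exactly the paper's approach: the paper simply states that the result is ``an immediate consequence of the naturality of $\bb{\cdot}_{p^{\sharp}}$,'' and you have spelled out precisely that naturality square, together with a (superfluous but correct) justification of why $\bb{\cdot}_{p^{\sharp}}$, built in $\prsh{\Lw}$ via Construction~\ref{Constr:coalgComonad_sat}, is indeed a natural transformation.
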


We conclude this section with a concrete description of the behavior of the operator $\overline{\theta}$, for a given substitution $\theta \in \Lw[n,m]$. Let $r$ be the root of a tree $T \in \mc{C}(\FS)(\At)(n)$ and $r'$ the root of $T\overline{\theta}$. Then
\begin{enumerate}
 \item the node $r$ has label $A$ iff $r'$ has label $A\theta$;
 \item the node $r$ has a child $t$ with label $\sigma\circ \theta$ and children $t_1, \dots, t_n$ iff $r'$ has a child $t'$ with label $\sigma$ and children $t_1 \dots t_n$.
\end{enumerate}
Note that the children $t_1,\dots,t_n$ are exactly the same in both trees: $\overline{\theta}$ only modifies the root and the $\vee$-nodes at depth 1 of $\tree$, while it leaves untouched all the others. This peculiar behavior can be better understood by
observing that the definition of $\K(\F)(\theta)$, as in \eqref{eq:deinitionKtheta}, is independent of the presheaf $\F$. As a result, $\overline{\theta}=X_{\gamma}(\theta)$ is independent of all the $X_{\alpha}$s built in Construction \ref{Constr:cofree_sat}.
%
%
%
%Observe from \eqref{eq:deinitionKtheta} that the definition of $\K(\F)(\theta)$ is independent from the presheaf $\F$. As a result, $\overline{\theta}$ transforms only the root and its immediate children and leaves untouched all the others nodes.
%
%
%the limit $\delta_{\gamma}^{-1}\colon X_{\gamma} \to \At \times \FS(X_{\gamma})$
%
%
\begin{example}\label{ex:thetatree} Recall from Example \ref{ex:saturatedlist} the saturated $\wedge\vee$-tree $\bb{\pList(x_1)}_{p^{\sharp}}$. For $\theta = <\tcons(x_{1},x_2)>$,
the tree $\bb{\pList(x_1)}_{p^{\sharp}}\overline{\theta}$ is depicted below. %The label of the root is obtained by applying $\theta$ to the label of the root of $\bb{\pList(x_1)}_{p^{\sharp}}$. The children of the root are determined as follows, according to point 2 of the explanation above. For the $\vee$-node of $\bb{\pList(x_1)}_{p^{\sharp}}$ labeled with $<\tcons(x_{1},x_2)>$, there is a corresponding node labeled with $id_2$, because $id_2 \circ \theta = <\tcons(x_{1},x_2)>$. For the one labeled with $<\tcons(x_1,(\tcons(x_1,x_2)))>$ there is a corresponding node labeled with $<x_1,\tcons(x_1,x_2)>$, because $<x_1,\tcons(x_1,x_2)> \circ \theta = <\tcons(x_1,(\tcons(x_1,x_2)))>$. Their children are exactly the same in the two trees. Finally, the $\vee$-node of $\bb{\pList(x_1)}_{p^{\sharp}}$ labeled with $<\tnil>$ is removed in $\bb{\pList(x_1)}_{p^{\sharp}}\overline{\theta}$, because there is no substitution $\sigma$ such that $\sigma \circ \theta = <\tnil>$.
\[
{\scriptsize
% \begin{tiny}
% \def\objectstyle{\scriptstyle}\def\labelstyle{\scriptstyle}
\xymatrix@C=.1cm@R=.2cm{
 & && \pList ( \tcons(x_{1},x_2) )  \ar@{-}@(dl,u)[ld] \ar@{-}@(dr,u)[drr] \ar@{-}@(r,u)[drrr] & & &\\
  &     &        \nodoor{id_2} \ar@{-}[ld] \ar@{-}[d]    &         & & \nodoor{<x_1,\tcons(x_1,x_2)>}  \ar@{-}[d] \ar@{-}[rd] & \dots \\
&   \pNat(x_1) \ar@{-}[ld] \ar@{-}[d]    & \pList(x_2)   \ar@{-}[rd] \ar@{-}[d]        &   &   & \pNat(x_1) \ar@{-}[ld] \ar@{-}[d]    & \pList(\tcons(x_1,x_2))   \ar@{-}[d] &\\
 \nodoor{<\tzero,x_2>}             &  \dots   &    \nodoor{<x_1,\tnil>}   & \dots   & \dots    &       \nodoor{<\tzero,x_2>}              &    \dots  \\
&&&&&&&&&&&&&&&&&&&&&&&&&&&&&&&&&&&&&7}
%\end{tiny}
}
\]
\end{example}

\section{Desaturation}\label{SEC:Desaturation}
One of the main features of coinductive trees is to represent (sound) and-or parallel derivations of goals. This leads the authors of \cite{KomPowerCSL11} to a resolution algorithm exploiting the two forms of parallelism. Motivated by these developments, we include coinductive trees in our framework, showing how they can be obtained as a ``desaturation'' of saturated $\wedge\vee$-trees.

For this purpose, the key ingredient is given by the \emph{counit} $\epsilon$ of the adjunction $\U \dashv \K$. Given a presheaf $\F \: |\Lw| \to \Set$, the morphism $\epsilon_{\F} \: \U\K(\F) \to \F$ is defined as follows: for all $n\in \Lw$ and $\tuple{x}\in \U\K(\F)(n)$,
\begin{align}\label{EQ:counit}
% \nonumber to remove numbering (before each equation)
  \epsilon_{\F} (n)\: %\prod_{\theta \in \Lw[n,m]} \F (m)  & \to \F(n) \nonumber \\
 \dot{x}  \mapsto \dot{x}(\m{id}_{n})
\end{align}
where $\dot{x}(\m{id}_{n})$ is the element of the input tuple $\dot{x}$ which is indexed by the identity substitution $\m{id}_{n} \in \Lw[n,n]$. In the logic programming perspective, the intuition is that, while the unit of the adjunction provides the saturation of an atom, the counit reverses the process. It takes the saturation of an atom and gives back the substitution instance given by the identity, that is, the atom itself.

We now want to define a ``desaturation'' map $\desat$ from saturated $\wedge\vee$-trees to coinductive trees, acting as a pointwise application of $\epsilon_{\U\At}$. For this purpose, first we state the construction of the cofree comonad on $\PP$ for later reference.

\begin{construction}\label{Constr:cofree_PP} The terminal sequence for $\U\At \times \PP (\cdot) \: \prsh{|\Lw|} \to \prsh{|\Lw|}$ consists of sequences of objects $Y_{\alpha}$ and arrows $\xi_{\alpha}\: Y_{\alpha+1} \to Y_{\alpha}$, defined by induction on $\alpha$ as follows.
\begin{align*}
 % \nonumber to remove numbering (before each equation)
   Y_{\alpha}\ \df\ \left\{
	\begin{array}{ll}
        \U\At &\ \ \alpha = 0 \\
		\U\At \times \PP (Y_{\beta}) &\ \ \alpha = \beta +1
        % \\  \m{Lim}_{\beta \ls \alpha}\ X_{\beta} & \alpha \mbox{ is a limit ordinal}
	\end{array}
\right.\ &&\
  \xi_{\alpha}\ \df\ \left\{
	\begin{array}{ll}
        \pi_1 &\ \ \alpha = 0 \\
		\m{id}_{\U\At} \times \PP (\xi_{\beta}) &\ \ \alpha = \beta +1
	\end{array}
\right.
\end{align*}
For $\alpha$ a limit ordinal, $Y_{\alpha}$ and $\xi_{\alpha}$ are defined as expected. As stated in the proof of  Proposition \ref{Prop:terminalForKPPU_converges}, $\PP$ is a mono-preserving accessible functors. Then by \cite[Th.7]{Worrell99} we know that the sequence given above converges to a limit $Y_{\chi}$ such that $Y_{\chi} \cong Y_{\chi+1}$ and $Y_{\chi}$ is the value of $\mc{C}(\PP)\: \prsh{|\Lw|} \to \prsh{|\Lw|}$ on $\U\At$, where $\mc{C}(\PP)$ is the cofree comonad on $\PP$ induced by the terminal sequence given above, analogously to Construction \ref{Constr:cofree_ground}.
\end{construction}

The next construction defines the desidered morphism $\desat \: \U \big( \mc{C}(\FS)(\At)\big) \to \mc{C}(\PP)(\U\At)$.
\begin{construction} \label{Constr:Desat} Consider the image of the terminal sequence converging to $\mc{C}(\FS)(\At) = X_{\gamma}$ (Construction \ref{Constr:cofree_sat}) under the forgetful functor
$\U\: \prsh{\Lw} \to \prsh{|\Lw|}$. We define a sequence of natural transformations $\{d_{\alpha}\: \U(X_{\alpha}) \to Y_{\alpha} \}_{\alpha\ls\gamma}$ as follows\footnote{Concerning the successor case, observe that
$\m{id}_{\U \At} \times \big(\PP (d_{\beta}) \circ \epsilon_{\PP \U (X_{\beta})}\big)$
is in fact an arrow from $\U\At \times \U\K\PP\U (X_{\beta})$ to $Y_{\beta+1}$. However, the former is isomorphic to $\U (X_{\beta+1}) = \U \big(\At \times \K\PP\U (X_{\beta})\big)$, because $\U$ is a right adjoint (as observed in Proposition \ref{Prop:terminalForKPPU_converges}) and thence it commutes with products.}:
 \begin{eqnarray*}   \label{eq:desat_cone_sigma}
 % \nonumber to remove numbering (before each equation)
   d_{\alpha} &\df& \left\{
	\begin{array}{ll}
        \m{id}_{\U \At} & \alpha = 0 \\
		\m{id}_{\U \At} \times \big(\PP (d_{\beta}) \circ \epsilon_{\PP \U (X_{\beta})}\big) & \alpha = \beta +1.
	\end{array}
\right.
 \end{eqnarray*}

\[\xymatrix{
\U (X_{\beta}) \ar@/^/[rrrr]^{d_{\beta}} & & & & Y_{\beta} \\
\U (X_{\beta +1}) \ar@/^/[rrrr]^{d_{\beta +1}} \ar[u]^{\U(\delta_{\beta})} \ar[rrd]|{\m{id}_{\U\At} \times \epsilon_{\PP \U ( X_{\beta})}} & & & & Y_{\beta + 1} \ar[u]_{\xi_{\beta}} \\
                                   & & \U\At \times  \PP \U (X_{\beta}) \ar[rru]|{\m{id}_{\U\At} \times \PP (d_{\beta})} & &
}
\]

For $\alpha  \ls \gamma$ a limit ordinal, an arrow $d_{\alpha}\: \U(X_{\alpha}) \to Y_{\alpha}$ is provided by the limiting property of $Y_{\alpha}$. In order to show that the limit case is well defined, observe that, for every $\beta \ls \alpha$, the above square commutes, that is, $\xi_{\beta} \circ d_{\beta+1} = d_{\beta} \circ \U(\delta_{\beta})$. This can be easily checked by ordinal induction, using the fact that $\epsilon_{\PP \U (X_{\beta})}$ is a natural transformation for each $\beta \ls \alpha$.

We now turn to defining a natural transformation $\desat \: \U \big( \mc{C}(\FS)(\At)\big) \to \mc{C}(\PP)(\U\At)$.
If $\chi \leq \gamma$, then this is provided by $d_{\chi} \: \U(X_{\chi}) \to Y_{\chi}$ together with the limiting property of $\U(X_{\gamma})$ on $\U(X_{\chi})$. In case $\gamma \ls \chi$, observe that, since $X_{\gamma}$ is isomorphic to $X_{\gamma +1}$, then $X_{\gamma}$ is isomorphic to $X_{\zeta}$ for all $\zeta \gr \gamma$, and in particular $X_{\gamma} \cong X_{\chi}$. Then we can suitably extend the sequence to have a natural transformation $d_{\chi} \: \U(X_{\chi}) \to Y_{\chi}$. The morphism $\desat$ is given as the composition of $d_{\chi}$ with the isomorphism between $\U(X_{\gamma})$ and $\U(X_{\chi})$.
\end{construction}

The next theorem states that $\desat$ is a translation from saturated to coinductive trees: given an atom $A \in \At(n)$, it maps $\bb{A}_{p^{\sharp}}$ to the $n$-coinductive tree of $A$. The key intuition is that $n$-coinductive trees can be seen as saturated $\wedge\vee$-trees where the labeling of $\vee$-nodes has been restricted to the identity
substitution $\m{id}_n$, represented as $\bullet$ (see Definition \ref{Def:coinductive_trees_Power}). The operation of pruning all $\vee$-nodes (and their descendants) in $\bb{A}_{p^{\sharp}}$ which are not labeled with $\m{id}_n$ is precisely what is provided by Construction \ref{Constr:Desat}, in virtue of the definition of the counit $\epsilon$ given in~\eqref{EQ:counit}.
\begin{theorem}[Desaturation]\label{Th:Desaturation}
Let $\bb{\cdot}_{p^{\sharp}} \: \At \to \mc{C}(\FS)(\At)$ be defined for a logic program $\mb{P}$ according to Construction \ref{Constr:coalgComonad_sat} and $\desat  \: \U\big(\mc{C}(\FS)(\At)\big) \to \mc{C}(\PP)(\U\At)$ be defined according to Construction \ref{Constr:Desat}. Then for all $n\in|\Lw|$ and $A\in \U\At(n)$, the $n$-coinductive tree of $A$ in $\mb{P}$ is
$\big(\desat \circ \U (\bb{\cdot}_{p^{\sharp}})\big)(n)(A)$.
\end{theorem}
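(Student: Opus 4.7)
The plan is to reduce the theorem to comparing $\desat \circ \U(\bb{\cdot}_{p^{\sharp}})$ with a natural transformation $\bb{\cdot}_p \colon \U\At \to \mc{C}(\PP)(\U\At)$ built directly in $\prsh{|\Lw|}$ out of the coalgebra $p \colon \U\At \to \PP(\U\At)$, using the same terminal-sequence recipe as Construction~\ref{Constr:coalgComonad_sat} but now on the sequence of Construction~\ref{Constr:cofree_PP}. Explicitly, I would define a cone $\{p_\alpha \colon \U\At \to Y_\alpha\}_{\alpha \ls \chi}$ by $p_0 \df \m{id}_{\U\At}$, $p_{\beta+1} \df <\m{id}_{\U\At}, \PP(p_\beta) \circ p>$, and via limiting property at limit ordinals; then $\bb{\cdot}_p = p_\chi$. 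By a straightforward adaptation of the adequacy argument of Proposition~\ref{prop:adequacy_sat} to the ambient category $\prsh{|\Lw|}$, it follows that $\bb{\cdot}_p(n)(A)$ is precisely the $n$-coinductive tree of $A$ in $\mb{P}$: indeed the elements of $\mc{C}(\PP)(\U\At)(n)$ are two-sorted trees whose $\vee$-nodes carry no label (equivalently, the default label $\bullet$), and clause~5 of Definition~\ref{Def:coinductive_trees_Power} is enforced by the very definition of $p$ in~\eqref{Eq:KomPow_termMatching}. So the theorem reduces to the equation $\desat \circ \U(\bb{\cdot}_{p^{\sharp}}) = \bb{\cdot}_p$.

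The core of the proof is then the ordinal induction $d_\alpha \circ \U(p^{\sharp}_\alpha) = p_\alpha$ on the two parallel terminal sequences, for all $\alpha \ls \gamma$ (with $\alpha$ capped by $\chi$ using the stabilisation isomorphism as in the end of Construction~\ref{Constr:Desat}). The base case is immediate from $d_0 = \m{id}_{\U\At}$ and $p^{\sharp}_0 = \m{id}_\At$. For the successor step I would unfold the definitions of $d_{\beta+1}$ and $p^{\sharp}_{\beta+1}$, pull the counit through using naturality of $\epsilon$, and apply the inductive hypothesis; the missing ingredient is the identity
\[
 \epsilon_{\PP \U \At} \circ \U(p^{\sharp}) \;=\; p,
\]
which is exactly the content of the adjunction $\U \dashv \K$: since $p^{\sharp} = \K(p) \circ \eta_\At$, we get $\U(p^{\sharp}) = \U\K(p) \circ \U\eta_\At$, and then naturality of $\epsilon$ plus the triangle identity $\epsilon_{\U\At} \circ \U\eta_\At = \id$ give the claim. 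The limit step follows by the universal property of $Y_\alpha$. Taking $\alpha$ to the stabilisation ordinal and precomposing with $\bb{\cdot}_{p^{\sharp}}$ then yields $\desat \circ \U(\bb{\cdot}_{p^{\sharp}}) = \bb{\cdot}_p$.

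I expect the main obstacle to be purely bookkeeping in the successor step: chasing the isomorphism $\U(\At \times \K\PP\U X_\beta) \cong \U\At \times \U\K\PP\U X_\beta$ (which exists because $\U$, being a right adjoint, preserves products), matching the projections used in the definitions of $d_{\beta+1}$ and $p^{\sharp}_{\beta+1}$, and threading the counit $\epsilon_{\PP\U X_\beta}$ through the diagram without confusing which component of the product it acts on. Once this diagram is drawn carefully, the equality is a short computation combining the inductive hypothesis, naturality of $\epsilon$ in its presheaf argument, and functoriality of $\PP$.
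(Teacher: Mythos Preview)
Your proposal is correct and, in fact, more detailed than what the paper offers: the paper does not provide a formal proof of this theorem, only the intuitive explanation in the paragraph preceding its statement (that Construction~\ref{Constr:Desat} prunes all $\vee$-nodes not labeled $\m{id}_n$, by virtue of the definition~\eqref{EQ:counit} of the counit). Your ordinal induction $d_\alpha \circ \U(p^{\sharp}_\alpha) = p_\alpha$, hinging on the triangle identity to obtain $\epsilon_{\PP\U\At} \circ \U(p^{\sharp}) = p$, is exactly the rigorous unpacking of that intuition, and your bookkeeping remarks about the product isomorphism and naturality of $\epsilon$ identify precisely the points where care is needed.
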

%Not all the elements of $\mc{C}(\FS)(\At)$ and $\mc{C}(\PP)(\At)$ are,respectively, saturated derivation trees and coinductive trees. However,
Theorem \ref{Th:Desaturation} provides an alternative formalization for the coinductive tree semantics \cite{KomPowCALCO11}, given by composition of the saturated semantics with desaturation. In fact it represents a different approach to the non-compositionality problem: instead of relaxing naturality to lax naturality, we simply forget about all the arrows of the index category $\Lw$, shifting the framework from $\prsh{\Lw}$ to $\prsh{|\Lw|}$. The substitutions on trees (that are essential, for instance, for the resolution algorithm given in \cite{KomPowerCSL11}) exist at the saturated level, i.e.\ in $\mc{C}(\FS)(\At)$, and they are given precisely as the operator $\overline{\theta}$ described at the end of Section \ref{SEC:SemLogProg}.
%
%\begin{example} %In Example \ref{ex:saturatedlist} we have shown the saturated tree $\bb{\pList(x_1)}_{p^{\sharp}}$ and in Example
%\ref{ex:thetatree} the saturated tree $\bb{\pList(\tcons(x_{1},x_2))}_{p^{\sharp}}$.
%The coinductive tree of $\pList ( \tcons(x_{1},x_2) )$ can be computed by desaturating $\bb{\pList(\tcons(x_{1},x_2))}_{p^{\sharp}}$. It is depicted below.

\noindent\begin{minipage}[]{.69\linewidth}
\noindent \begin{example}
The coinductive tree for $\pList ( \tcons(x_{1},x_2) )$ in $\NatList$ is depicted on the right. It is constructed by desaturating the tree $\bb{\pList(\tcons(x_{1},x_2))}_{p^{\sharp}}$ in Example
\ref{ex:thetatree}, i.e., by pruning all the $\vee$-nodes (and their descendants) that are not labeled with $id_2$.
\end{example}
\end{minipage}%
\quad
%\vspace{-0.8cm}
\begin{minipage}{.3\linewidth}%
%\begin{center}
%\vspace{-0.4cm}
{\scriptsize
 \[
% \begin{tiny}
% \def\objectstyle{\scriptstyle}\def\labelstyle{\scriptstyle}
\xymatrix@C=.1cm@R=.2cm{
   \pList ( \tcons(x_{1},x_2) )  \ar@{-}[d] & & &\\
               \nodoor{id_2} \ar@{-}[rd] \ar@{-}[d]    &   \\
   \pNat(x_1)     & \pList(x_2)          \\
}
%\end{tiny}
\]
}
%\end{center}
\end{minipage}

\begin{comment}
\noindent
\vspace{-0.3cm}\begin{minipage}{.3\linewidth}%
%\begin{center}
\vspace{-0.3cm}
{\scriptsize
 \[
% \begin{tiny}
% \def\objectstyle{\scriptstyle}\def\labelstyle{\scriptstyle}
\xymatrix@C=.1cm@R=.2cm{
  & \pList ( \tcons(x_{1},x_2) )  \ar@{-}[d] & & &\\
       &        \nodoor{id_2} \ar@{-}[ld] \ar@{-}[d]    &   \\
 %
 %
   \pNat(x_1)     & \pList(x_2)          \\
 %
 %
%
%
}
%\end{tiny}
\]
}
\end{minipage}
\quad
\begin{minipage}[]{.7\linewidth}
\noindent
The coinductive tree for $\pList ( \tcons(x_{1},x_2) )$ in $\NatList$ is depicted below. It is constructed by desaturating the tree $\bb{\pList(\tcons(x_{1},x_2))}_{p^{\sharp}}$ in Example
\ref{ex:thetatree}, i.e., by pruning all the $\vee$-nodes (and their children) that are not labeled by $id_2$.
\end{minipage}%
\end{comment}

%
%
\begin{comment}
\noindent\begin{minipage}[]{.7\linewidth}
\noindent
The coinductive tree for $\pList ( \tcons(x_{1},x_2) )$ in $\NatList$ is depicted below. It is constructed by desaturating the tree $\bb{\pList(\tcons(x_{1},x_2))}_{p^{\sharp}}$ in Example
\ref{ex:thetatree}, i.e., by pruning all the $\vee$-nodes (and their children) that are not labeled by $id_2$.
\end{minipage}%
\quad
\vspace{-0.8cm}\begin{minipage}{.3\linewidth}%
%\begin{center}
\vspace{-0.8cm}
{\scriptsize
 \[
% \begin{tiny}
% \def\objectstyle{\scriptstyle}\def\labelstyle{\scriptstyle}
\xymatrix@C=.1cm@R=.2cm{
  & \pList ( \tcons(x_{1},x_2) )  \ar@{-}[d] & & &\\
       &        \nodoor{id_2} \ar@{-}[ld] \ar@{-}[d]    &   \\
 %
 %
   \pNat(x_1)     & \pList(x_2)          \\
 %
 %
%
%
}
%\end{tiny}
\]
}
%\end{center}
\end{minipage}
\end{comment}
%
%\end{example} 

\section{Soundness and Completeness}\label{SEC:Completeness}

%SLD-resolution is the central algorithm of logic programming \cite{Kowalski79,Lloyd93}. Given as input a goal $G$ (for our purposes, a set of atoms) and a program $\mb{P}$, one is interested in checking whether $G$ is \emph{refutable} in $\mb{P}$. The computation roughly goes as follows. At each step $i$, an atom is selected in the current goal $G_i$ (with $G_0 := G$) and one checks whether it is unifiable with some clause $H \seq B_1,\dots,B_K$ of the program. If so, $B_1,\dots,B_K$ become part of the next goal $G_{i+1}$ and the most general unifier $\theta_i$ of $H$ and the selected atom is applied to each atom in $G_{i+1}$. The computation is successful if it terminates in a finite number (say $n$) of steps with $G_{n} = \emptyset$: in this case one says that $G$ has an \emph{SLD-refutation} in $\mb{P}$ with \emph{computed answer} $\theta$, where $\theta$ is a substitution given by composing the most general unifiers $\theta_0,\dots,\theta_n$ associated with each step of the computation.

The notion of coinductive tree leads to a semantics that is sound and complete with respect to SLD-resolution \cite[Th.4.8]{KomPowerCSL11}. To this aim, a key role is played by \emph{derivation subtrees} of coinductive trees: they are defined exactly as derivation subtrees of $\wedge\vee$-trees (Definition \ref{Def:subtree_Power}) and represent SLD-derivations where the computation only advances by term-matching.

Similarly, we now want to define a notion of subtree for saturated semantics. This requires care: saturated $\wedge\vee$-trees are associated with unification, which is more liberal than term-matching. In particular, like $\wedge\vee$-trees, they may represent unsound derivation strategies (\emph{cf.} Example \ref{Ex:unsound_AndOrTree}). %Thus a straight reformulation of Definition \ref{Def:subtree_Power} would recognize as derivation subtree also the `unsound' derivation tree of Example \ref{Ex:unsound_AndOrTree}.
However, in saturated $\wedge\vee$-trees \emph{all} unifiers, not just the most general ones, are taken into account. This gives enough flexibility to shape a sound notion of subtree, based on an implicit synchronization of the substitutions used in different branches.

%\pippo{Di nuovo, non si capisce cos'e' un subtree}
\begin{definition}\label{Def:subtree_sat} Let $\tree$ be the saturated $\wedge\vee$-tree for an atom $A$ in a program $\mb{P}$. A subtree $\tree'$ of $\tree$ is called a \emph{synched derivation subtree} if it satisfies properties 1-3 of Definition \ref{Def:subtree_Power} and the following condition:
\begin{enumerate}\setcounter{enumi}{3}
  % FV22 \item all $\vee$-nodes of $\tree'$ which are at the same depth are labeled with the same substitution.
  \item all $\vee$-nodes of $\tree'$ at the same depth are labeled with the same substitution.
\end{enumerate}
A \emph{synched refutation subtree} is a finite synched derivation subtree with only $\vee$-nodes as leaves.
Its \emph{answer} is the substitution $\theta_{2k+1} \circ \dots \theta_{3} \circ \theta_1$, where $\theta_i$ is the (unique) substitution labeling the $\vee$-nodes of depth $i$ and $2k+1$ is its maximal depth.
%If $\tree'$ is successful, say of maximum depth $2k +1$, we call \emph{answer} the substitution $\theta_{2k+1} \circ \theta_{2k -1} \circ \dots \circ \theta_1$, where $\theta_i$ is the (unique) substitution labeling the $\vee$-nodes of depth $i$.
\end{definition}

%\begin{definition}\label{Def:subtree_sat} Let $\mb{P}$ be a logic program and $A$ an atom based on the same alphabet. Let $\tree$ be the saturated derivation tree for $A$ in $\mb{P}$. A subtree $\tree'$ of $\tree$ is called an \emph{synch derivation subtree} if it satisfies the following conditions:
%\begin{enumerate}
%  \item the root of $\tree'$ is the root of $\tree$;
%  \item if an and-node belongs to $\tree'$, then one of its children belongs to $\tree'$;
%  \item if an $\vee$-node belongs to $\tree'$, then all its children belong to $\tree'$;
%  \item all $\vee$-nodes of $\tree'$ which are at the same depth are labeled with the same substitution.
%\end{enumerate}
%A derivation subtree is called \emph{successful} if it is finite, with leaves all given by $\vee$-nodes. If $\tree'$ is successful, say of maximum depth $2k +1$, we call answer the substitution $\theta_{2k+1} \circ \theta_{2k -1} \circ \dots \circ \theta_1$, where $\theta_i$ is the (unique) substitution labeling the $\vee$-nodes of depth $i$.
%\end{definition}

%NON DOBBIAMO DARE l``IDEA CHE SIA UNA RESTRIZIONE, MA PIUTTOSTO UNA GENERALIZZAZIONE
%The prefix `and-' in synch derivation subtrees emphasizes the restriction to and-parallelism which is encoded in Definition \ref{Def:subtree_sat}. Intuitively, it is no more \emph{independent} and-parallelism, because we force the matching of all subgoals at the same depth to be computed with \emph{the same} substitution. For instance, this rules out the unsound derivation of Example \ref{Ex:unsound_AndOrTree}.
%
%

The prefix ``synched'' emphasizes the restriction to and-parallelism encoded in Definition \ref{Def:subtree_sat}. Intuitively, we force all subgoals at the same depth to proceed with \emph{the same} substitution. For instance, this rules out the unsound derivation of Example \ref{Ex:unsound_AndOrTree}.
% FV (reported in Appendix~\ref{APP:SLD+AndORTrees}, Example~\ref{Ex:unsound_AndOrTree}).
%
For a comparison with Definition \ref{Def:subtree_Power}, note that derivation subtrees can be seen as special instances of synched derivation subtrees where all the substitutions are forced to be identities.

We are now in position to compare the different semantics for logic programming.
\begin{comment}The next theorem explores the converse direction, namely how synched derivation subtrees induce derivation subtrees (direction $1 \Rightarrow 2$). Since the latter provide a semantics which is sound and complete with respect to SLD-resolution (direction $2 \Leftrightarrow 3$), the same result can be stated for the semantics arising from synched derivation subtrees.
\end{comment}

\begin{theorem}[Soundness and Completeness]\label{Th:Completeness}  Let $\mb{P}$ be a logic program and $A \in \At(n)$ an atom. The following are equivalent.
\begin{enumerate}
  \item The saturated $\wedge\vee$-tree for $A$ in $\mb{P}$ has a synched refutation subtree with answer~ $\theta$. \label{pt:completenessSat}
  \item For some $m \in \Lw$, the $m$-coinductive tree for $A \theta$ in $\mb{P}$ has a refutation subtree.
  \item There is an SLD-refutation for $\{A\}$ in $\mb{P}$ with correct answer $\theta$.
  %\item There is an SLD-refutation for $\{A\}$ in $\mb{P}$ with computed answer $\tau$ such that there exists a substitution $\sigma$ with $\sigma \circ \tau = \theta$.
\end{enumerate}
\end{theorem}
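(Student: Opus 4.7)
The plan is to prove the three-way equivalence by decomposing it into (1) $\Leftrightarrow$ (2) and (2) $\Leftrightarrow$ (3). The latter follows by combining the soundness and completeness result for coinductive trees with respect to SLD-resolution, namely \cite[Th.~4.8]{KomPowerCSL11}, with the classical lifting lemma of logic programming, which converts an SLD-refutation of an instance $A\theta$ into an SLD-refutation of $A$ whose computed answer makes $\theta$ a correct answer (and vice versa). The novel content is the equivalence (1) $\Leftrightarrow$ (2), which I prove by a double induction on the depth of the refutation subtree, leveraging the compositionality result (Theorem~\ref{thm:compo}) and the relationship between $\vee$-node labels in saturated $\wedge\vee$-trees and term-matchers in coinductive trees made precise by the desaturation translation (Theorem~\ref{Th:Desaturation}).

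For (1) $\Rightarrow$ (2), I induct on the depth $2k+1$ of a synched refutation subtree $T'$ of $\bb{A}_{p^{\sharp}}$. In the base case $k=0$, the unique $\vee$-leaf at depth $1$ is labeled $\theta_1 = \theta$ and arises from a unifier $<\theta,\tau>$ of $A$ against an empty-bodied clause $H \seq$. Hence $A\theta = H\tau$ provides a term-matcher $<\mathit{id},\tau>$ witnessing a depth-$1$ refutation subtree of the coinductive tree for $A\theta$. In the inductive step, the depth-$1$ $\vee$-node labeled $\theta_1$ arises from a unifier $<\theta_1,\tau_1>$ of $A$ with a clause head $H \seq B_1,\dots,B_n$. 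Setting $\theta' = \theta_{2k+1} \circ \cdots \circ \theta_3$, the equality $A\theta = H\tau_1\theta'$ exhibits a term-matcher $<\mathit{id},\tau_1\theta'>$ in the coinductive tree for $A\theta$. The synchronization condition guarantees that the subtree of $T'$ rooted at each depth-$2$ $\wedge$-node $B_j\tau_1$ is itself a synched refutation subtree of $\bb{B_j\tau_1}_{p^{\sharp}}$ with answer $\theta'$, and the inductive hypothesis converts each to a refutation subtree of the coinductive tree for $B_j\tau_1\theta'$. Gluing these beneath the chosen matcher produces the required refutation subtree for $A\theta$.

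For (2) $\Rightarrow$ (1), I reverse the construction, exploiting that the saturated tree records \emph{all} unifiers, not only most general ones. Given a refutation subtree $T''$ of the $m$-coinductive tree for $A\theta$, let $H_0$ be the clause head associated with the root of $T''$ via a term-matcher $<\mathit{id},\tau_0>$, so that $A\theta = H_0\tau_0$. This identity exhibits $<\theta,\tau_0>$ as a unifier of $A$ and $H_0$ in $\bb{A}_{p^{\sharp}}$, so I set $\theta_1 := \theta$ as the label of the depth-$1$ $\vee$-node of the synched refutation subtree $S'$ under construction. At every deeper odd depth $2i+1$ with $i > 0$, I set $\theta_{2i+1} := \mathit{id}_m$; this is admissible because each term-matcher occurring in $T''$ is a unifier with identity first component on $m$ variables. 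The tree $S'$ thus produced satisfies the synchronization condition by construction, its leaves are $\vee$-nodes exactly where $T''$'s leaves are, and its answer computes to $\mathit{id}_m \circ \cdots \circ \mathit{id}_m \circ \theta = \theta$.

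The main obstacle will be the careful bookkeeping of indices in the Lawvere theory $\Lw$ throughout the inductive steps: one must verify that the sources and targets of the substitutions $\theta_{2i+1}$ and $\tau_{2i+1}$ align with the ambient presheaf indexing required by Definitions~\ref{Def:coinductive_trees_Power} and~\ref{def:saturatedtree}, and that the standardizing-variables-apart convention of Section~\ref{ssec:termsAtomsSubstitutions} is preserved at each step so that the synchronization condition does not collide with clause-level variable renamings. A secondary technical point is aligning, in the (2) $\Leftrightarrow$ (3) part, the specific notion of \emph{correct answer} $\theta$ required by the statement with the computed answer produced by applying \cite[Th.~4.8]{KomPowerCSL11} to $A\theta$; this is precisely the content of the lifting lemma.
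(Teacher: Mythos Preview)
Your direction $(2)\Rightarrow(1)$ is essentially correct, and your treatment of $(2)\Leftrightarrow(3)$ via \cite[Th.~4.8]{KomPowerCSL11} together with the lifting lemma is fine. However, the inductive argument for $(1)\Rightarrow(2)$ contains a gap.

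In the inductive step you assert that ``the subtree of $T'$ rooted at each depth-$2$ $\wedge$-node $B_j\tau_1$ is itself a synched refutation subtree of $\bb{B_j\tau_1}_{p^{\sharp}}$ with answer $\theta'$''. This is not true in general: different branches of $T'$ may terminate at different depths, so the subtree rooted at $B_j\tau_1$ has as its answer only a \emph{prefix} $\psi_j = \theta_{d_j}\circ\cdots\circ\theta_3$ of $\theta'$, where $d_j$ is the maximal odd depth reached below $B_j\tau_1$. The inductive hypothesis then yields a refutation in the coinductive tree for $B_j\tau_1\psi_j \in \At(m_j)$, with $m_j$ depending on $j$, whereas gluing under the term-matcher $\langle\id_m,\tau_1\theta'\rangle$ requires refutations for $B_j\tau_1\theta' \in \At(m)$. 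The repair is an easy auxiliary lemma --- if the $m'$-coinductive tree for $B'$ has a refutation subtree and $\sigma\colon m'\to m$, then so does the $m$-coinductive tree for $B'\sigma$ --- proved by observing that a term-matcher $\langle\id_{m'},\tau\rangle$ for $B'$ yields the term-matcher $\langle\id_m,\tau\sigma\rangle$ for $B'\sigma$. But this lemma must be stated and invoked.

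The paper takes a different route that sidesteps the issue. Rather than recursing on subtrees, it first invokes Proposition~\ref{prop:subtree}, which rearranges the given synched refutation subtree \emph{globally}, working depth-by-depth, into a synched refutation subtree of $\bb{A}_{p^{\sharp}}\overline{\theta}$ in which \emph{every} $\vee$-node is labeled $\id_m$. Compositionality (Theorem~\ref{thm:compo}) identifies this with a subtree of $\bb{A\theta}_{p^{\sharp}}$, and desaturation (Theorem~\ref{Th:Desaturation}) then transports it verbatim to the $m$-coinductive tree for $A\theta$, since desaturation retains precisely the identity-labeled $\vee$-nodes. The converse direction in the paper is close in spirit to yours but again routes through $\overline{\theta}$ and desaturation rather than a direct construction. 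The paper's argument thus showcases how the categorical machinery does the work; your approach would be more elementary and self-contained once the missing lemma is supplied.
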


In statement (3), note that $\theta$ is the \emph{correct} (and not the computed) answer: indeed, the unifiers associated with each derivation step in the synched refutation subtree are not necessarily the most general ones. Towards a proof of Theorem \ref{Th:Completeness}, a key role is played by the following proposition. %, relying on the operator $\overline{\theta}$ (with $\theta$ a substitution) discussed at the end of Section \ref{SEC:SemLogProg}.

\newcommand{\propsubtree}{
Let $\mb{P}$ be a logic program and $A \in \At(n)$ an atom.
If $\bb{A}_{p^{\sharp}}$ has a synched refutation subtree with answer $\theta \colon n\to m$, then $\bb{A}_{p^{\sharp}}\overline{\theta}$ has a synched refutation subtree whose $\vee$-nodes are all labeled with $id_m$.
 }
\begin{proposition}\label{prop:subtree}
\propsubtree
\end{proposition}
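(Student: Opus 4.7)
The plan is to proceed by induction on the maximum depth $2k+1$ of the synched refutation subtree $S$ of $\bb{A}_{p^\sharp}$, using Theorem~\ref{thm:compo} to identify $\bb{A\theta}_{p^\sharp}$ with $\bb{A}_{p^\sharp}\overline{\theta}$, and the explicit characterisation of saturated $\wedge\vee$-trees from Definition~\ref{def:saturatedtree}. For readability I will assume that all leaves of $S$ sit at depth $2k+1$; branches terminating early are handled by the auxiliary observation discussed in the last paragraph.

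For the base case $k=0$, the subtree $S$ consists of the root $A$ and a single $\vee$-leaf labeled $\theta=\theta_1$, originating from a unifier $\langle\theta,\tau\rangle$ of $A$ with the head $H$ of some empty-body clause $H\seq$ in $\mb{P}$. Since $A\theta = H\tau$, the pair $\langle\m{id}_m,\tau\rangle$ is also a unifier of $A\theta$ with $H$, and Definition~\ref{def:saturatedtree} produces a $\vee$-leaf labeled $\m{id}_m$ below the root $A\theta$ of $\bb{A\theta}_{p^\sharp}$, giving the desired subtree.

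For the inductive step, the single $\vee$-child of the root of $S$ is labeled $\theta_1 : n \to m_1$ and comes from a unifier $\langle\theta_1,\tau_1\rangle$ of $A$ with the head $H$ of some clause $H\seq B_1,\ldots,B_\ell$ (with $\ell>0$, as otherwise $k=0$). Set $\theta' = \theta_{2k+1}\circ\cdots\circ\theta_3 : m_1\to m$ and $\tau = \tau_1\theta'$. For each $i$, the subtree $S_i$ of $S$ rooted at $B_i\tau_1 \in \At(m_1)$ is a synched refutation subtree of $\bb{B_i\tau_1}_{p^\sharp}$ with answer $\theta'$; crucially, the same $\theta'$ works for every $i$ because of the synchronisation property of $S$ imposed by Definition~\ref{Def:subtree_sat}. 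Since each $S_i$ has strictly smaller maximum depth than $S$, the inductive hypothesis combined with Theorem~\ref{thm:compo} yields a synched refutation subtree $S_i'$ of $\bb{(B_i\tau_1)\theta'}_{p^\sharp}$ all of whose $\vee$-nodes are labeled $\m{id}_m$. Now observe that $A\theta = A\theta_1\theta' = H\tau_1\theta' = H\tau$ and $B_i\tau = (B_i\tau_1)\theta' \in \At(m)$, so $\langle\m{id}_m,\tau\rangle$ meets the condition of Definition~\ref{def:saturatedtree} and induces a $\vee$-child $v'$ of the root $A\theta$ in $\bb{A\theta}_{p^\sharp}$, labeled $\m{id}_m$ and with $\wedge$-children $B_1\tau,\ldots,B_\ell\tau$. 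Gluing each $S_i'$ as the subtree below the $\wedge$-child $B_i\tau$ produces a synched refutation subtree of $\bb{A\theta}_{p^\sharp}$ with all $\vee$-labels equal to $\m{id}_m$.

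I expect the main obstacle to be bookkeeping rather than conceptual: keeping track of the source/target objects $n\to m_1\to m$ so that all body atoms land in $\At(m)$ and the synchronisation property is preserved at each depth. A related subtlety is the treatment of branches of $S$ that terminate before depth $2k+1$; the cleanest way to handle them is to prove, as a small companion lemma, that a synched refutation subtree of $\bb{X}_{p^\sharp}$ with answer $\rho: n'\to m'$ can be post-composed with any substitution $\sigma: m'\to m$ so as to produce a synched refutation subtree of $\bb{X\sigma\rho}_{p^\sharp}$ whose $\vee$-labels are all $\m{id}_m$. This lemma is itself a mild generalisation of the base case and integrates seamlessly into the induction above.
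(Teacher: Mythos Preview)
Your argument is correct and takes a genuinely different route from the paper. The paper's proof is an iterative, depth-by-depth reconstruction: it isolates two ``shift'' properties of $\wedge$-nodes in $\bb{A}_{p^{\sharp}}$ --- roughly, that a $\vee$-child labeled $\theta$ with children $B_1,\ldots,B_n$ can be traded for one labeled $\theta'\circ\theta$ with children $B_1\theta',\ldots,B_n\theta'$, and a related version for nodes whose label has been substituted --- and uses them to slide all substitutions up to depth~$1$, leaving $\id_m$ below. Your proof instead exploits the self-similar structure of saturated $\wedge\vee$-trees (the subtree at any $\wedge$-node $B$ is $\bb{B}_{p^{\sharp}}$) to run an induction on depth, which is arguably more modular and closer in spirit to the compositionality results of the paper.

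The one point where you should tighten the exposition is exactly the one you flag: when branches terminate at different depths, the subtrees $S_i$ do not all have answer $\theta'$, so the inductive hypothesis as stated does not apply directly. Your companion lemma is the right fix, but as written it is not a ``mild generalisation of the base case'' --- it is a generalisation of the full proposition (allowing post-composition by an arbitrary $\sigma$). The cleanest repair is simply to prove that stronger statement by the same induction: if $\bb{A}_{p^{\sharp}}$ has a synched refutation subtree with answer $\rho\colon n\to m'$ and $\sigma\colon m'\to m$ is any substitution, then $\bb{A\sigma\rho}_{p^{\sharp}}$ has a synched refutation subtree with all $\vee$-labels $\id_m$. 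Your base and inductive steps go through verbatim for this strengthened form, and the early-termination issue disappears.
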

\begin{proof} See Appendix \ref{Sec:appendix}.\end{proof}

\begin{figure}[t]
\begin{tabular}{cc}
{\scriptsize
 %\[
% \begin{tiny}
% \def\objectstyle{\scriptstyle}\def\labelstyle{\scriptstyle}
\xymatrix@C=.1cm@R=.2cm{
 & \pList ( \tconsZ (x_{1}, (\tconsZ(x_{1},x_2)))) \ar@{-}[d] & & \\
 &    \nodoor{<x_{1},x_2>} \ar@{-}[d] \ar@{-}[ld]  &  &     \\
  \pNat(x_1) \ar@{-}[d]   & \pList(\tconsZ(x_{1},x_2)) \ar@{-}[d] &  \\
\nodoor{<\tsuccZ(x_1),\tnilZ>}  \ar@{-}[d]     &       \nodoor{<\tsuccZ(x_1),\tnilZ>} \ar@{-}[d] \ar@{-}[rd] &  \\
 \pNat(x_1) \ar@{-}[d]  &       \pNat(\tsuccZ(x_1))   \ar@{-}[d]       &       \pList(\tnilZ) \ar@{-}[d] \\
 \nodoor{<\tzeroZ>}      &       \nodoor{<\tzeroZ>}  \ar@{-}[d]        &         \nodoor{<\tzeroZ>}\\
                           &           \pNat(\tzeroZ)  \ar@{-}[d]              &      \\
                           &                   \nodoor{\m{id}_0}               &
}
%\end{tiny}
%\]
}
&
{\scriptsize
% \[
% \begin{tiny}
% \def\objectstyle{\scriptstyle}\def\labelstyle{\scriptstyle}
\xymatrix@C=.1cm@R=.2cm{
 & \pList ( \tconsZ (\tsuccZ(\tzeroZ), (\tconsZ(\tsuccZ(\tzeroZ),\tnilZ)))) \ar@{-}[d] & \\
 &        \nodoor{\m{id}_0} \ar@{-}[ld] \ar@{-}[d]         &\\
  \pNat(\tsuccZ(\tzeroZ)) \ar@{-}[d]    &  \pList(\tconsZ(\tsuccZ(\tzeroZ),\tnilZ)) \ar@{-}[d] \\
\nodoor{\m{id}_0}  \ar@{-}[d]  &                \nodoor{\m{id}_0} \ar@{-}[rd] \ar@{-}[d]   \\
 \pNat(\tzeroZ) \ar@{-}[d]  &       \pNat(\tsuccZ(\tzeroZ))   \ar@{-}[d]              &       \pList(\tnilZ) \ar@{-}[d]  \\
 \nodoor{\m{id}_0}      &       \nodoor{\m{id}_0}  \ar@{-}[d]        &                 \nodoor{\m{id}_0}   \\
                           &           \pNat(\tzeroZ)  \ar@{-}[d]             &                     &          \\
                           &                   \nodoor{\m{id}_0}             &                       &
}
%\end{tiny}
%\]
}
\end{tabular}
\caption{Successful synched derivation subtrees for $\pList ( \tcons (x_{1}, (\tcons(x_{1},x_2))))$ (left) and $\pList ( \tcons (\tsucc(\tzero), (\tcons(\tsucc(\tzero),\tnil))))$ (right) in $\NatList$. The symbols $\tcons$, $\tnil$, $\tsucc$ and $\tzero$ are abbreviated to $\tconsZ$, $\tnilZ$, $\tsuccZ$ and $\tzeroZ$ respectively.}\label{fig:subtree}
\end{figure}

Figure \ref{fig:subtree} provides an example of the construction needed for Proposition \ref{prop:subtree}. Note that the root of the rightmost tree is labeled with an atom of the form $A \theta$, where $\theta$ and $A$ are respectively the answer and the label of the root of the leftmost tree. The rightmost tree is a refutation subtree of the $0$-coinductive tree for $A \theta$ and can be obtained from the leftmost tree via a procedure involving the operator $\overline{\theta}$ discussed at the end of Section \ref{SEC:SemLogProg}.
 %(\emph{cf. } the proof of Proposition \ref{prop:subtree} in Appendix for more details).
%via the procedure described in Proposition \ref{prop:subtree}.

\medskip

We are now ready to provide a proof of Theorem \ref{Th:Completeness} combining Proposition \ref{prop:subtree}, the desaturation procedure of Theorem \ref{Th:Desaturation} and the compositionality result of Theorem \ref{thm:compo}.

\begin{proof}[Proof of Theorem \ref{Th:Completeness}] The statement $(2 \Leftrightarrow 3)$ is a rephrasing of \cite[Th.4.8]{KomPowerCSL11}, whence we focus on proving $(1 \Leftrightarrow 2)$, where $m$ is always the target object of $\theta$.

$(1 \Rightarrow 2)$.
If $\bb{A}_{p^{\sharp}}$ has a synched refutation subtree with answer $\theta$, then by Proposition \ref{prop:subtree},  $\bb{A}_{p^{\sharp}}\overline{\theta}$ has a synched refutation subtree $\tree$ whose $\vee$-nodes are all labeled with $id_m$. Compositionality (Theorem \ref{thm:compo}) guarantees that $\tree$ is a synched refutation subtree of $\bb{A\theta}_{p^{\sharp}}$.
Since all the $\vee$-nodes of $\tree$ are labeled with $id_m$, $\tree$ is preserved by desaturation. This means that $\tree$ is a refutation subtree of $\desat(\U(\bb{\cdot}_{p^{\sharp}}))(m)(A\theta)$ which, by Theorem \ref{Th:Desaturation}, is the $m$-coinductive tree for $A\theta$ in $\mb{P}$.

$(2 \Leftarrow 1)$.
If the $m$ coinductive tree for $A\theta$ has a refutation subtree $\tree$ then, by Theorem \ref{Th:Desaturation}, this is also the case for  $\desat(\U(\bb{\cdot}_{p^{\sharp}}))(m)(A\theta)$. This means that $\tree$ is a synched derivation subtree of $\bb{A\theta}_{p^{\sharp}}$ whose $\vee$-nodes are all labeled by $id_m$.
By compositionality, $\tree$ is also a subtree of $\bb{A}_{p^{\sharp}}\overline{\theta}$. Let $t$ be the $\vee$-node at the first depth of $\tree$. By construction of the operator $\overline{\theta}$, the root of $\bb{A}_{p^{\sharp}}$ has a child $t'$ labeled with $\theta$ having the same children as $t$. Therefore $\bb{A}_{p^{\sharp}}$ has a synched refutation subtree with answer $\theta$.
\end{proof}

\section{Bialgebraic Semantics of Goals: the Ground Case}\label{sec:parground}

%The saturated semantics discussed in the previous sections only applies to one atom at a a time. As a consequence, the parallelism exhibited by saturated $\wedge\vee$-trees may yield unsound refutations, forcing us to impose an explicit form of synchronisation between the substitutions applied on the different branches.

%In this section we lay the foundations of a (saturated) semantics that applies directly to goals. As outlined in the introduction, this will lead to a notion of refutation subtree that is sound by construction, exhibiting a form of parallelism that is synchronous with respect to the substitutions applied to different atoms.

In this section we lay the foundations of a (saturated) semantics that applies directly to goals. As outlined in the introduction, a main motivation for this extension is the study of yet another form of compositionality, now with respect to the algebraic structure of goals.

First, we approach the question in the simpler case of ground logic programs. Such a program $\mb{P}$, that in Section~\ref{ssec:groundcase} has been represented as a $\p_f\p_f$-coalgebra $p \: \At \to \p_f\p_f(\At)$, will now be modeled as a certain bialgebra. The coalgebraic part will be given by the endofunctor $\p_f$: this corresponds to the outer occurrence of $\p_f$ in $p \: \At \to \p_f\p_f(\At)$ and encodes non-determinism, i.e., the possibility that an atom matches multiple heads in a program. The algebraic part, encoding the internal structure of a goal, instead will be given by a monad $\T$ that corresponds to the inner occurrence of $\p_f$ in $p \: \At \to \p_f\p_f(\At)$. Intuitively, $\T(\At)$ represents a goal, i.e. a collection of atoms, to be processed \emph{in parallel}.

To formally define the type of our bialgebrae, we need to provide a distributive law $\lambda \: \T \p_f \To \p_f \T$. Our specification for $\lambda$ stems from the observation that $\p_f(\At)$ models a \emph{disjunction} of atoms, whereas $\T(\At)$ models a \emph{conjunction}. Then $\lambda$ should just distribute conjunction over disjunction, for instance:
\begin{eqnarray} \label{eq:speclambdaground}
(A_1 \vee A_2) \wedge (B_1 \vee B_2) \mapsto (A_1 \wedge B_1) \vee (A_1 \wedge B_2) \vee (A_2 \wedge B_1) \vee (A_2 \wedge B_2).
\end{eqnarray}
Provided this specification for $\lambda$, one could naively think of modelling $\T$ as $\p_f$ itself, so that the conjunction $\T(\At)$ is represented as a finite set of atoms. However, this would pose a problem with the naturality of $\lambda$. Consider again the example above, now formalized with $\T = \p_f$, and suppose to check naturality for $f\: \At \to \At$ defined as $(A_1 \mapsto B_1, A_2 \mapsto B_2)$:
\begin{eqnarray} \label{eq:counterexTground}
\vcenter{\xymatrix{
\T \p_f(\At) \ar[d]_{\T \p_f(f)} \ar[r] ^{\lambda_{\At}}& \p_f \T(\At) \ar[d]^{\p_f \T(f)}\\
\T \p_f(\At) \ar[r]^{\lambda_{\At}} & \p_f \T(\At)
}}
\end{eqnarray}
The function $\lambda_{\At} \circ \T \p_f(f)$ maps $\{\{A_1 , A_2\},\{B_1 , B_2\}\}$ first to $\{\{B_1 , B_2\}\}$ and then to $\{\{B_1\},\{B_2\}\}$. Instead $\p_f \T(f) \circ \lambda_{\At}$ maps $\{\{A_1 , A_2\},\{B_1 , B_2\}\}$ first to $\{\{A_1 , B_1\} , \{A_1 , B_2\},$ $\{A_2 , B_1\} , \{A_2 , B_2\}\}$ and then to $\{\{B_1\} , \{B_1 , B_2\} , \{B_2\}\}$. Therefore our specification \eqref{eq:speclambdaground} for $\lambda$, with $\T$ modeled as $\p_f$, would not yield a natural transformation.

For this reason, we model instead the elements of $\T(\At)$ as \emph{lists} of atoms. Formally, we let $\T$ be the \emph{list} endofunctor $\li \: \set \to \set$ mapping a set $X$ into the set of lists of elements of $X$ and a function $f \: X \to Y$ into its componentwise application to lists in $\li(X)$. Such a functor forms a monad with unit $\eta^{\li}_X \: X \To \li X$ given by $x \mapsto [x]$ (where $[x]$ is the list with a single element $x$) and a multiplication given by flattening: for $l_1,\dots,l_n$ elements of $\li(X)$, $\mu^{\li}_X \: \li \li X \To \li X$ maps $[l_1,\dots,l_n]$ to $l_1 \lconc \dots \lconc l_n$, where $\lconc$ denotes list concatenation. In virtue of its definition, we will make use of the notation $\lconc$ for the function $\mu^{\li}_X$, whenever $X$ is clear from the context.

There is a distributive law $\lambda \: \li \p_f \To \p_f \li$ of the monad $\p_f$ over the monad $\li$ given by assignments $[X_1,\dots,X_n] \mapsto \{[x_1,\dots,x_n] \mid x_i \in X_i\}$ (\emph{cf.} \cite[Ex. 2.4.8]{manes2007monad}). One can readily check that this definition implements the specification given in \eqref{eq:speclambdaground}:
\[ [\{A_1 , A_2\},\{B_1 , B_2\}] \mapsto \{ [A_1 , B_1] , [A_1 , B_2] , [A_2 , B_1] , [A_2 , B_2]\}. \]
Also, the counterexample to commutativity of \eqref{eq:counterexTground} given above is neutralized as follows.
\begin{eqnarray*}% \label{eq:exTground}
\vcenter{\xymatrix{
[\{A_1 , A_2\},\{B_1 , B_2\}] \ar@{|->}[d]_{\T \p_f(f)} \ar@{|->}[rr]^-{\lambda_{\At}}&& \{[A_1 , B_1],[A_1 , B_2],[A_2 , B_1],[A_2 , B_2]\} \ar@{|->}[d]^{\p_f \T(f)}\\
[\{B_1 , B_2\},\{B_1 , B_2\}] \ar@{|->}[rr]^-{\lambda_{\At}} && \{[B_1 , B_1],[B_1 , B_2],[B_2 , B_1],[B_2 , B_2]\}
}}
\end{eqnarray*}

\begin{comment}
From $\p$ to $\li$.
\begin{itemize}
  \item We want to take into account the algebraic structure given by the inner powerset layer (the AND-layer) in $\p \p (\At)$:
      \begin{itemize}
        \item give the GSOS rule
        \[ a_1 \mapsto \bigvee_i (b^i_{11} \wedge \dots \wedge b^i_{k1}) a_1 \wedge \dots \wedge a_n \mapsto \bigvee \dots \]
        \item we need to define a bialgebra on $\At$.
      \end{itemize}
  \item To do this we need a distributive law $\p \p \To \p \p$. But defining it according to the rule above would not yield a natural transformation - see example. So we switch from $\p \p$ to $\p \li$, which is also conform to what happens with real implementations of SLD-resolution.
\end{itemize}
\end{comment}

\begin{remark}Non-commutativity of diagram~\eqref{eq:counterexTground} is essentially due to the idempotence of the powerset constructor. Thus, in order to achieve naturality, multisets instead of lists would also work, as both structures are not idempotent. We chose lists in conformity with implementations of SLD-resolution (for instance PROLOG) where atoms are processed following their order in the goal.
\end{remark}

\begin{convention}\label{conv:ListToPowGround} As motivated above, we will develop our framework modeling goals as lists instead of sets. This requires a mild reformulation of the constructions presented in Section \ref{ssec:groundcase}, with $\li$ taking the role of the inner occurrence of $\p_f$: throughout this and the next section we suppose that a ground logic program $\mb{P}$ is encoded as a coalgebra $p \: \At \to \p_f\li (\At)$ (instead of $p \: \At \to \p_f\p_f(\At)$) and its semantics as a map $\bb{\cdot}_p \: \At \to \cof{\p_f\li}(\At)$. It is immediate to see that this is an harmless variation on the framework for coalgebraic logic programming developed so far and all our results still hold. Accordingly, $\wedge\vee$-trees are now elements of $\cof{\p_f\li}(\At)$ (instead of $\cof{\p_f\p_f}(\At)$ as in Section \ref{SEC:Background}) and therefore the children of an $\vee$-node form a list rather than a set. For that reason, while keeping the standard representation of $\wedge\vee$-trees, we shall implicitly assume a left to right ordering amongst children of $\vee$-nodes (\emph{cf.} Example~\ref{ex:representation}).
\end{convention}

We now have all the formal ingredients to define the extension from coalgebraic to bialgebraic operational semantics.

\begin{construction}\label{constr:paralground} Let $\mb{P}$ be a ground logic program and $p \: \At \to \p_f \li (\At)$ be the coalgebra associated with $\mb{P}$. We define the $\p_f$-coalgebra $\pg{p}\: \li \At \to \p_f \li \At$ in the way prescribed by Proposition \ref{prop:bialgfreemonad} as
\[\xymatrix{\li \At \ar[r]^-{\li p} & \li \p_f \li \At \ar[r]^{\lambda_{\li\At}} & \p_f \li \li \At \ar[r]^{\p_f(\lconc)} & \p_f \li \At}.\]
By Proposition \ref{prop:bialgfreemonad}, $(\li \At, \lconc, \pg{p})$ forms a $\lambda$-bialgebra.%, which we call the \emph{parallelization} of $p$.
\end{construction}

\begin{remark} Equivalently, the map $\pg{p}\: \li \At \to \p_f \li \At$ may be obtained with the following universal construction. Let $\U^{\li} \dashv \fF^{\li}$ be the canonical adjunction between $\set$ and the category $\EM{\li}$ of Eilenberg-Moore algebrae for the monad $\li$. By using the distributive law $\lambda$ we can define an $\li$-algebra structure on $\p_f \li \At$ as $h \df \p_f(\lconc) \circ \lambda_{\li\At} \: \li \p_f \li \At \to \p_f \li \At$. Let $\widetilde{p} \: (\li\At,\lconc) \to (\p_f\li\At,h)$ be the unique extension of $p \: \At \to \p_f\li\At$ to an algebra morphism in $\EM{\li}$ along the adjunction $\U^{\li} \dashv \fF^{\li}$. The map $\pg{p} \: \li \At \to \p_f\li\At$ of Construction \ref{constr:paralground} is simply $\U^{\li}(\widetilde{p})$. By definition it makes the following diagram commute.
\begin{equation}\label{eq:genpowconstrground}
\vcenter{\xymatrix{
\At \ar[d]_{p} \ar[r]^{\eta^{\li}_{\At}} & \li \At \ar[dl]^{\pg{p}} \\
\p_f \li \At
}}
\end{equation}
As investigated in \cite{BonchiPowersetConstruction10}, the same pattern that here leads to $\pg{p}$ from a given $p$ has many relevant instances. Most notably, the standard powerset construction for non-deterministic automata can be presented in a bialgebraic setting, as a mapping $t \mapsto \overline{t}$, where $t$ encodes a non-deterministic automaton on state space $X$, $\overline{t}$ a deterministic one on state space $\p X$ and a diagram analogous to \eqref{eq:genpowconstrground} yields $\overline{t} \circ \eta^{\p}_X = t$. It is worth noticing that the powerset construction performs a \emph{determinization} of the automaton $t$: non-determinism is internalized, that is, an element of the state space $\p X$ models a \emph{disjunction} of states. Our construction follows a different intuition: the non-determinism given by clauses with the same head is preserved and what is internalized is the possible \emph{$\wedge$-parallel behavior} of a computation: indeed, an element of $\li \At$ models a \emph{conjunction} of atoms.
\end{remark}

\begin{example}\label{ex:parallelizationground}
Let $\pg{p} \: \li\At \to \p_f\li\At$ be constructed out of the coalgebra $p \: \At \to \p_f\li\At$ associated with the logic program of Example \ref{ex:ground}. For a concrete grasp on Construction~\ref{constr:paralground}, we compute $\pg{p}([ \predp (b,b) , \predp (b,c)]) \in \p_f\li \At$.
 First, observe that $p\: \At \to \p_f \li \At$ assigns to the atom $\predp (b,b)$ the set $\{[\predq (c)], [\predp (b,a) , \predp (b,c) ] \}$
 and to $\predp (b,c)$ the set $\{ [ \predq(a), \predq(b), \predq(c) ] \}$.
 It is therefore easy to see that $\li p$ maps $[ \predp (b,b) , \predp (b,c)]$ into the list of sets of lists $$[\{[\predq (c)], [\predp (b,a) , \predp (b,c) ] \} , \, \{ [ \predq(a), \predq(b), \predq(c) ] \} ]\text{.}$$
 This is mapped by $\lambda_{\li\At}$ into the set of lists of lists
$$ \{ [[\predq (c)], [ \predq(a), \predq(b), \predq(c) ] ], \, [ [\predp (b,a) , \predp (b,c) ], [ \predq(a), \predq(b), \predq(c) ]  ] \}$$
and finally, via $\p_f(\lconc)$ into the set of lists
$$ \{ [\predq (c),  \predq(a), \predq(b), \predq(c)  ], \, [ \predp (b,a) , \predp (b,c) ,  \predq(a), \predq(b), \predq(c)  ] \}$$
that is the value $\pg{p}([ \predp (b,b) , \predp (b,c)])$.

The operational reading of such a computation is that $[ \predp (b,b)
  , \predp (b,c)]$ is a goal given by the conjunction of atoms $\predp
(b,b)$ and $\predp (b,c)$.  The action of $\pg{p}$ tells us that, in
order to refute $[ \predp (b,b) , \predp (b,c)]$, one has to refute
either $[\predq (c), \predq(a), \predq(b), \predq(c) ]$ \emph{or} ${[
    \predp (b,a) , \predp (b,c) , \predq(a), \predq(b), }\predq(c)]$
(respectively first and second element of $\pg{p}([ \predp (b,b) ,
  \predp (b,c)])$).

For later use, it is interesting to observe also that $\pg{p}(\elist) = \{\elist\}$ for any program $p$: this is because $\elist \stackrel{\li p}{\mapsto} \elist \stackrel{\lambda_{\li\At}}{\mapsto} \{\elist\} \stackrel{\p_f(\lconc)}{\mapsto} \{\elist\} $.
\end{example}

\begin{remark}\label{rmk:sosparground}
For yet another view on Construction \ref{constr:paralground}, we remark that the map $\pg{p}$ can be presented in terms of the following SOS-rules, where $l \to l'$ means $l' \in \pg{p}(l)$.
\begin{eqnarray*}
\inference*[l1]{l \in p(A)}{[A] \to l}
\qquad
\inference*[l2]{l_1 \to l_1' & l_2 \to l_2'}{ \ l_1 \lconc l_2 \to l_1' \lconc l_2'\ } \qquad
\inference*[l3 ]{}{\elist \to \elist}
\end{eqnarray*}
Rule $(l1)$ corresponds to commutativity of \eqref{eq:genpowconstrground}. Rule $(l2)$ states that $\pg{p}$ preserves the algebraic structure of $\li \At$ given by list concatenation $\lconc$. Operationally, this means that one step of (parallel) resolution for $l_1 \lconc l_2$ requires both a step for $l_1$ and a step for $l_2$.
Finally, rule $(l3)$ encodes the trivial behavior of $\pg{p}$ on the empty list, as observed in Example \ref{ex:parallelizationground}.
\end{remark}

We now provide a cofree construction for the semantics $\bb{\cdot}_{\pg{p}}$ arising from $\pg{p}$.

\begin{construction}\label{constr:parcofreeground} Let
  $\cof{\p_f}(\li\At)$ denote the cofree $\p_f$-coalgebra on $\li\At$,
  obtained like in Construction \ref{Constr:cofree_ground}. It enjoys a final $\li\At \times\p_f (\cdot)$-coalgebra structure $c \: \cof{\p_f}(\li\At) \xrightarrow{\cong} \li\At \times \p_f(\cof{\p_f}(\li\At))$. We now want to lift its universal property to the setting of bialgebrae, showing that $\cof{\p_f}(\li\At)$ forms a final $\lambda'$-bialgebra, where $\lambda'\: \li(\li\At \times \p_f (\cdot)) \To \li\At \times  \p_f \li (\cdot)$ is a distributive law of the monad $\li$ over the functor $\li\At \times \p_f (\cdot)$ defined by
      \[\xymatrix{\li(\li\At \times \p_f (X)) \ar[rr]^{<\li\pi_1,\li \pi_2>} && \li\li\At \times \li\p_f (X) \ar[rr]^{\lconc \times \lambda_X} && \li\At \times \p_f \li (X)} \text{.}\]
For this purpose, we apply the construction of Proposition \ref{prop:finalcoalgBialg}. First, using finality of $c \: \cof{\p_f}(\li\At) \xrightarrow{\cong} \li\At \times  \p_f(\cof{\p_f}(\li\At))$ we define an $\li$-algebra $\lconcc\: \li(\cof{\p_f}(\li\At)) \to \cof{\p_f}(\li\At)$ as follows:
\[\xymatrix{
\li(\cof{\p_f}(\li\At)) \ar[d]_{\li c} \ar@{-->}[rrr]^{\lconcc} &&& \cof{\p_f}(\li\At) \ar[dd]^{c}\\
\li(\li\At \times \p_f(\cof{\p_f}(\li\At))) \ar[d]_{\lambda'_{\cof{\p_f}(\li\At)}} &&& \\
\li\At \times \p_f(\li(\cof{\p_f}(\li\At))) \ar[rrr]^{\id_{\li\At} \times \p_f(\lconcc) } &&& \li\At \times \p_f(\cof{\p_f}(\li\At))
}\]
This algebraic structure yields the final $\lambda'$-bialgebra $\big(\cof{\p_f}(\li\At), \lconcc, c)$ as guaranteed by Proposition \ref{prop:finalcoalgBialg}.
Now we turn to the definition of the semantics $\bb{\cdot}_{\pg{p}} \: \li\At \to \cof{\p_f}(\li\At)$. First, observe that the $\lambda$-bialgebra $(\li\At,\lconc,\pg{p})$ canonically extends to a $\lambda'$-bialgebra by letting $<\id_{\li\At},\pg{p}>$ be the $\li\At \times \p_f(\cdot)$-coalgebra structure on $\li\At$. We can then use finality of $\big(\cof{\p_f}(\li\At), \lconcc, c)$ to obtain the unique $\lambda'$-bialgebra morphism $\bb{\cdot}_{\pg{p}} \: \li\At \to \cof{\p_f}(\li\At)$ making the following diagram commute:
\begin{equation}\label{diag:bialgebraSemGround}
\vcenter{\xymatrix@R=20pt{
\li\li\At \ar[d]_{\lconc} \ar[rrr]^{\li \bb{\cdot}_{\pg{p}}} &&& \li(\cof{\p_f}(\li\At)) \ar[d]^{\lconcc} \\
\li\At \ar[d]_{<\id_{\li\At},\pg{p}>} \ar@{-->}[rrr]^{\bb{\cdot}_{\pg{p}}} &&& \cof{\p_f}(\li\At) \ar[d]^{c} \\
\li\At \times \p_f(\li\At) \ar[rrr]^{\id_{\li\At} \times \p_f(\bb{\cdot}_{\pg{p}})} &&& \li\At \times \p_f(\cof{\p_f}(\li\At)).
}}
\end{equation}
One can check that the above construction of the map $\bb{\cdot}_{\pg{p}}$ can be equivalently obtained by building a cone with vertex $\li \At$ on the terminal sequence generating $\cof{\p_f}(\li\At)$, analogously to Construction \ref{Constr:coalgComonad_ground}.
\end{construction}

Since $\bb{\cdot}_{\pg{p}}$ is given as a morphism of bialgebrae instead of plain coalgebrae, it will preserve the algebraic structure of $\li\At$. This allow us to state the motivating compositionality property of bialgebraic semantics where, intuitively, list concatenation $\lconc$ models the conjunction $\wedge$ described in the introduction.

\begin{theorem}[$\wedge$-Compositionality] \label{thm:ANDcompo} Given lists $l_1,\dots,l_k$ of atoms in $\At$:
\begin{equation*}
    \bb{l_1 \lconc \dots \lconc l_k}_{\pg{p}} = \bb{l_1}_{\pg{p}}\lconcc\dots \lconcc \bb{l_k}_{\pg{p}}.
\end{equation*}
Where $l_1 \lconc \dots \lconc l_k$ is notation for $\lconc ([l_1,\dots,l_k]) = \mu^{\li}_{\At} ([l_1,\dots,l_k])$ and $\bb{l_1}_{\pg{p}}\lconcc\dots \lconcc \bb{l_k}_{\pg{p}}$ for $\lconcc([\bb{l_1}_{\pg{p}},\dots,\bb{l_k}_{\pg{p}}])$. \end{theorem}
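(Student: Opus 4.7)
The plan is to derive the result directly from the fact that $\bb{\cdot}_{\pg{p}}$ is a $\lambda'$-bialgebra morphism, as established in Construction~\ref{constr:parcofreeground}. In particular, the top square of diagram~\eqref{diag:bialgebraSemGround} expresses exactly that $\bb{\cdot}_{\pg{p}}$ is a morphism of $\li$-algebrae, i.e.
\[ \bb{\cdot}_{\pg{p}} \circ \lconc \;=\; \lconcc \circ \li\bb{\cdot}_{\pg{p}} \text{.} \]

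First I would unfold the notation. The expression $l_1 \lconc \dots \lconc l_k$ is defined as $\mu^{\li}_{\At}([l_1,\dots,l_k]) = \lconc([l_1,\dots,l_k])$, where $[l_1,\dots,l_k]$ is regarded as an element of $\li\li\At$. Symmetrically, $\bb{l_1}_{\pg{p}} \lconcc \dots \lconcc \bb{l_k}_{\pg{p}}$ stands for $\lconcc([\bb{l_1}_{\pg{p}},\dots,\bb{l_k}_{\pg{p}}])$, where the latter list is an element of $\li(\cof{\p_f}(\li\At))$ obtained as $\li\bb{\cdot}_{\pg{p}}([l_1,\dots,l_k])$ (recall that $\li$ acts on functions componentwise).

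Then the proof is a one-line diagram chase: applying the above equality of morphisms to the element $[l_1,\dots,l_k] \in \li\li\At$ yields
\[ \bb{l_1 \lconc \dots \lconc l_k}_{\pg{p}} \;=\; \bb{\cdot}_{\pg{p}}\bigl(\lconc([l_1,\dots,l_k])\bigr) \;=\; \lconcc\bigl(\li\bb{\cdot}_{\pg{p}}([l_1,\dots,l_k])\bigr) \;=\; \bb{l_1}_{\pg{p}} \lconcc \dots \lconcc \bb{l_k}_{\pg{p}}\text{.} \]

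There is no real obstacle here, since all the heavy lifting has already been done in Construction~\ref{constr:parcofreeground}: the existence of $\lconcc$ as an $\li$-algebra making $\cof{\p_f}(\li\At)$ into the final $\lambda'$-bialgebra, and the fact that $\bb{\cdot}_{\pg{p}}$ is the unique bialgebra morphism into it, are both provided by Proposition~\ref{prop:finalcoalgBialg} applied to the distributive law $\lambda'$ of the monad $\li$ over $\li\At \times \p_f(\cdot)$. Thus $\wedge$-compositionality is a direct corollary of the bialgebraic machinery and requires no extra calculation.
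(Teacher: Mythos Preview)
Your proof is correct and follows essentially the same approach as the paper: both unfold the notation for $\lconc$ and $\lconcc$, invoke commutativity of the top square of diagram~\eqref{diag:bialgebraSemGround} (i.e.\ that $\bb{\cdot}_{\pg{p}}$ is an $\li$-algebra morphism), and use the componentwise action of $\li$ on $\bb{\cdot}_{\pg{p}}$ to conclude. The paper presents this as a four-line chain of equalities with the same justifications you give.
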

\begin{proof} The statement is given by the following derivation:
\begin{align*}
\bb{l_1 \lconc \dots \lconc l_k}_{\pg{p}} =\ & \bb{\cdot}_{\pg{p}} \circ \lconc ([l_1,\dots,l_k]) \\
=\ & \lconcc\circ \li \bb{\cdot}_{\pg{p}}([l_1,\dots,l_k]) \\
=\ & \lconcc([\bb{l_1}_{\pg{p}},\dots,\bb{l_k}_{\pg{p}}]) \\
=\ & \bb{l_1}_{\pg{p}}\lconcc\dots \lconcc \bb{l_k}_{\pg{p}}.
\end{align*}
The first and the last equality are just given by unfolding notation. The second equality amounts to commutativity of the top square in diagram \eqref{diag:bialgebraSemGround} and the third one is given by definition of $\li$ on the function $\bb{\cdot}_{\pg{p}}$.
\end{proof}

\begin{example}\label{ex:comp_ground}
Recall from Example \ref{ex:parallelizationground}, that $\pg{p}([\predp (b,b)])= \{[ \predq(c) ], [\predp (b,a), \predp (b,c) ] \}$ and
$\pg{p}([\predp (b,c)])=\{ [\predq (a), \predq (b) , \predq (c) ] \}$. Below we represent the values $\bb{\predp (b,b)}_{\pg{p}}$ (on the left) and $\bb{\predp (b,c)}_{\pg{p}}$ (on the right) as trees.
{\scriptsize
\[
\xymatrix@C=.1cm@R=.2cm{
                 & \ar@{-}[ldd] [ \predp (b,b) ] \ar@{-}[rdd] & & &\\ \\
  [ \predq(c) ] \ar@{-}[d]  &                              & [\predp (b,a), \predp (b,c) ] \\
  \elist \ar@{-}[d] \\
  \elist \ar@{-}[d] \\
  \dots
  }
\qquad
\xymatrix@C=.1cm@R=.5cm{
     [ \predp (b,c) ] \ar@{-}[d] \\
     [\predq (a), \predq (b) , \predq (c) ]
     }
\]
}
The idea is that edges represent the transitions of the rule system of Remark \ref{rmk:sosparground}. The node $[\predq(c)]$ has a child labeled with the empty list, since $p(\predq (c) )=\{\elist\}$ (\emph{cf.} Example \ref{ex:ground}) and thus, by virtue of rule $(l1)$, $\pg{p}( [\predq (c)] )=\{\elist\}$.
Instead, $[\predq (a), \predq (b) , \predq (c) ]$ has no children because $\pg{p}( [\predq (a), \predq (b) , \predq (c) ] )$ is empty. That follows by the fact that $\pg{p}(\predq (a) )$ --- and, in fact, also $\pg{p}(\predq (b) )$ --- is empty and thus we cannot trigger rule $(l2)$ to let $[\predq (a), \predq (b) , \predq (c) ]$ make a transition. Intuitively, this means that $[\predq (a), \predq (b) , \predq (c) ]$ cannot be refuted because not all of its atoms have a refutation. A similar consideration holds for $[\predp (b,a), \predp (b,c) ]$.

\smallskip

\noindent In Example \ref{ex:parallelizationground}, we have shown that $$\pg{p}( [\predp (b,b) , \predp (b,c)]  ) = \{ [\predq (c),  \predq(a), \predq(b), \predq(c)  ], \, [ \predp (b,a) , \predp (b,c) ,  \predq(a), \predq(b), \predq(c)  ] \}$$
and, by similar arguments to those above, we can show that both $\pg{p}( [\predq (c),  \predq(a), \predq(b), \predq(c)  ] )$ and
$\pg{p} ( [ \predp (b,a) , \predp (b,c) ,  \predq(a), \predq(b), \predq(c)  ] )$ are empty. Therefore, we can depict $\bb{  \predp (b,b) , \predp (b,c) }_{\pg{p}}$ as the tree below.
{\scriptsize
 \[
\xymatrix@C=.1cm@R=.5cm{
                 & \ar@{-}[ld] [ \predp (b,b) , \predp (b,c) ] \ar@{-}[rd] & & &\\
  [\predq (c),\predq (a),\predq (b),\predq (c)]  &                              & [\predp (b,a), \predp (b,c),\predq (a),\predq (b),\predq (c)]
}
\]
}
By virtue of Theorem \ref{thm:ANDcompo}, such a tree can be computed also by concatenating via $\lconcc$ the tree $\bb{[\predp (b,b)]}_{\pg{p}}$ with the tree $\bb{[\predp (b,c)]}_{\pg{p}}$ depicted above. The operation $\lconcc$ on trees $T_1$ and $T_2$ can also be described as follows.
\begin{enumerate}
 \item If the root of $T_1$ has label $l_1$ and the root of $T_2$ has label $l_2$, then the root of $T_1 \lconcc T_2$ has label $l_1 \lconc l_2$;
 \item If $T_1$ has a child $T_1'$ and $T_2$ has a child $T_2'$, then  $T_1 \lconcc T_2$ has a child $T_1' \lconcc T_2'$.
\end{enumerate}
Observe that such trees are rather different from the $\wedge\vee$-trees introduced in Definition \ref{DEF:and-or_par_tree_ground} (\emph{cf.} also Example \ref{ex:ground}): all nodes are of the same kind (there is no more distinction between $\wedge$-nodes and $\vee$-nodes) and are labeled by lists of atoms (rather than just atoms). In the next section, we will formally introduce such trees under the name of (parallel) $\vee$-trees and show that they provide a sound and complete semantics for ground logic programs.
\end{example}

\section{Soundness and Completeness of Bialgebraic Ground Semantics}\label{sec:complparsem_ground}

 In this section we investigate the relation between the semantics $\bb{\cdot}_p$ and $\bb{\cdot}_{\pg{p}}$, stating a relative soundness and completeness result.

In Section \ref{ssec:groundcase} we observed that $\bb{\cdot}_{p}$ --- of type $\At \to \cof{\p_f\li}(\At)$, following Convention~\ref{conv:ListToPowGround} --- maps an atom $A$ into its $\wedge\vee$-tree (Definition \ref{DEF:and-or_par_tree_ground}). As a first step of our analysis, we provide an analogous operational understanding for the value $\bb{l}_{\pg{p}} \in \cof{\p_f}(\li\At)$ associated with a goal $l \in \li\At$. The resulting notion of \emph{$\vee$-tree} will correspond to the one intuitively given in Example \ref{ex:comp_ground}.

\begin{definition}\label{DEF:or_par_tree_ground}
Given a ground logic program $\mb{P}$ and a list $l \in \li\At$ of atoms, the \emph{(parallel) $\vee$-tree} for $l$ in $\mb{P}$ is the possibly infinite tree $T$ satisfying the following properties:
\begin{enumerate}
  \item Each node in $T$ is labeled with a list of atoms and the root is labeled with $l$.
  \item Let $s$ be a node in $T$ with label $l' = [A_1,\dots,A_k]$. For every list
      $[C_1,\dots,C_k]$ of clauses of $\mb{P}$ such that $H_i = A_i$ for each $C_i = H^i \seq B_1^i,\dots,B_j^i$, $s$ has exactly one child $t$, and viceversa. The node $t$ is labeled with  the list $l_1 \lconc \dots \lconc l_k$, where $l_i = [B_1^i,\dots,B_j^i]$ is the body of clause $C_i$.
  %$k$ children, with $B_i \tau$ labeling the $i$th child.
\end{enumerate}
\end{definition}

\noindent Differently from $\wedge\vee$-trees, where two kinds of nodes yield a distinction between or- and and-parallelism, $\vee$-trees have only one kind of nodes, intuitively corresponding to or-parallelism. The and-parallelism, which in $\wedge\vee$-trees is given by the branching of and-nodes labeled with an atom, is encoded in $\vee$-trees by the labeling of nodes with lists of atoms. The children of a node labeled with $l$ yield the result of simultaneously matching \emph{each} atom in $l$ with heads in the program (\emph{cf.} rule $(l2)$ in Remark \ref{rmk:sosparground}).\footnote{This synchronous form of computation is what makes $\vee$-trees essentially different from SLD-trees~\cite{Lloyd93}, in which also nodes are labeled with lists, but the parent-child relation describes the unification of a \emph{single} atom in the parent node.}

Analogously to Proposition \ref{prop:adequacy_sat}, it is immediate to check the following observation.

\begin{proposition}[Adequacy]
Given a list of atoms $l \in \li\At$ and a program $\mb{P}$, $\bb{l}_{\pg{p}} \in \cof{\p_f}(\li\At)$ is the $\vee$-tree for $l$ in $\mb{P}$.
\end{proposition}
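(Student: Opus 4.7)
The plan mirrors the proof of Proposition~\ref{prop:adequacy_sat}, working at the level of the final $\li\At \times \p_f(\cdot)$-coalgebra $\cof{\p_f}(\li\At)$. First I would unfold Construction~\ref{Constr:cofree_ground} instantiated to the functor $\li\At \times \p_f(\cdot)$ in order to exhibit the elements of $\cof{\p_f}(\li\At)$ as finitely branching trees whose nodes carry labels in $\li\At$; here, contrary to the $\wedge\vee$-tree case, there is a single sort of node because the behaviour functor $\p_f$ has only one layer of branching. A $\vee$-tree for $l$ as in Definition~\ref{DEF:or_par_tree_ground} is manifestly of this shape, hence an element of $\cof{\p_f}(\li\At)$.

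Next I would define the function $\Phi \: \li\At \to \cof{\p_f}(\li\At)$ sending $l$ to its $\vee$-tree $T_l$, and show that it is a morphism from $(\li\At, <\id_{\li\At}, \pg{p}>)$ to the final $\li\At \times \p_f(\cdot)$-coalgebra. The root label of $T_l$ is $l$ by clause~(1) of Definition~\ref{DEF:or_par_tree_ground}, so $\Phi$ commutes with the $\li\At$-component. For the $\p_f$-component I have to verify that the (multiset of) immediate subtrees of $T_l$ is exactly $\{T_{l'} \mid l' \in \pg{p}(l)\}$. Unfolding Construction~\ref{constr:paralground} on $l = [A_1,\dots,A_k]$:
\[
\pg{p}(l) \;=\; \p_f(\lconc) \circ \lambda_{\li\At} \circ \li p\,([A_1,\dots,A_k]) \;=\; \{\, l_1 \lconc \dots \lconc l_k \,\mid\, l_i \in p(A_i) \,\}\text{,}
\]
and each $l_i \in p(A_i)$ is precisely the body (as a list) of some clause of $\mb{P}$ whose head is $A_i$ (recall the definition of $p$ in Section~\ref{ssec:groundcase}, adapted along Convention~\ref{conv:ListToPowGround}). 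This is exactly clause~(2) of Definition~\ref{DEF:or_par_tree_ground}, so $\Phi$ is a coalgebra morphism.

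Finally, diagram~\eqref{diag:bialgebraSemGround} (bottom square) shows that $\bb{\cdot}_{\pg{p}}$ is also an $\li\At \times \p_f(\cdot)$-coalgebra morphism from $(\li\At, <\id_{\li\At}, \pg{p}>)$ into $\cof{\p_f}(\li\At)$. By finality of $\cof{\p_f}(\li\At)$ there is only one such morphism, so $\bb{\cdot}_{\pg{p}} = \Phi$ and hence $\bb{l}_{\pg{p}} = T_l$ for every $l \in \li\At$. The main bookkeeping point, which requires a little care, is the combinatorial identification of $\pg{p}(l)$ with the set of children-labels prescribed by Definition~\ref{DEF:or_par_tree_ground}; once the distributive law $\lambda$ is expanded as $[X_1,\dots,X_k] \mapsto \{[x_1,\dots,x_k] \mid x_i \in X_i\}$ and composed with flattening $\lconc$, this identification is routine. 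Everything else is an immediate application of finality, exactly as in Proposition~\ref{prop:adequacy_sat}.
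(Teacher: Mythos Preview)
Your proposal is correct and follows exactly the approach the paper indicates: the paper does not give a proof beyond noting that it is ``analogous to Proposition~\ref{prop:adequacy_sat}'', and your argument is precisely that analogy spelled out --- identify elements of $\cof{\p_f}(\li\At)$ as labeled trees, check that $l \mapsto T_l$ is an $\li\At \times \p_f(\cdot)$-coalgebra morphism by unfolding $\pg{p}$, and conclude by finality. The only cosmetic slip is the word ``multiset'' for the children (it is a set, since the functor is $\p_f$), but this does not affect the argument.
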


\begin{comment}
\begin{itemize}
  \item give distributive law $\lambda \: \li \p \To \p \li$.
  \item Build the g.p.c diagram.
  \item Give the SOS rule corresponding to $\pg{p}$.
  \item Construct bialgebra on the cofree $\cof{\p}(\li\At)$. Explain operations $\lconc = \mu^{\li}$ and $\lconcc \: \li(\cof{\p}(\li\At)) \to \cof{\p}(\li\At)$.
\end{itemize}
\end{comment}

We are now in position to provide a translation between the two notions of tree associated respectively with the semantics $\bb{\cdot}_p$ and $\bb{\cdot}_{\pg{p}}$.

\begin{construction}\label{rmk:ANDORtoORtrees} There is a canonical representation of $\wedge\vee$-trees as $\vee$-trees given as follows. First recall that the domain $\cof{\p_f \li}(\At)$ of $\wedge\vee$-trees is the final $ \At \times \p_f\li(\cdot)$-coalgebra, say with structure given by $u \: \cof{\p_f \li}(\At) \xrightarrow{\cong} \At \times \p_f \li \big(\cof{\p_f \li}(\At)\big)$ (\emph{cf.} Construction~\ref{Constr:cofree_ground}). Let $d\ \df (\eta^{\li}_{\At} \times \id) \after u$ be the extension of $u$ to an $\li\At \times \p_f\li(\cdot)$-coalgebra. We now define an $\li\At \times \p_f(\cdot)$-coalgebra structure $\pgg{d}$ on $\li (\cof{\p_f \li}(\At))$ in the way prescribed by Proposition~\ref{prop:bialgfreemonad}:
 \[\xymatrix@R=.3cm{
\li (\cof{\p_f \li}(\At)) \ar[dd]_-{\li d} \ar[rr]^{\pgg{d}} && \li\At \times \p_f \li \big(\cof{\p_f \li}(\At)\big)\\
&\rotatebox[origin=c]{90}{\df} & \\
 \li \Big(\li\At \times \p_f\li \big(\cof{\p_f \li}(\At)\big)\Big) \ar[rr]^{\lambda'_{\cof{\p_f \li}(\At)}} && \li\At \times \p_f \li \li \big(\cof{\p_f \li}(\At)\big) \ar[uu]_{\id_{\li\At} \times \p_f(\lconc)}
}\]
  Then we use finality of $\cof{\p_f}(\li\At)$ (the domain of $\vee$-trees) to obtain our representation map $r \df \bb{\cdot}_{\pgg{d}} \after \eta^{\li}_{\cof{\p_f \li}(\At)}$ as in the following commutative diagram.
\begin{equation}\label{diag:representationmapground}
\vcenter{\xymatrix{
\cof{\p_f \li}(\At) \ar[d]_{u} \ar@/_5pc/[dd]_{d} \ar[r]^{\eta^{\li}_{\cof{\p_f \li}(\At)}} & \li (\cof{\p_f \li}(\At)) \ar[ddl]^{\pgg{d}} \ar@{-->}[rr]^{\bb{\cdot}_{\pgg{d}}} && \cof{\p_f}(\li\At) \ar[dd]^{c} \\
\At \times \p_f \li \big(\cof{\p_f \li}(\At)\big) \ar[d]_{ \eta^{\li}_{\At}\times \id }\\
\li\At \times \p_f \li \big(\cof{\p_f \li}(\At)\big) \ar[rrr]^{\id_{\li\At}\times \p_f(\bb{\cdot}_{\pgg{d}})} &&& \li\At \times \p_f \cof{\p_f}(\li\At)
}}
\end{equation}
\end{construction}

The representation map $r$ is a well-behaved translation between the semantics given by $\bb{\cdot}_p$ and the one given by $\bb{\cdot}_{\pg{p}}$, as shown by the following property.
\begin{proposition}\label{prop:reprIsWellBehaved} For all $A\in \At$, $r(\bb{A}_p) = \bb{[A]}_{\pg{p}}$. \end{proposition}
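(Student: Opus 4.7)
The plan is to reduce this pointwise statement to an equality of maps $\li\At \to \cof{\p_f}(\li\At)$ and invoke the universal property of the cofree $\p_f$-coalgebra. First I would use naturality of the unit $\eta^{\li}$ of the list monad to rewrite
\[ r \after \bb{\cdot}_p \;=\; \bb{\cdot}_{\pgg{d}} \after \eta^{\li}_{\cof{\p_f\li}(\At)} \after \bb{\cdot}_p \;=\; \bb{\cdot}_{\pgg{d}} \after \li\bb{\cdot}_p \after \eta^{\li}_{\At}. \]
Since $\bb{[A]}_{\pg{p}} = (\bb{\cdot}_{\pg{p}} \after \eta^{\li}_{\At})(A)$, it is enough to prove the stronger identity
\[ \bb{\cdot}_{\pgg{d}} \after \li\bb{\cdot}_p \;=\; \bb{\cdot}_{\pg{p}} \colon \li\At \longrightarrow \cof{\p_f}(\li\At). \]

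By the bottom square of diagram~\eqref{diag:bialgebraSemGround}, the right-hand side is an $\li\At \times \p_f(\cdot)$-coalgebra morphism from $(\li\At, \langle\id_{\li\At}, \pg{p}\rangle)$ to the final coalgebra $(\cof{\p_f}(\li\At), c)$, and is uniquely determined by this property. On the other hand the left-hand side is the composition of $\li\bb{\cdot}_p$ with the coalgebra morphism $\bb{\cdot}_{\pgg{d}}$ out of $(\li\cof{\p_f\li}(\At), \pgg{d})$. Hence by finality the identity will follow as soon as I show that $\li\bb{\cdot}_p$ itself is a coalgebra morphism from $(\li\At, \langle\id_{\li\At}, \pg{p}\rangle)$ to $(\li\cof{\p_f\li}(\At), \pgg{d})$, namely
\[ \pgg{d} \after \li\bb{\cdot}_p \;=\; \langle\id_{\li\At}, \p_f(\li\bb{\cdot}_p) \after \pg{p}\rangle. \]

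The last equation is proved by a diagram chase. Unfolding $\pgg{d}$ via Proposition~\ref{prop:bialgfreemonad} yields $\pgg{d} \after \li\bb{\cdot}_p = (\id \times \p_f(\lconc)) \after \lambda' \after \li(d \after \bb{\cdot}_p)$. Using that $\bb{\cdot}_p$ is a coalgebra morphism and that $d = (\eta^{\li}_{\At} \times \id) \after u$, one computes $d \after \bb{\cdot}_p = \langle\eta^{\li}_{\At}, \p_f(\li\bb{\cdot}_p) \after p\rangle$. Expanding $\lambda' = (\lconc \times \lambda) \after \langle\li\pi_1, \li\pi_2\rangle$ separates the result into two components: the first collapses to $\lconc \after \li\eta^{\li}_{\At} = \id_{\li\At}$ by the monad unit law, while the second, after pushing $\p_f(\lconc)$ past $\lambda$ via naturality of $\lambda \colon \li\p_f \To \p_f\li$ and then past $\li\li\bb{\cdot}_p$ via naturality of $\mu^{\li} = \lconc$, reduces to $\p_f(\li\bb{\cdot}_p) \after \p_f(\lconc) \after \lambda_{\li\At} \after \li p$, which is exactly $\p_f(\li\bb{\cdot}_p) \after \pg{p}$ by the definition of $\pg{p}$ in Construction~\ref{constr:paralground}. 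The main difficulty is pure bookkeeping: keeping track of the various indices at which $\lambda$, $\lconc$ and $\eta^{\li}$ appear while shuttling morphisms through a handful of naturality squares.
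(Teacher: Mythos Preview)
Your proof is correct and follows essentially the same strategy as the paper: reduce via naturality of $\eta^{\li}$ to the equality $\bb{\cdot}_{\pgg{d}} \after \li\bb{\cdot}_p = \bb{\cdot}_{\pg{p}}$, then appeal to finality of $\cof{\p_f}(\li\At)$ once $\li\bb{\cdot}_p$ is known to be an $\li\At \times \p_f(\cdot)$-coalgebra morphism. The one difference is in how that last fact is obtained: you verify it by an explicit diagram chase, whereas the paper simply invokes the functoriality clause of Proposition~\ref{prop:bialgfreemonad}, which says that the assignment $(X,q) \mapsto (\T X,\mu^{\T}_X,\pg{q})$ sends the $\li\At\times\p_f\li(\cdot)$-coalgebra morphism $\bb{\cdot}_p$ to the $\lambda'$-bialgebra morphism $\li\bb{\cdot}_p$ automatically---so no bookkeeping with $\lambda$, $\lconc$, and $\eta^{\li}$ is required.
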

\begin{proof} By construction $\bb{\cdot}_p \: \At \to \cof{\p_f \li}(\At)$ is a $\p_f\li$-coalgebra morphism and thus clearly also an $\li\At \times \p_f\li(\cdot)$-coalgebra morphism. By Proposition \ref{prop:bialgfreemonad} the canonical mapping of an $\li\At \times \p_f\li(\cdot)$-coalgebra to a  $\lambda'$-bialgebra
\[
\xymatrix@R=.5cm{
\At \ar[dd]^{<\eta^{\li}_{\At},p>} &&&& \li\li\At \ar[d]^{\lconc}\\
&& \mapsto && \li\At \ar[d]^{\pgg{<\eta^{\li}_{\At},p>}}\\
\li\At \times \p_f\li(\At)&&&& \li\At \times \p_f\li\At\\
\cof{\p_f \li}(\At) \ar[dd]^{d} &&&& \li\li\cof{\p_f \li}(\At) \ar[d]^{\lconc}\\
&& \mapsto && \li\cof{\p_f \li}(\At) \ar[d]^{\pgg{d}}\\
\li\At \times \p_f\li(\cof{\p_f \li}(\At)) &&&& \li\At \times \p_f\li\cof{\p_f \li}(\At)
}
\]
is functorial, meaning that $\li \bb{\cdot}_p \: \li \At \to \li (\cof{\p_f \li}(\At))$ is a $\lambda'$-bialgebra morphism and thus in particular a morphism of $\li\At \times \p_f(\cdot)$-coalgebrae. By construction $\bb{\cdot}_{\pg{p}}$ and $\bb{\cdot}_{\pgg{d}}$ are also $\li\At \times \p_f(\cdot)$-coalgebra morphisms. Therefore $\bb{\cdot}_{\pgg{d}} \circ \li \bb{\cdot}_p = \bb{\cdot}_{\pg{p}}$ by finality of $\cof{\p_f}(\li\At)$. Precomposing both sides with $\eta^{\li}_{\At}$ yields the statement of the proposition:
 \[\bb{\cdot}_{\pg{p}}\circ\eta^{\li}_{\At} = \bb{\cdot}_{\pgg{d}} \circ \li \bb{\cdot}_p \circ \eta^{\li}_{\At} = \bb{\cdot}_{\pgg{d}} \circ \eta^{\li}_{\cof{\p_f \li}(\At)} \circ \bb{\cdot}_p = r \circ \bb{\cdot}_p.\]
\end{proof}

In order to study soundness and completeness of $\bb{\cdot}_{\pg{p}}$, we give a notion of refutation for $\vee$-trees.

\begin{definition}\label{Def:ParGround_subtree} Let $T \in \cof{\p_f}(\li\At)$ be an $\vee$-tree. A \emph{derivation subtree} of $T$ is a sequence of
nodes $s_1, s_2, \dots $ such that $s_1$ is the root of $T$ and $s_{i+1}$ is a child of $s_i$.
A \emph{refutation subtree} is a finite derivation subtree $s_1, \dots, s_n $ where the last node $s_n$ is labeled with the empty list.
 \end{definition}

In fact, derivation subtrees of $\vee$-trees have no branching: they are just paths starting from the root. This is coherent with the previously introduced notions of subtree (\emph{cf. } Definition \ref{Def:subtree_Power}): there the only branching allowed was given by and-parallelism, which in the case of $\vee$-tress has been internalized inside the node labels. For refutation subtrees, the intuition is that they represents paths of computation where all atoms in the initial goal $[A_1,\dots,A_n]$ have been refuted, whence eventually the current goal becomes the empty list.

\begin{comment}
\begin{definition} Let $T$ be an $\vee$-tree in $\cof{\p_f}(\li\At)$. A subtree $T'$ of $T$ is a \emph{derivation subtree} if it satisfies the following conditions:
\begin{enumerate}
  \item the root of $T'$ is the root of $T$;
  \item if an and-node of $T$ belongs to $T'$, then at most one of its children belongs to $T'$.
\end{enumerate}
A \emph{refutation subtree} is a finite derivation subtree where the (unique) leaf is labeled with the empty list.
 \end{definition}
 \end{comment}

%We now show that an $\vee$-tree $\bb{[a_1,\dots,a_n]}_{\pg{p}}$ yields a refutation if each atom $a_1,\dots,a_n$ in the goal, taken individually, is refutable with respect to the semantics given by $\bb{\cdot}_{p}$.
\begin{theorem}[Soundness and Completeness] Let $A_1,\dots,A_n$ be atoms in $\At$. Then $\bb{[A_1,\dots,A_n]}_{\pg{p}}$ has a refutation subtree if and only if $\bb{A_i}_{p}$ has a refutation subtree for each $A_i$.
\end{theorem}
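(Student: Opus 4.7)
The plan is to first exploit $\wedge$-compositionality (Theorem~\ref{thm:ANDcompo}) to decompose $\bb{[A_1,\dots,A_n]}_{\pg{p}}$ as the $\lconcc$-concatenation of the $\vee$-trees of the singleton goals, and then to reduce the remaining singleton case to refutability in $\bb{A_i}_p$ via the representation map $r$ of Proposition~\ref{prop:reprIsWellBehaved}.

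Concretely, by Theorem~\ref{thm:ANDcompo} we have $\bb{[A_1,\dots,A_n]}_{\pg{p}} = \bb{[A_1]}_{\pg{p}} \lconcc \cdots \lconcc \bb{[A_n]}_{\pg{p}}$. As described in Example~\ref{ex:comp_ground}, the tree $T_1 \lconcc \cdots \lconcc T_n$ has root label the concatenation of the root labels of the $T_i$, and its children are precisely the tuples $T'_1 \lconcc \cdots \lconcc T'_n$ with $T'_i$ a child of $T_i$. Consequently, a path in the left-hand side projects to $n$ paths of equal length in the $\bb{[A_i]}_{\pg{p}}$, whose terminal labels concatenate to the terminal label of the original path. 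Since $l_1 \lconc \cdots \lconc l_n = \elist$ iff every $l_i = \elist$, a refutation subtree of $\bb{[A_1,\dots,A_n]}_{\pg{p}}$ projects to a refutation subtree in each $\bb{[A_i]}_{\pg{p}}$. For the converse, given refutation subtrees $P_1,\dots,P_n$ of possibly different lengths, one pads each to the common length $\max_i |P_i|$ by extending its $\elist$-leaf with further $\elist$-labeled children --- available since $\pg{p}(\elist) = \{\elist\}$ (Example~\ref{ex:parallelizationground}) --- and recombines them via $\lconcc$.

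This reduces the theorem to the equivalence: $\bb{A_i}_p$ has a refutation subtree iff $\bb{[A_i]}_{\pg{p}}$ does. The plan is to derive this from Proposition~\ref{prop:reprIsWellBehaved} by unfolding Construction~\ref{rmk:ANDORtoORtrees}: under $r$, a refutation subtree of the $\wedge\vee$-tree $\bb{A_i}_p$ --- which selects one clause per $\wedge$-node and terminates every branch at a $\vee$-leaf corresponding to an empty-body clause --- corresponds to the path in the $\vee$-tree $\bb{[A_i]}_{\pg{p}}$ obtained by collecting, at each depth, the still-active atoms into a single list label. A straightforward induction on depth --- using once more the $\pg{p}(\elist) = \{\elist\}$ padding to reconcile branches of unequal length within a single $\wedge\vee$-refutation subtree --- shows that this correspondence is bijective on refutation subtrees, thereby closing the argument.

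The main obstacle is precisely this length-matching in the padding arguments: a $\wedge\vee$-refutation subtree is a possibly unbalanced finite tree, whereas a $\vee$-refutation subtree is a single path of fixed length. The key fact reconciling the two, used in both the $\lconcc$-combination across components and the synchronous matching across depths within a single component, is that $\elist$ is a fixed point of $\pg{p}$, so any branch or component that has already been refuted may be extended trivially at deeper levels without affecting refutability.
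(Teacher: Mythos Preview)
Your proposal is correct and follows essentially the same route as the paper: decompose via $\wedge$-compositionality (Theorem~\ref{thm:ANDcompo}), reduce the singleton case through the representation map $r$ (Proposition~\ref{prop:reprIsWellBehaved}), and check that both $\lconcc$ and $r$ preserve and reflect refutation subtrees. The paper's proof leaves the latter checks as one-line remarks, whereas you spell out the padding argument via $\pg{p}(\elist)=\{\elist\}$ that makes them work; this is exactly the content the paper elides.
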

\begin{proof} The statement is given by the following derivation:
\begin{align*}
\bb{A_i}_{p} \text{ has a refutation subtree for each }A_i  \text{ iff } & r(\bb{A_i}_{p}) \text{ has a refutation subtree for each }A_i \\
 \text{ iff } & \bb{[A_i]}_{\pg{p}} \text{ has a refutation subtree for each }A_i \\
\text{ iff }  & \bb{[A_1]}_{\pg{p}}\lconcc \dots \lconcc \bb{[A_n]}_{\pg{p}} \text{ has a refutation subtree } \\
\text{ iff } & \bb{[A_1,\dots,A_n]}_{\pg{p}} \text{ has a refutation subtree}
\end{align*}
The first and the third equivalence are given by checking that the property of having a refutation subtree is preserved and reflected both by the representation map $r \: \cof{\p_f\li}(\At) \to \cof{\p_f}(\li\At)$ (\emph{cf.} Construction \ref{rmk:ANDORtoORtrees}) and by the concatenation operation $\lconcc$ on $\vee$-trees. The second equivalence is given by Proposition \ref{prop:reprIsWellBehaved} and the last one by Theorem \ref{thm:ANDcompo}.
\end{proof}

\begin{example}\label{ex:representation}
Consider the ground logic program in Example \ref{ex:ground} and compare the $\wedge\vee$-tree $\bb{\predp (b,b)}_{p}$ with
the $\vee$-tree $\bb{[\predp (b,b)]}_{\pg{p}}$ --- that is, $r(\bb{\predp (b,b)}_{p})$ --- depicted below on the left and right, respectively.

{\scriptsize
\[
\xymatrix@C=.1cm@R=.2cm{
                 & \ar@{-}[ld] \predp (b,b) \ar@{-}[rd] & & &\\
  \bullet \ar@{-}[d]  &                              & \bullet \ar@{-}[ld] \ar@{-}[rd] & &\\
  \predq (c) \ar@{-}[d] &      \predp (b,a)          &                   & \predp (b,c) \ar@{-}[d] &\\
 \bullet           &                            &                   & \bullet \ar@{-}[ld] \ar@{-}[d] \ar@{-}[rd] & \\
             &                            & \predq (a)        & \predq (b)                   & \predq (c) \ar@{-}[d] \\
             &                            &                   &                              & \bullet
 }
\qquad \qquad \qquad \xymatrix{\\ \ar@{|->}[r]^{r}&}\qquad \qquad \qquad
\xymatrix@C=.1cm@R=.2cm{
                 & \ar@{-}[ldd] [ \predp (b,b) ] \ar@{-}[rdd] & & &\\ \\
  [ \predq(c) ] \ar@{-}[d]  &                              & [\predp (b,a), \predp (b,c) ] \\
  \elist \ar@{-}[d] \\
  \elist \ar@{-}[d] \\
  \dots}
\]}
Following Convention \ref{conv:ListToPowGround}, children of $\vee$-nodes form a list which is implicit in the representation above: for instance, in the tree on the left $\predp (b,a)$ and $\predp (b,c)$ are displayed according to their order in $[\predp (b,a),\predp (b,c)] \in p(\predp (b,b))$.

This convention is instrumental in illustrating the behaviour of the representation map $r$: all the children $t_1,\dots,t_k$ of an $\vee$-nodes (labeled with $\bullet$) of an $\wedge\vee$-trees $T$ are grouped in $r(T)$ into a single node whose label lists all the atoms labeling $t_1,\dots,t_k$.
For instance, the rightmost child of $\bb{ \predp (b,b)}_p$
becomes in $\bb{[ \predp (b,b)]}_{\pg{p}} $ a node labeled with $[\predp (b,a), \predp (b,c)]$.
It is also worth to note that the whole subtree reachable from $\bb{ \predp (b,c)}_p$ is pruned: since $\pg{p}(\predp (b,a))$ is the empty set, then also $\pg{p}([\predp (b,a), \predp (b,c) ])$ is empty. Intuitively, $\predp (b,c)$ should be proved in conjunction with $\predp (b,a)$ which has no proof and therefore the (parallel) resolution of $[\predp (b,a), \predp (b,c) ]$ can stop immediately.

Instead, the unique refutation subtree of $\bb{ \predp (b,b)}_p$ (that is $\predp (b,b) - \bullet - \predq (c) - \bullet$) is preserved in $r(\bb{ \predp (b,b)}_p)$. Indeed,
$[ \predp (b,b)] - [\predq (c)] - \elist$ is a refutation subtree of $\bb{ \predp (b,b)}_{\pg{p}}$.
\end{example}

\section{Bialgebraic Semantics of Goals: the General Case}\label{sec:pargeneral}

In the sequel we generalize the bialgebraic semantics for goals from ground to arbitrary logic programs. Our approach is to extend the saturated semantics $\bb{\cdot}_{p^{\sharp}}$ in $\prsh{\Lw}$ introduced in Section \ref{SEC:SemLogProg}. This will yield compositionality both with respect to the substitutions in the index category $\Lw$ (\emph{cf.} Theorem \ref{thm:compo}) and with respect to the concatenation of goals (extending the result of Theorem \ref{thm:ANDcompo}).
%In the sequel we generalize the parallelized semantics $\bb{\cdot}_{\pg{p}}$ from ground to arbitrary logic programs. As in the ground case we built our framework upon the semantics $\bb{\cdot}_p$ in $\Set$, now we consider the saturated semantics $\bb{\cdot}_{p^{\sharp}}$ in $\prsh{\Lw}$, as presented in Section \ref{SEC:SemLogProg}. This will allow our approach to be compositional both with respect to the substitutions in the index category $\Lw$ (\emph{cf.} Theorem \ref{thm:compo}) and with respect to the concatenation of goals (extending the result of Theorem \ref{thm:ANDcompo}).

\begin{convention} As we did for ground programs (\emph{cf.} Convention \ref{conv:ListToPowGround}), also for arbitrary logic programs we intend to model goals as lists of atoms. This requires a mild reformulation of our framework for saturated semantics. For this purpose, we introduce the extension $\liftt{\li} \: \prsh{|\Lw|} \to \prsh{|\Lw|}$ of $\li \: \Set \to \Set$ as in Definition \ref{def:liftpreshgen}. Since $\li$ is a monad, then $\liftt{\li}$ is also a monad by Proposition \ref{prop:liftpreservemonads}. A logic program $\mb{P}$ is now encoded in $\prsh{|\Lw|}$ as a coalgebra $p \: \U\At \to \liftt{\p_c}\liftt{\li}\U (\At)$ (instead of $p \: \U\At \to \liftt{\p_c}\liftt{\p_f}\U (\At)$) and its saturation in $\prsh{\Lw}$ as a colgebra $p^{\sharp} \: \At \to \K \liftt{\p_c}\liftt{\li}\U (\At)$. The saturated semantics of $\mb{P}$ (Construction \ref{Constr:coalgComonad_sat}) is given by $\bb{\cdot}_{p^{\sharp}}\: \At \to \cof{\K \liftt{\p_c}\liftt{\li}\U}(\At)$.
%Finally, we now model the input goal of SLD-resolution as a list of atoms instead of a set.
All results stated in Section \ref{SEC:SemLogProg}, \ref{SEC:Desaturation} and \ref{SEC:Completeness} continue to hold in this setting.
\end{convention}

Our first task is to generalize Construction \ref{constr:paralground}. The extension $\pa{p}^{\sharp}$ of a saturated logic program $p^{\sharp} \: \At \to \K \liftt{\p_c}\liftt{\li}\U (\At)$ to a bialgebra will depend on a distributive law $\delta$, yet to be defined. The domain of $\pa{p}^{\sharp}$ will be the presheaf $\lift{\li}\At \in \prsh{\Lw}$ of goals, where $\lift{\li} \: \prsh{\Lw} \to \prsh{\Lw}$ extends $\li \: \Set \to \Set$ as in Definition~\ref{def:liftpreshgen}. Since $\li$ is a monad, by Proposition \ref{prop:liftpreservemonads} $\lift{\li}$ is also a monad with unit and multiplication given componentwise by the ones of $\li$.

For ground logic programs, the definition of a bialgebra involves the construction of a $\p_f$-coalgebra out of a $\p_f\li$-coalgebra. For arbitrary logic programs there is a type mismatch, because in $p^{\sharp} \: \At \to \K \liftt{\p_c}\liftt{\li}\U (\At)$ the functor $\lift{\li}$ does not apply to $\At$. However, this can be easily overcome by observing the following general property of the adjunction $\U \dashv \K$.

\begin{proposition}\label{prop:liftingcommuteswithU} Let $\F \: \Set \to \Set$ be a functor and $\liftt{\F}\: \prsh{|\Lw|} \to \prsh{|\Lw|}$, $\lift{\F}\: \prsh{\Lw} \to \prsh{\Lw}$ its extensions to presheaf categories, given as in Definition \ref{def:liftpreshgen}. Then $\liftt{F}\U = \U\lift{\F}$.
\end{proposition}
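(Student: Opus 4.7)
The plan is to prove the equality by unfolding the definitions of $\U$, $\liftt{\F}$, and $\lift{\F}$, and observing that both composites amount to the same operation of post-composing a presheaf with $\F$ and then restricting along $\iota \: |\Lw| \hookrightarrow \Lw$.

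First I would check the action on objects. Given a presheaf $\G \: \Lw \to \Set$, the forgetful functor yields $\U(\G) = \G \after \iota \: |\Lw| \to \Set$, and hence by Definition~\ref{def:liftpreshgen}
\[
\liftt{\F}\U(\G) \;=\; \F \after \G \after \iota.
\]
Going the other way around, $\lift{\F}(\G) = \F \after \G \: \Lw \to \Set$, and restricting along $\iota$ gives
\[
\U\lift{\F}(\G) \;=\; \F \after \G \after \iota,
\]
so the two composites agree on objects.

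Next I would verify the action on arrows. Given a natural transformation $\alpha \: \G \To \FH$ in $\prsh{\Lw}$, by Definition~\ref{def:liftpreshgen} the component of $\lift{\F}(\alpha)$ at $n \in \Lw$ is $\F(\alpha_n)$. Applying $\U$ simply restricts the indexing family to objects of $|\Lw|$, yielding the family $\bigl(\F(\alpha_n)\bigr)_{n \in |\Lw|}$. On the other side, $\U(\alpha)$ is the restricted family $(\alpha_n)_{n \in |\Lw|}$, and $\liftt{\F}$ applied to it gives again $\bigl(\F(\alpha_n)\bigr)_{n \in |\Lw|}$. Thus the two functors coincide on morphisms as well.

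Since $\iota$ is the identity on objects, no step requires any non-trivial calculation and there is no real obstacle: the statement is essentially a bookkeeping consequence of the fact that both extension constructions are defined pointwise as post-composition with $\F$, while $\U$ is defined by pre-composition with $\iota$, and these two operations trivially commute.
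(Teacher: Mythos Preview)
Your proof is correct and follows essentially the same approach as the paper: both arguments unfold the definitions of $\U$, $\lift{\F}$, and $\liftt{\F}$ to observe that pre-composition with $\iota$ commutes with post-composition with $\F$, yielding $\liftt{\F}\U\G = \F \after \G \after \iota = \U\lift{\F}\G$. The paper expresses this via a commuting diagram and a one-line chain of equalities, noting that the argument on arrows is analogous, whereas you spell out the arrow case explicitly; this is a presentational difference only.
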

\begin{proof} Fix $\G \in \prsh{\Lw}$. By definition of $\U$, $\lift{\F}$ and $\liftt{\F}$ the following diagram commutes, where $\iota \: |\Lw| \hookrightarrow \Lw$ is the inclusion functor.
\[\xymatrix{
|\Lw| \ar@/^1.5pc/[rr]^{\liftt{\F}\U\G} \ar[d]_{\iota} \ar[r]^{\U\G} & \Set \ar[r]^{\F} & \Set \\
\Lw \ar[ur]_{\G} \ar@/_1pc/[urr]_{\lift{\F}\G}
}\]
This gives the following derivation: $\liftt{\F}\U\G = \F \U \G = \F \G \iota = \lift{\F} \G \iota = \U \lift{\F}\G$. The reasoning on arrows of $\prsh{\Lw}$ is analogous.
\end{proof}

By Proposition \ref{prop:liftingcommuteswithU}, we can consider $p^{\sharp} \: \At \to \K \liftt{\p_c}\liftt{\li}\U (\At)$ as having the type $p^{\sharp} \: \At \to \K \U \lift{\p_c}\lift{\li} (\At)$. Thus the bialgebra constructed out of $p^{\sharp}$ will be formed by a $\K \U \lift{\p_c}$-coalgebra $\pa{p}^{\sharp} \: \lift{\li}\At \to \K \U \lift{\p_c}\lift{\li} (\At)$. In order to define it by applying Proposition \ref{prop:bialgfreemonad}, the next step is to define $\delta$ as a distributive law of type $\lift{\li} (\K \U \lift{\p_c}) \To (\K \U \lift{\p_c})\lift{\li}$.

For this purpose, our strategy will be to construct $\delta$ as the combination of different distributive laws. First, recall the distributive law $\lambda \: \li \p_f \To \p_f \li$ of monads introduced in Section \ref{sec:parground}: we override notation by calling $\lambda$ the distributive law of type $\li \p_c \To \p_c \li$ defined as the one involving $\p_f$. By Proposition \ref{prop:liftpreservemonads}, $\lambda \: \li \p_c \To \p_c \li$ extends to a distributive law of monads $\lift{\lambda} \: \lift{\li} \lift{\p_c} \To \lift{\p_c} \lift{\li}$ in $\prsh{\Lw}$. Next, we introduce two other distributive laws: one for $\lift{\p_c}$ over $\K\U$ and the other for $\lift{\li}$ over $\K\U$. In fact, because $\K\U$ is a monad arising from the adjunction $\U \dashv \K$ and extensions commute with $\U$ (Proposition \ref{prop:liftingcommuteswithU}), we can let
\begin{eqnarray*}
\varphi \: & \lift{\li} (\K\U)\ \To\  (\K\U)\lift{\li} \\
\psi \: & \lift{\p_c} (\K\U)\  \To\  (\K\U)\lift{\p_c}
\end{eqnarray*}
be defined in a canonical way, using the following general result.

\newcommand{\propDistrLawFromGenLiftings}{
 Let $\catC$ and $\catD$ be categories with an adjunction $\U \dashv \K$ for $ \U\colon \catC \to \catD$ and $\K\colon \catD \to \catC$.
 Let $\lift{\T}\colon \catC \to \catC$ and $\liftt{\T}\colon \catD \to \catD$ be two monads such that $\U \lift{\T} =\liftt{\T} \U$.
 Then, there is a distributive law of monads $\lambda \colon \lift{\T} (\K \U) \Rightarrow (\K \U) \lift{\T}$ defined for all $X\in \catC$ by
} \newcommand{\propDistrLawFromGenLiftingsBis}{
 where $(\cdot)^{\flat}_{X,Z}\: \catC[X,\K Z] \to \catD[\U X, Z]$ and $(\cdot)^{\sharp}_{X,Z}\colon \catD[\U X, Z] \to \catC[X,  \K Z]$ are the components of the canonical bijection given by the adjunction $\U \dashv \K$.
 }
\begin{proposition}\label{prop:distrlawfromgenliftings}
\propDistrLawFromGenLiftings
 \begin{equation}\label{eq:candistrlaw}
 (\liftt{\T}(\id_{\K\U X})^{\flat})^{\sharp}
 \end{equation}
 \propDistrLawFromGenLiftingsBis
\end{proposition}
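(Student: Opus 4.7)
My plan is to exploit the naturality of the adjunction bijection $(\cdot)^\sharp\colon \catD[\U X, Z] \cong \catC[X, \K Z]$ (with inverse $(\cdot)^\flat$) in order to move the four distributive-law axioms from $\catC$ down to $\catD$, where they become short calculations involving only $\liftt{\T}$, the counit $\epsilon$ of the adjunction, and the monad structures. I will also make implicit use of the compatibility conditions $\U\eta^{\lift{\T}} = \eta^{\liftt{\T}}\U$ and $\U\mu^{\lift{\T}} = \mu^{\liftt{\T}}\U$, which in the intended application hold automatically since both monads arise as componentwise extensions of a single monad on $\set$ (see Proposition~\ref{prop:liftpreservemonads} and Definition~\ref{def:liftpreshgen}).

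The first step is to simplify the transpose of $\lambda$. Using $g^\flat = \epsilon \circ \U g$ together with naturality of $\epsilon$ applied to the morphism $\liftt{\T}\epsilon_{\U X}$ and the triangle identity $\epsilon_{\U Y}\circ \U\eta_Y = \id_{\U Y}$, a direct computation gives
\[
\lambda_X^\flat\ =\ \epsilon_{\U\lift{\T} X}\circ \U\K(\liftt{\T}\epsilon_{\U X})\circ \U\eta_{\lift{\T}\K\U X}\ =\ \liftt{\T}\epsilon_{\U X}\circ \epsilon_{\U\lift{\T}\K\U X}\circ \U\eta_{\lift{\T}\K\U X}\ =\ \liftt{\T}\epsilon_{\U X}\text{,}
\]
where the identification $\U\lift{\T} = \liftt{\T}\U$ is used both on objects and on arrows. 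From this, naturality of $\lambda$ in $X$ follows by a parallel manipulation showing that both transposes $(\K\U\lift{\T}f\circ \lambda_X)^\flat$ and $(\lambda_Y\circ \lift{\T}\K\U f)^\flat$ reduce to $\liftt{\T}(\U f\circ \epsilon_{\U X})$, by naturality of $\epsilon$.

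With $\lambda^\flat = \liftt{\T}\epsilon\U$ in hand, each of the four axioms transposes to a routine identity in $\catD$. The unit axiom for $\K\U$ transposes to $\liftt{\T}(\epsilon_{\U X}\circ \U\eta_X) = \liftt{\T}(\id_{\U X}) = \id$, which holds by the triangle identity. The unit axiom for $\lift{\T}$ transposes, after applying $\U\eta^{\lift{\T}} = \eta^{\liftt{\T}}\U$ on the left and naturality of $\epsilon$ on the right, to the naturality square of $\eta^{\liftt{\T}}$ at $\epsilon_{\U X}$. The multiplication axiom for $\K\U$ (using $\mu^{\K\U} = \K\epsilon\U$) transposes, on both sides, to $\liftt{\T}(\epsilon_{\U X}\circ \epsilon_{\U\K\U X})$, after repeatedly invoking naturality of $\epsilon$. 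The multiplication axiom for $\lift{\T}$ transposes, on both sides, to $\mu^{\liftt{\T}}_{\U X}\circ \liftt{\T}\liftt{\T}\epsilon_{\U X}$, using $\U\mu^{\lift{\T}} = \mu^{\liftt{\T}}\U$ and naturality of $\mu^{\liftt{\T}}$.

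The main obstacle is essentially bookkeeping: one must track the identification $\U\lift{\T} = \liftt{\T}\U$ consistently throughout, and be precise about where the compatibility of units and multiplications is being invoked. No genuinely hard step is involved; the transpose strategy trades four tedious diagram chases in $\catC$ for four one-line checks in $\catD$.
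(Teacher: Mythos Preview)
Your proof is correct and takes a genuinely different route from the paper's. The paper does not verify the four distributive-law axioms directly; instead it invokes the standard correspondence (their Proposition~\ref{LiftProp}, citing \cite{tanaka2005pseudo,Mulry93}) between distributive laws of a monad $\T$ over a monad $\M$ and liftings of $\T$ to the Kleisli category $\Kl(\M)$. Concretely, they use the comparison functor $\FH\colon \Kl(\K\U)\to\catD$ satisfying $\FH\J=\U$, define a functor $\lifts{\T}$ on $\Kl(\K\U)$ by $\lifts{\T}(f)\df(\liftt{\T}f^\flat)^\sharp$ on Kleisli arrows, equip it with unit $\J\eta^{\lift{\T}}$ and multiplication $\J\mu^{\lift{\T}}$, and then verify $\lifts{\T}\J=\J\lift{\T}$ via a short chain of equalities. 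The distributive law is then read off from the lifting as $\lifts{\T}(\iota_X)$ where $\iota_X$ corresponds to $\id_{\K\U X}$, recovering the formula~\eqref{eq:candistrlaw}.

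Your approach is more elementary and self-contained: once you observe that $\lambda_X^\flat=\liftt{\T}\epsilon_{\U X}$ (which, incidentally, follows in one step from $(\id_{\K\U X})^\flat=\epsilon_{\U X}$, so your longer computation for this is unnecessary), each axiom becomes a one-line naturality or triangle-identity check in $\catD$. The paper's approach buys a conceptual explanation---the law is canonical because it comes from a Kleisli lifting---and outsources the axiom-checking to a cited folklore result. Your approach avoids the external dependency and makes explicit exactly which compatibility conditions are used. On that last point, you are right to flag that $\U\eta^{\lift{\T}}=\eta^{\liftt{\T}}\U$ and $\U\mu^{\lift{\T}}=\mu^{\liftt{\T}}\U$ are needed but not literally stated in the hypotheses; the paper's proof also uses them implicitly (in checking that $\lifts{\T}$ is a monad), and in the intended application they hold by construction.
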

\begin{proof} See Appendix \ref{Sec:appendix}.
\end{proof}

In our case, $\C = \prsh{\Lw}$, $\catD = \prsh{|\Lw|}$ and the required property of commuting with $\U$ is given for both pairs of monads $\lift{\li},\liftt{\li}$ and $\lift{\p_c},\liftt{\p_c}$ by Proposition \ref{prop:liftingcommuteswithU}. In the sequel we provide the explicit calculation of the distributive law $\varphi \: \lift{\li} \K\U \To  \K\U\lift{\li}$ according to \eqref{eq:candistrlaw}. For this purpose, fix a presheaf $\G \in \prsh{\Lw}$ and $n \in \Lw$. By definition $\varphi_{\G} \: \prsh{\Lw} \to \prsh{\Lw}$ is given on $n \in \Lw$ as the following function (where $\U\lift{\li} = \liftt{\li}\U$ by Proposition \ref{prop:liftingcommuteswithU}):
 \begin{eqnarray*}
  \xymatrix{
 (\liftt{\li}(\id_{\K\U \G})^{\flat})^{\sharp}(n) \: \lift{\li} \K\U \G(n) \ar[rr]^-{\eta_{\lift{\li} \K\U\G}(n)} && \K \U \lift{\li} \K \U \G(n) \ar[rrr]^{\K(\liftt{\li}(\id_{\K\U \G})^{\flat})(n)} &&& \K\U\lift{\li}\G(n)\text{.}
 }
\end{eqnarray*}
We now compute $\varphi_{\G}(n)$ on a list of tuples $[\tuple{x}_1,\dots,\tuple{x}_k]$. This is first mapped onto the value
\begin{eqnarray*}
   \xymatrix@R=5pt{
 [\tuple{x}_1,\dots,\tuple{x}_k]\ \ar@{|->}[rr]^-{\!\eta_{\lift{\li} \K\U\G}(n)\!} && \langle \lift{\li} \K\U \G(\theta) [\tuple{x}_1,\dots,\tuple{x}_k]\rangle_{\theta \: n \to m}\ =\ \langle [<\tuple{x}_1(\sigma \after \theta)>_{\sigma},\dots,<\tuple{x}_k(\sigma \after \theta)>_{\sigma}]\rangle_{\theta}
 }
 \end{eqnarray*}
 where the unit $\eta$ is computed as in \eqref{eq:deinitionUnit} and
 the equality follows by definition of $\K (\U\G)$ on arrows $\theta
 \in \Lw[n,m]$ --- \emph{cf.} \eqref{eq:deinitionKtheta}.  Then we apply $(\id_{\K\U \G})^{\flat}$ componentwise in the list elements
of the tuple:
  \begin{eqnarray*}
   \xymatrix@R=3pt{
 \langle [<\tuple{x}_1(\sigma \after
\theta)>_{\sigma},\dots,<\tuple{x}_k(\sigma \after
\theta)>_{\sigma}]\rangle_{\theta}\  \ar@{|->}[rr]^-{{\scriptstyle
\K(\liftt{\li}(\id_{\K\U \G})^{\flat})(n)}} &&\ \langle
[<\tuple{x}_1(\sigma \after
\theta)>_{\sigma}(\id_n),\dots,<\tuple{x}_k(\sigma \after
\theta)>_{\sigma}(\id_n)]\rangle_{\theta} \ar@{}[d]|{=}\\
  && \langle
[\tuple{x}_1(\theta),\dots,\tuple{x}_k(\theta)]\rangle_{\theta}.
 }
\end{eqnarray*}
By definition of $(\cdot)^{\flat}$, $(\id_{\K\U \G})^{\flat} (n) = \big(\epsilon_{\U\G} \after \U(\id_{\K\U \G}\big)(n) \: \U\K\U \G(n) \to \U \G (n)$ maps a tuple $\tuple{x}$ into its element $\tuple{x}({\id_n})$ --- \emph{cf.} the definition \eqref{EQ:counit} of the counit $\epsilon$. The equality holds because $<\tuple{x}_i(\sigma \after \theta)>_{\sigma}(\id_n) = \tuple{x}_i(\id_n \after \theta) = \tuple{x}_i(\theta)$.

The calculation leading to the definition of $\psi \: \lift{\p_c} \K\U \To  \K\U\lift{\p_c}$ is analogous. In conclusion we obtain the following definitions for distributive laws of monads $\varphi$ and $\psi$:
\begin{eqnarray*}
 \varphi_{\G}(n) \: & \lift{\li} \K\U \G(n) & \to \hspace{0.5cm} \K\U\lift{\li}\G(n) \\
 & [\tuple{x}_1,\dots,\tuple{x}_k]  & \mapsto\ \langle [\tuple{x}_1(\theta),\dots,\tuple{x}_k(\theta)] {\rangle}_{\theta \: n \to m}\\
  \psi_{\G}(n) \: & \lift{\p_c} \K\U \G(n) & \to \hspace{0.5cm} \K\U\lift{\p_c}\G(n) \\
 & \{\tuple{x}_i\}_{i \in I} & \mapsto\ \langle \{\tuple{x}_i(\theta)\}_{i\in I} {\rangle}_{\theta \: n \to m}
\end{eqnarray*}
where $I$ is a countable set of indices. Note that the existence of distributive laws $\varphi$ and $\psi$ implies in particular that $\K\U\lift{\li}$ and $\K\U\lift{\p_c}$ are monads.

We have now all ingredients to define the distributive law $\delta \: \lift{\li} (\K \U \lift{\p_c}) \To (\K \U \lift{\p_c})\lift{\li}$:
\begin{equation}\label{eq:distrlawParGen}
\xymatrix{\delta \: & \lift{\li} \K \U \lift{\p_c} \ar[rr]^{\varphi_{\lift{\p_c}}} && \K \U \lift{\li} \lift{\p_c}  \ar[rr]^{\K\U\lift{\lambda}} && \K \U \lift{\p_c}\lift{\li}}
\end{equation}
Concretely, given $\G \in \prsh{\Lw}$ and $n \in \Lw$, the function $\delta_{\G}(n)$ is defined by
\[\xymatrix@C=10pt{
[<X_1>_{\theta}, \dots , <X_k>_{\theta}] \ar@{|->}[rrr]^<<<<<<<<{\varphi_{\lift{\p_c}\G}(n)} &&& \langle [X_1,\dots,X_k] \rangle_{\theta} \ar@{|->}[rrr]^<<<<<<<<{\K\U\lift{\lambda}_{\G}(n)} &&& \langle \{[x_1,\dots,x_k] \mid x_i \in X_i \}\rangle_{\theta}
}
\]

\begin{proposition} $\delta \: \lift{\li} (\K \U \lift{\p_c}) \To (\K \U \lift{\p_c})\lift{\li}$ is a distributive law of the monad $\lift{\li}$ over the monad $\K \U \lift{\p_c}$.
\end{proposition}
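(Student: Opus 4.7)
The plan is to verify the four axioms of a distributive law of monads for $\delta$, exploiting that by construction $\delta = \K\U\lift{\lambda} \circ \varphi_{\lift{\p_c}}$ is built out of two already known distributive laws of monads \emph{over} $\lift{\li}$: the law $\lift{\lambda}$ supplied by Proposition~\ref{prop:liftpreservemonads}, and the law $\varphi$ supplied by Proposition~\ref{prop:distrlawfromgenliftings}.

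First I would dispatch the two axioms concerning the unit and multiplication of the monad $\lift{\li}$. These follow by gluing the corresponding axioms for $\varphi$ and $\lift{\lambda}$ together with naturality of the structural maps. For instance, $\delta \circ \eta^{\lift{\li}}_{\K\U\lift{\p_c}} = \K\U\lift{\p_c}\eta^{\lift{\li}}$ is obtained by first applying the unit axiom of $\varphi$ to rewrite $\varphi_{\lift{\p_c}} \circ \eta^{\lift{\li}}_{\K\U\lift{\p_c}}$ as $\K\U\eta^{\lift{\li}}_{\lift{\p_c}}$, and then applying the unit axiom of $\lift{\lambda}$; the multiplication axiom is entirely analogous and additionally exploits naturality of $\varphi$ at the component $\lift{\lambda}$.

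Next I would turn to the two axioms for the composite monad $\K\U\lift{\p_c}$, whose unit is $\eta^{\K\U}_{\lift{\p_c}} \circ \eta^{\lift{\p_c}}$ and whose multiplication $\mu^{\K\U\lift{\p_c}} = \K\U\mu^{\lift{\p_c}} \circ \mu^{\K\U}_{\lift{\p_c}\lift{\p_c}} \circ \K\U\psi_{\lift{\p_c}}$ is assembled from $\mu^{\K\U}$, $\mu^{\lift{\p_c}}$ and the auxiliary distributive law $\psi \: \lift{\p_c}\K\U \To \K\U\lift{\p_c}$. The unit axiom splits as before into the unit axioms for $\varphi$ over $\K\U$ and for $\lift{\lambda}$ over $\lift{\p_c}$. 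The main obstacle will be the multiplication axiom: after expanding $\mu^{\K\U\lift{\p_c}}$ in terms of $\psi$, it reduces to a Yang-Baxter-style coherence equation relating $\lift{\lambda}$, $\varphi$ and $\psi$. I expect this to hold because all three distributive laws arise from canonical constructions --- $\lift{\lambda}$ by componentwise lifting through $\liftf{(\cdot)}{\Lw}$, and both $\varphi$ and $\psi$ from the adjunction $\U \dashv \K$ via Proposition~\ref{prop:distrlawfromgenliftings} --- so that componentwise the computation boils down to an interplay of the unit $\eta$ and counit $\epsilon$ of $\U \dashv \K$, resolved by the triangular identities together with naturality.

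As a concrete backup, should the abstract approach become unwieldy, one can verify all four axioms componentwise using the explicit formula for $\delta_{\G}(n)$ derived immediately before the statement; under this description each axiom becomes an elementary identity about tuples (indexed by substitutions) of Cartesian products of lists, checkable by direct inspection.
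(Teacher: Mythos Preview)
Your proposal is correct and identifies the right obstacle, but the paper takes a shorter route. Rather than verifying the four distributive-law axioms by hand, the paper invokes the general result of Cheng \cite{Cheng_IteratedLaws} on iterated distributive laws: given three monads with pairwise distributive laws $\lift{\lambda}$, $\varphi$, $\psi$, the composite $\delta = \K\U\lift{\lambda} \circ \varphi_{\lift{\p_c}}$ is automatically a distributive law of $\lift{\li}$ over the composite monad $\K\U\lift{\p_c}$ \emph{provided} the three laws satisfy the Yang--Baxter equation. The paper then simply checks Yang--Baxter (which, as you also anticipate, is routine from the explicit descriptions of $\lift{\lambda}$, $\varphi$, $\psi$).

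Your approach is essentially an in-place reproof of the relevant instance of Cheng's theorem: you unpack the four axioms and correctly observe that the only nontrivial one --- compatibility with the multiplication of $\K\U\lift{\p_c}$, which is built using $\psi$ --- reduces precisely to the Yang--Baxter coherence. So the two proofs converge on the same key computation; the paper just outsources the surrounding bookkeeping to the cited result, while you do it directly. Your concrete backup via the explicit formula for $\delta_{\G}(n)$ would also work and is in the same spirit as the paper's remark that Yang--Baxter ``can be easily verified by definition''.
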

\begin{proof}
In \cite{Cheng_IteratedLaws} it is proven that the natural transformation $\delta$ defined as in \eqref{eq:distrlawParGen} (or, equivalently, the natural transformation $\psi_{\lift{\li}} \after \lift{\p_c}\varphi \: \lift{\p_c}\lift{\li} \K \U \To \K \U \lift{\p_c}\lift{\li}$) is a distributive law yielding the monad $\K \U \lift{\p_c}\lift{\li}$ if one can prove that the three distributive laws $\lift{\lambda}$, $\varphi$ and $\psi$ satisfy a compatibility condition called Yang-Baxter equation. This is given by commutativity of the following diagram, which can be easily verified by definition of $\lift{\lambda}$, $\varphi$ and $\psi$.
\[\xymatrix@R=10pt{
& \lift{\li}\K\U\lift{\p_c} \ar[r]^{\varphi \lift{\p_c}} & \K\U\lift{\li}\lift{\p_c} \ar[dr]^{\K\U\lift{\lambda}} &\\
\lift{\li}\lift{\p_c}\K\U \ar[ur]^{\lift{\li}\psi} \ar[dr]_{\lift{\lambda}\K\U} & & & \K\U\lift{\p_c}\lift{\li}\\
& \lift{\p_c}\lift{\li}\K\U \ar[r]^{\lift{\p_c \varphi}} & \lift{\p_c} \K\U\lift{\li} \ar[ur]_{\psi \lift{\li}} &\\
}\]
\end{proof}

\begin{convention} Throughout the rest of the paper, we do not need to manipulate further the components of the functor $\K\U\lift{\p_c} \: \prsh{\Lw} \to \prsh{\Lw}$ and thus we adopt the shorter notation $\KUP$ for it.
\end{convention}

We are now in position to extend Construction \ref{constr:paralground} to arbitrary logic programs.

\begin{construction} Let $\mb{P}$ be a logic program and $p^{\sharp} \: \At \to \KUP\lift{\li} (\At)$ be the associated coalgebra in $\prsh{\Lw}$. We define the $\KUP$-coalgebra $\pa{p}^{\sharp}\: \li \At \to \KUP\lift{\li} (\At)$ in the way prescribed by Proposition \ref{prop:bialgfreemonad} as
\[\xymatrix{\lift{\li} \At \ar[r]^-{\lift{\li} p^{\sharp}} & \lift{\li} \KUP\lift{\li} \At \ar[r]^{\delta_{\lift{\li}\At}} & \KUP \lift{\li} \lift{\li} \At \ar[rr]^{\KUP(\lconc)} && \KUP \lift{\li} \At}\]
where $\lconc$ is $\mu^{\lift{\li}}_{\At}$, defined componentwise by list concatenation $\mu^{\li}$ since $\lift{\li}$ is the extension of $\li$.
By Proposition \ref{prop:bialgfreemonad}, $(\li \At, \lconc, \pa{p}^{\sharp})$ forms a $\delta$-bialgebra.%, which we call the \emph{parallelization} of $p^{\sharp}$.
\end{construction}

In order to have a more concrete intuition, we spell out the details of the above construction.
%
%Given a list of atoms $[A_1, \dots A_k]$ with $A_i\in \At(n)$ for some $n \in \Lw$,
Fixed $n\in \Lw$, $\lift{\li} p^{\sharp}(n)$ maps a list $[A_1, \dots A_k] \in \li\At(n)$ into the list of tuples
$$[<X_1>_{\theta}, \dots <X_k>_{\theta}]$$
where each $X_i = p^{\sharp}(n)(A_i)(\theta)$ is a set of lists of atoms. % $\{l_{i,1}, \dots, l_{i,i_j}\}$.
This is mapped by $\delta_{\lift{\li}\At}$ into the tuple of sets (of lists of lists)
$$<\{[l_1, \dots, l_k] | l_i\in X_i \}  >_{\theta}$$
and, finally, by $\KUP(\lconc)$ into the tuple of sets (of list)
$$<\{l_1\lconc \dots \lconc l_k | l_i\in X_i \}  >_{\theta}\text{.}$$

Alternatively, $\pa{p}^{\sharp}$ can be inductively defined from $p^{\sharp}$ by the following rules, where $l\tr{\theta}l'$ stands for $l'\in \pa{p}^{\sharp}(l)(\theta)$.
$$\inference*[l1]{l' \in p^{\sharp}(A)(\theta) }{[A] \tr{\theta}l'}
\quad \quad
\inference*[l2]{l_1 \tr{\theta}l_1' \quad l_2\tr{\theta}l_2' }{l_1 \lconc l_2 \tr{\theta} l_1' \lconc l_2'}
\quad \quad
\inference*[l3 ]{}{\elist\tr{\theta}\elist}
$$
The rule system extends the one provided for the ground case (\emph{cf.} Remark \ref{rmk:sosparground}) by labeling transitions with the substitution applied on the goal side. Observe that rule $(l2)$ is the same as the one for parallel composition in CSP \cite{roscoe1998theory}: the composite system can evolve only if its parallel components are able to synchronise on some common label $\theta$.

\begin{example}\label{ex:parallelizationGeneral}
Consider the logic program $\NatList$ in Example \ref{Ex:non_compositional} and the atoms $\pNat(x_1)$ and $\pList(\tcons (x_1, x_2))$, both in $\At(2)$.
The morphism %$p\colon \U \At \to \lift{\p_c}\lift{\li} \U (\At)$
$p^{\sharp} \: \At \to \K \U \lift{\p_c}\lift{\li} (\At)$ maps these atoms into the tuples defined for all $\theta\in \Lw[2,m]$ as
$$\begin{array}{rcl}
 % \nonumber to remove numbering (before each equation)
   p^{\sharp}(\pNat(x_1))(\theta) & = & \ \left\{
	\begin{array}{ll}
        \{\elist\}        & \text{if } x_1 \theta   =\tzero \\
	\{[\pNat(t)]\}  & \text{if } x_1 \theta =\tsucc(t)\\	
	\emptyset   & \text{otherwise}
	\end{array}
\right.\\
p^{\sharp}(\pList (\tcons (x_1, x_2)))(\theta) & = & \{[ \pNat (x_1 )\theta, \pList (x_2)\theta]\}
 \end{array}
$$
for some $\Sigma$-term $t$. By application of rule $(l1)$, such tuples are the same of
$\pa{p}^{\sharp}( [\pNat(x_1)] )$ and $\pa{p}^{\sharp}( [\pList (\tcons (x_1, x_2))] )$ and, by mean of rule
$(l2)$, it is easy to compute the value of $\pa{p}^{\sharp}$ on the list of atoms
$[\pNat(x_1), \pList (\tcons (x_1, x_2))]$. % the value of the tuple for a substitution $\theta$ is simply given by ``co''
$$\pa{p}^{\sharp}([\pNat(x_1), \pList (\tcons (x_1, x_2))]) (\theta) = \ \left\{
	\begin{array}{ll}
        \{[ \pNat (\tzero ), \pList (x_2)\theta]\}        & \text{if } x_1 \theta   =\tzero \\
	\{[\pNat(t),\pNat (\tsucc(t) ), \pList (x_2)\theta]\}  & \text{if } x_1 \theta =\tsucc(t)\\	
	\emptyset   & \text{otherwise}
	\end{array}
\right.\\$$
Intuitively, $\pa{p}^{\sharp}$ forces all the atoms of a list to synchronize by choosing a common substitution.
For instance, $[\pList (\tcons (x_1, x_2))]$ can make a transition with any substitution $\theta$ but, when in parallel with
$[\pNat(x_1)]$, it cannot evolve (and thus cannot be refuted) for those substitutions that do not allow $\pNat(x_1)$ to evolve --- i.e., those $\theta$ belonging to the third case above.
\end{example}

We now generalize Construction \ref{constr:parcofreeground} to define the cofree semantics $\bb{\cdot}_{\pa{p}^{\sharp}}$ arising from $\pa{p}^{\sharp}$.

\begin{construction}
The cofree $\KUP$-coalgebra $\cof{\KUP}(\lift{\li}\At)$ on $\lift{\li}\At$, defined following the same steps of Construction \ref{Constr:cofree_sat}, forms the final $\lift{\li}\At\times\KUP(\cdot)$-coalgebra $c \: \cof{\KUP}(\lift{\li}\At) \xrightarrow{\cong} \lift{\li}\At\times\KUP(\cof{\KUP}(\lift{\li}\At))$.
 We now build its canonical extension to a final $\delta'$-bialgebra, where $\delta'\: \lift{\li}(\lift{\li}\At\times\KUP (\cdot)) \To \lift{\li}\At\times\KUP \lift{\li} (\cdot)$ is a distributive law of the monad $\lift{\li}$ over the functor $\lift{\li}\At\times\KUP (\cdot)$ defined in terms of $\delta$:
      \[\delta'_{\G} \: \xymatrix{\lift{\li}(\lift{\li}\At\times\KUP ({\G})) \ar[rr]^{<\lift{\li}\pi_1,\lift{\li} \pi_2>} && \lift{\li}\lift{\li}\At\times\lift{\li}\KUP ({\G}) \ar[rr]^{\lconc \times \delta_{\G}} && \lift{\li}\At\times\KUP \lift{\li} ({\G})} .\]
For this purpose, we construct a $\lift{\li}$-algebra $\lconcc\: \lift{\li}(\cof{\KUP}(\lift{\li}\At)) \to \cof{\KUP}(\lift{\li}\At)$, using finality of $\cof{\KUP}(\lift{\li}\At)$:
\[\xymatrix{
\lift{\li}(\cof{\KUP}(\lift{\li}\At)) \ar[d]_{\lift{\li} c} \ar@{-->}[rrr]^{\lconcc} &&& \cof{\KUP}(\lift{\li}\At) \ar[dd]^{c}\\
\lift{\li}(\lift{\li}\At\times\KUP(\cof{\KUP}(\lift{\li}\At))) \ar[d]_{\delta'_{\cof{\KUP}(\lift{\li}\At)}} &&& \\
\lift{\li}\At\times\KUP(\lift{\li}(\cof{\KUP}(\lift{\li}\At))) \ar[rrr]^{\id_{\lift{\li}\At} \times \KUP(\lconcc)} &&& \lift{\li}\At \times \KUP(\cof{\KUP}(\lift{\li}\At))
}\]
Proposition \ref{prop:finalcoalgBialg} guarantees that $\big(\cof{\KUP}(\lift{\li}\At), \lconcc, c)$ is the final $\delta'$-bialgebra.
%Also, $\big(\cof{\KUP}(\lift{\li}\At), \lconcc, \pi_1 \circ c)$ forms a $\delta$-bialgebra.

We now turn to the definition of the semantics $\bb{\cdot}_{\pa{p}^{\sharp}} \: \lift{\li}\At \to \cof{\KUP}(\lift{\li}\At)$. We let it be the unique $\delta'$-bialgebra morphism from $\lift{\li}\At$ to $\cof{\KUP}(\lift{\li}\At)$, given by finality of $\big(\cof{\KUP}(\lift{\li}\At), \lconcc, c)$:
\begin{equation*}\label{diag:bialgebraSemGen}
\vcenter{\xymatrix{
\lift{\li}\lift{\li}\At \ar[d]_{\lconc} \ar[rrr]^{\lift{\li} \bb{\cdot}_{\pa{p}^{\sharp}}} &&& \lift{\li}(\cof{\KUP}(\lift{\li}\At)) \ar[d]^{\lconcc} \\
\lift{\li}\At \ar[d]_{<\id_{\lift{\li}\At},\pa{p}^{\sharp}>} \ar@{-->}[rrr]^{\bb{\cdot}_{\pa{p}^{\sharp}}} &&& \cof{\KUP}(\lift{\li}\At) \ar[d]^{c} \\
\lift{\li}\At \times \KUP(\lift{\li}\At) \ar[rrr]^{\id_{\lift{\li}\At}\times\KUP(\bb{\cdot}_{\pa{p}^{\sharp}}) } &&& \lift{\li}\At\times\KUP(\cof{\KUP}(\lift{\li}\At)).
}}
\end{equation*}
\end{construction}

\begin{comment}
\begin{itemize}
  \item Fixed $\C$, we have a canonical lifting functor from $\efuncat{\set}$ to $\efuncat{\prsh{\C}}$. It preserves monadic structure (Maps monads to monads and natural transformations to natural transformations).
  \item We fix notation $\lift{\cdot}$ for the lifting with $\C = \Lw$ and $\liftt{\cdot}$ for the lifting with $\C = |\Lw|$. The two lifting functors connect along the adjunction: $U\lift{F} = \liftt{F}U$. Application: $\lift{\p}$, $\lift{\li}$ on $\prsh{|\Lw|}$ and $\liftt{\p}$, $\liftt{\li}$ on $\prsh{\Lw}$. Example of the monad $\lift{\li}$.
\end{itemize}
%GENERAL
\begin{proposition} Let ... There is a distributive law $\gamma \: FKU \to KUF$ between monads $F$ and $KU$ given by ...\end{proposition}

%INSTANTIATION
Examples: calculate distributive law between $KU$ and $\lift{\li}$ and between $KU$ and $\lift{\p}$.
\end{comment}

The next result states that bialgebraic semantics exhibits two forms of compositionality: it respects both the substitutions in $\Lw$ (by saturation) and the internal structure of goals. In order to formulate such theorem, given a substitution $\theta \in \Lw[n,m]$, we use notation:
\begin{eqnarray*}
{\theta}^{l}\ & \df\ &\lift{\li}(\At)(\theta) \: \lift{\li}(\At)(n) \to \lift{\li}(\At)(m)\\ \overline{\theta}\ & \df\ & \cof{\KUP}(\li\At)(\theta) \colon \cof{\KUP}(\li\At)(n) \to \cof{\KUP}(\li\At)(m).
\end{eqnarray*}

\begin{theorem}[Two-Fold Compositionality] \label{thm:ANDcompoGen} Let $l_1,\dots,l_k$ be list of atoms in $\At(n)$ and $\theta \in \Lw[n, m]$ a substitution. The following two equalities hold:
\begin{eqnarray*}
    \bb{{\theta}^{l}\, l_1 \lconc \dots \lconc {\theta}^{l} \, l_k}_{\pa{p}^{\sharp}} = \overline{\theta}\bb{l_1}_{\pa{p}^{\sharp}}\lconcc\dots \lconcc \overline{\theta}\bb{l_k}_{\pa{p}^{\sharp}} =     \overline{\theta}(\bb{l_1}_{\pa{p}^{\sharp}}\lconcc\dots \lconcc \bb{l_k}_{\pa{p}^{\sharp}}).
\end{eqnarray*}
Where $l_1 \lconc \dots \lconc l_k$ is notation for $\lconc([l_1,\dots,l_k]) = \mu^{\lift{\li}}_{\At}(n) ([l_1,\dots,l_k])$ and $\bb{l_1}_{\pa{p}^{\sharp}}\lconcc\dots \lconcc \bb{l_k}_{\pa{p}^{\sharp}}$ for $\lconcc(n)([\bb{l_1}_{\pa{p}^{\sharp}},\dots,\bb{l_k}_{\pa{p}^{\sharp}}])$. \end{theorem}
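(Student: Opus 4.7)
My plan is to reduce the theorem to two known properties of $\bb{\cdot}_{\pa{p}^{\sharp}}$: (i) being a natural transformation in $\prsh{\Lw}$, which gives compositionality with respect to the substitution action of $\Lw$ (exactly as in Theorem \ref{thm:compo}); and (ii) being a morphism of $\lift{\li}$-algebrae, which gives compositionality with respect to list concatenation (exactly as in Theorem \ref{thm:ANDcompo}). Both properties are packaged in the statement that $\bb{\cdot}_{\pa{p}^{\sharp}}$ is a $\delta'$-bialgebra morphism: the top square of the defining diagram yields (ii), while the whole diagram being a morphism in $\prsh{\Lw}$ yields (i). The second equality will instead follow from the fact that $\lconcc$ is itself a natural transformation in $\prsh{\Lw}$.

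For the first equality, I would first apply the algebra-morphism square of the bialgebra diagram, which (reading it exactly as in the proof of Theorem \ref{thm:ANDcompo}) gives
\[
\bb{{\theta}^{l}\, l_1 \lconc \dots \lconc {\theta}^{l} \, l_k}_{\pa{p}^{\sharp}} = \bb{{\theta}^{l}\, l_1}_{\pa{p}^{\sharp}}\lconcc\dots \lconcc \bb{{\theta}^{l}\, l_k}_{\pa{p}^{\sharp}}.
\]
Then I would rewrite each factor on the right using naturality of $\bb{\cdot}_{\pa{p}^{\sharp}}\colon \lift{\li}\At \To \cof{\KUP}(\lift{\li}\At)$ applied to the arrow $\theta \in \Lw[n,m]$: since ${\theta}^{l} = \lift{\li}\At(\theta)$ and $\overline{\theta} = \cof{\KUP}(\lift{\li}\At)(\theta)$ by definition, the naturality square tells us $\bb{{\theta}^{l}\, l_i}_{\pa{p}^{\sharp}} = \overline{\theta}\,\bb{l_i}_{\pa{p}^{\sharp}}$ for each $i$, which is the direct analogue of Theorem \ref{thm:compo} lifted from atoms to goals. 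Substituting yields the first equality.

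For the second equality, the key observation is that $\lconcc$ was defined by a universal property in $\prsh{\Lw}$ and is therefore a natural transformation. The naturality square for the substitution $\theta \in \Lw[n,m]$,
\[
\xymatrix{
\lift{\li}(\cof{\KUP}(\lift{\li}\At))(n) \ar[d]_{\lift{\li}(\overline{\theta})} \ar[r]^-{\lconcc(n)} & \cof{\KUP}(\lift{\li}\At)(n) \ar[d]^{\overline{\theta}} \\
\lift{\li}(\cof{\KUP}(\lift{\li}\At))(m) \ar[r]_-{\lconcc(m)} & \cof{\KUP}(\lift{\li}\At)(m),
}
\]
evaluated at the list $[\bb{l_1}_{\pa{p}^{\sharp}},\dots,\bb{l_k}_{\pa{p}^{\sharp}}] \in \lift{\li}(\cof{\KUP}(\lift{\li}\At))(n)$, gives exactly the desired equation: the top-right path produces $\overline{\theta}(\bb{l_1}_{\pa{p}^{\sharp}}\lconcc\dots\lconcc\bb{l_k}_{\pa{p}^{\sharp}})$, while the bottom-left path uses that $\lift{\li}$ acts componentwise to produce $\overline{\theta}\bb{l_1}_{\pa{p}^{\sharp}}\lconcc\dots\lconcc\overline{\theta}\bb{l_k}_{\pa{p}^{\sharp}}$. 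There is no real obstacle here; the only point that required care was to ensure that all the structure ($\lift{\li}$, $\lconcc$, $\pa{p}^{\sharp}$, $\bb{\cdot}_{\pa{p}^{\sharp}}$) genuinely lives in $\prsh{\Lw}$, which is precisely the payoff of developing the distributive law $\delta$ at the saturated level rather than on $\prsh{|\Lw|}$.
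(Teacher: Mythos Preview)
Your proposal is correct and follows essentially the same approach as the paper: the paper's proof simply says it is ``entirely analogous to the one for the ground case'' (i.e., use the top square of the bialgebra diagram exactly as in Theorem~\ref{thm:ANDcompo}) and that ``commutativity with substitutions is given by naturality of $\bb{\cdot}_{\pa{p}^{\sharp}}$, $\lconc$ and $\lconcc$ in $\prsh{\Lw}$,'' which is precisely what you spelled out. If anything, your version is more explicit than the paper's, in that you write down the naturality square for $\lconcc$ that justifies the second equality.
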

\begin{proof} The proof is entirely analogous to the one for the ground case, see Theorem \ref{thm:ANDcompo}. Commutativity with substitutions is given by naturality of $\bb{\cdot}_{\pa{p}^{\sharp}}$, $\lconc$ and $\lconcc$ in $\prsh{\Lw}$.
\end{proof}

\begin{figure}[t]
{\scriptsize
 \[
% \begin{tiny}
% \def\objectstyle{\scriptstyle}\def\labelstyle{\scriptstyle}
\xymatrix@C=.005cm@R=.7cm{
 & && [ \pNat(x_{1}) ]   \ar@{-}@(l,u)[lld]|{<\tzeroZ,x_2>} \ar@{-}[d]|{<\tsuccZ(x_1),x_2>} \ar@{-}@(r,u)[dr]^{\dots} & & &\\
       & \elist   \ar@{-}[d]    &     &   [\pNat(x_1)] \ar@{-}[d]|{<\tzeroZ,x_2>} \ar@{-}[rd]|(0.6){<\tzeroZ,\tnilZ>} \ar@{-}[rrd]^{\dots}   & \dots   \\
                     & \elist \ar@{-}[d]          &                    &  \elist \ar@{-}[d]   & \elist \ar@{-}[d] & \dots \\
                     &     \dots                   &                       &   \dots       & \dots %
}
%\end{tiny}
\qquad
% \begin{tiny}
% \def\objectstyle{\scriptstyle}\def\labelstyle{\scriptstyle}
\xymatrix@C=.005cm@R=.7cm{
   && [ \pList ( \tconsZ(x_{1},x_2) ) ]  \ar@{-}[d]|{<\tzeroZ,x_2>} \ar@{-}@(dr,u)[dr]|{<\tsuccZ(x_1),x_2>} \ar@{-}@(r,u)[drr]^{\dots} \ar@{-}@(l,u)[dll]|{<x_1,x_2>} & & &\\
  [\pNat(x_1),\pList(x_2)] \ar@{-}[d]|{<\tzeroZ,\tnilZ>} \ar@{-}[dr]^{\dots}  &      & [\pNat(\tzeroZ), \pList(x_{2})] \ar@{-}[ld]_{\dots}   \ar@{-}[d]|{<x_1,\tnilZ>}         &    [\pNat(\tsuccZ(x_1)),\pList(x_2)] \ar@{-}[dr]|(0.6){<\tzeroZ,\tnilZ>} \ar@{-}[d]|{<x_1,\tnilZ>} \ar@{-}[drr]^{\dots}   & \dots   \\
  \elist \ar@{-}[d]  & \dots                       & \elist \ar@{-}[d]                               &   [\pNat(x_1)] \ar@{-}[d]^{\dots}   & [\pNat(\tzeroZ)] \ar@{-}[d] & \dots  \\
 \elist \ar@{-}[d] &                               &  \elist \ar@{-}[d]                                     &   \dots        &  \elist \ar@{-}[d] \\
  \dots        &                              &   \dots                                              &        &   \dots
}
%\end{tiny}
\]
}
\caption{Part of the bialgebraic semantics of $[\pNat(x_1)]$ (left) and $[\pList(\tcons (x_1, x_2))]$ (right) in $\At(2)$. We use the convention that unlabeled edges stand for edges with any substitution.}\label{fig:parallezidesemantics}
\end{figure}

\begin{figure}[t]
{\scriptsize
 \[
% \begin{tiny}
% \def\objectstyle{\scriptstyle}\def\labelstyle{\scriptstyle}
\xymatrix@C=.005cm@R=.7cm{
   && [ \pNat(x_1), \pList ( \tconsZ(x_{1},x_2) ) ]  \ar@{-}[d]|{<\tzeroZ,x_2>} \ar@{-}@(dr,u)[dr]|{<\tsuccZ(x_1),x_2>} \ar@{-}@(r,u)[drr]^{\dots}  & & &\\
   &      & [\pNat(\tzeroZ), \pList(x_{2})] \ar@{-}[ld]_{\dots}   \ar@{-}[d]|{<x_1,\tnilZ>}         &    [\pNat(x_1),\pNat(\tsuccZ(x_1)),\pList(x_2)] \ar@{-}[dr]|(0.6){<\tzeroZ,\tnilZ>}  \ar@{-}[drr]^{\dots}   & \dots   \\
    & \dots                       & \elist \ar@{-}[d]                               &     & [\pNat(\tzeroZ)] \ar@{-}[d] & \dots  \\
  &                               &  \elist \ar@{-}[d]                                     &        &  \elist \ar@{-}[d] \\
  &                              &   \dots                                              &        &   \dots
}%\end{tiny}
\]
}
\caption{Part of the bialgebraic semantics of $[\pNat(x_1), \pList(\tcons (x_1, x_2))]$. }\label{fig:saturatedparallelcomposition}
\end{figure}

\begin{example}\label{ex:compo_general}
In Example \ref{ex:parallelizationGeneral} we have computed the values of $\pa{p}^{\sharp}$ for the lists $[\pNat(x_1)]$ and $[\pList(\tcons(x_1,x_2))]$. Figure \ref{fig:parallezidesemantics} shows (a finite part of) their bialgebraic semantics
$\bb{[\pNat(x_1)]}_{\pa{p}^{\sharp}}$ and $\bb{ [\pList(\tcons(x_1,x_2))] }_{\pa{p}^{\sharp}}$. These are depicted as $\vee$-trees (Definition \ref{DEF:or_par_tree_ground}) where edges are labeled with substitutions. Analogously to the ground case, one can think of the edges as (labeled) transitions generated by the rule presentation of $\pa{p}$ given above.
%trees rather different than the saturated $\wedge \vee$-trees (Definition \ref{def:saturatedtree}) and closer to the parallel $\vee$-trees defined for the ground case in the previous section. We will formally introduce such trees in the next section and in this example, we will think to them as parallel $\vee$-trees where to each son is assigned a substitution.

It is instructive to note that, while $[\pNat(x_1)]$ has one $<\tzero, x_2>$-child, $[\pNat(x_1), \pList(x_2)]$ cannot have a child with such substitution: in order to progress $[\pNat(x_1), \pList(x_2)]$ needs a substitution which makes progress at the same time both $\pNat(x_1)$ and $\pList(x_2)$ like, for instance, $<\tzero, \tnil>$.

\medskip

In Example \ref{ex:parallelizationGeneral} we discussed the value $\pa{p}^{\sharp}([\pNat(x_1), \pList(\tcons(x_1,x_2)) ])$. Figure \ref{fig:saturatedparallelcomposition} shows the bialgebraic semantics $\bb{[\pNat(x_1), \pList(\tcons(x_1,x_2)) ]}_{\pa{p}^{\sharp}}$ for such list. By virtue of Theorem \ref{thm:ANDcompoGen}, it can be equivalently obtained by concatenating via $\lconcc$ the trees  $\bb{[\pNat(x_1)]}_{\pa{p}^{\sharp}}$ and $\bb{ [\pList(\tcons(x_1,x_2))] }_{\pa{p}^{\sharp}}$ in Figure \ref{fig:parallezidesemantics}.
Similarly to the ground case, the operation of concatenating two trees $T_1, T_2$ can be described as follows.
\begin{enumerate}
 \item If the root of $T_1$ has label $l_1$ and the root of $T_2$ has label $l_2$, then the root of $T_1 \lconcc T_2$ has label $l_1 \lconc l_2$;
 \item if $T_1$ has a $\theta$-child $T_1'$ and $T_2$ has a $\theta$-child $T_2'$, then  $T_1 \lconcc T_2$ has a $\theta$-child $T_1' \lconcc T_2'$.
\end{enumerate}
For instance, while $[\pList(\tcons(x_1,x_2))]$ has one $<x_1,x_2>$-child, $[\pNat(x_1),\pList(\tcons(x_1,x_2))]$ has no $<x_1,x_2>$-children because $[\pNat(x_1)]$ has no $<x_1,x_2>$-children. Instead it has one $<\tzero, x_2>$-child labeled with $\elist \lconc [\pNat(\tzero), \pList(x_2)]$ and one $<\tsucc(x_1), x_2>$-child labeled with $[\pNat(x_1)]\lconc[\pNat(\tsucc(x_1)), \pList(x_2)]$. The latter node has no $<x_1,\tnil>$-children because $[\pNat(x_1)]$ has no $<x_1,\tnil>$-children.
\end{example}

\section{Soundness and Completeness of Bialgebraic Semantics}\label{sec:complparsem_gen}

In this section we study the relationship between the bialgebraic semantics $\bb{\cdot}_{\pa{p}^{\sharp}}$ and the other approaches investigated so far. First, analogously to the ground case, we provide an explicit description of the elements of $\cof{\p_f}(\li\At)$ that $\bb{\cdot}_{\pa{p}^{\sharp}}$ associates with goals. This formalizes the notion of tree given in Example \ref{ex:compo_general}.

\begin{definition}\label{DEF:or_par_tree_general}
Given a logic program $\mb{P}$, $n \in \Lw$ and a list $l \in \li\At(n)$ of atoms, the \emph{(parallel) saturated $\vee$-tree} for $l$ in $\mb{P}$ is the possibly infinite tree $\tree$ satisfying the following properties:
\begin{enumerate}
  \item Each node $s$ in $\tree$ is labeled with a list of atoms $l_s \in \li\At(m)$ for some $m\in \Lw$ and the root is labeled with $l$. For any child $t$ of $s$, say labeled with a list $l_t \in \li\At(z)$, the edge from $s$ to $t$ is labeled with a substitution $\sigma \: m \to z$.
  \item Let $s$ be a node in $\tree$ with label $l' = [A_1,\dots,A_k] \in \li\At(m)$.  For all substitutions $\sigma, \tau_1,\dots,\tau_k$ and list $[C_1,\dots,C_k]$ of clauses of $\mb{P}$ such that, for each $C_i = H^i \seq B_1^i,\dots,B_j^i$, $<\sigma,\tau_i>$ is a unifier of $A_i$ and $H^i$, $s$ has exactly one child $t$, and viceversa. Furthermore, the edge connecting $s$ to $t$ is labeled with $\sigma$ and the node $t$ is labeled with  the list $l_1 \lconc \dots \lconc l_k$, where $l_i = [\tau_i B_1^i,\dots, \tau_i  B_j^i]$ is given by applying $\tau_i$ to the body of clause $C_i$.
   %$k$ children, with $B_i \tau$ labeling the $i$th child.
\end{enumerate}
\end{definition}

\noindent Saturated $\vee$-trees extend the $\vee$-trees introduced in Definition \ref{DEF:or_par_tree_ground} and we can formulate the same adequacy result.

\begin{proposition}[Adequacy]
Given a list of atoms $l \in \li\At(n)$ and a program $\mb{P}$, $\bb{l}_{\pa{p}^{\sharp}} \in \cof{\p_f}(\li\At)(n)$ is the saturated $\vee$-tree for $l$ in $\mb{P}$.
\end{proposition}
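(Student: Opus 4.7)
The plan is to mirror the proof of Proposition \ref{prop:adequacy_sat}, lifting it from atoms to goals. First, I would give an explicit description of the elements of $\cof{\KUP}(\lift{\li}\At)(n)$ as tree structures, by unfolding the terminal sequence defining $\cof{\KUP}(\lift{\li}\At)$ along the lines of Construction \ref{Constr:cofree_sat}. Fixing $n \in \Lw$, each element is a tree whose root is labelled by a list in $\lift{\li}\At(n) = \li\At(n)$; if a node has label $l' \in \li\At(m)$, its children are encoded by an element of $\KUP(\cof{\KUP}(\lift{\li}\At))(m) = \K\U\lift{\p_c}(\cof{\KUP}(\lift{\li}\At))(m)$, namely a tuple $\tuple{x}$ indexed by substitutions $\sigma \: m \to z$, where each component $\tuple{x}(\sigma)$ is a countable set of subtrees whose roots carry labels in $\li\At(z)$. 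Representing such a component by a child connected through a $\sigma$-labelled edge, one obtains exactly the shape prescribed by Definition \ref{DEF:or_par_tree_general}.

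Next I would verify that the saturated $\vee$-tree for $l$ lives in $\cof{\KUP}(\lift{\li}\At)(n)$, and that the assignment $t_n \: l \mapsto T_l$ defines a natural transformation $t \: \lift{\li}\At \to \cof{\KUP}(\lift{\li}\At)$. Naturality follows from the fact that the presheaf action of $\theta \in \Lw[n,m]$ on $\cof{\KUP}(\lift{\li}\At)$ — described at the end of Section~\ref{SEC:SemLogProg} via \eqref{eq:deinitionKtheta} — only reindexes the root and the substitution labels at depth~$1$, which is exactly how the root label and outgoing substitutions of the saturated $\vee$-tree are affected when passing from $l$ to ${\theta}^{l} l$.

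I would then show that $t$ forms a morphism of $\lift{\li}\At \times \KUP(\cdot)$-coalgebrae from $(\lift{\li}\At, \langle \id, \pa{p}^{\sharp}\rangle)$ to $(\cof{\KUP}(\lift{\li}\At), c)$. By clause~2 of Definition~\ref{DEF:or_par_tree_general}, the children of the root of $T_l$ at index $\sigma \: n \to z$ are the goals $\tau_1 B^1 \lconc \dots \lconc \tau_k B^k$ arising from a choice, for each atom $A_i$ in $l = [A_1,\dots,A_k]$, of a clause $C_i$ and a unifier $\langle \sigma, \tau_i\rangle$ of $A_i$ and $H^i$. This is precisely the set produced by applying the rule system (l1)–(l3) that defines $\pa{p}^{\sharp}$ at substitution $\sigma$, as computed in the explicit description preceding Example~\ref{ex:parallelizationGeneral}. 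Hence the diagram $c \after t = (\id \times \KUP(t)) \after \langle \id, \pa{p}^{\sharp}\rangle$ commutes pointwise.

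Finally, since $\bb{\cdot}_{\pa{p}^{\sharp}}$ is by construction the unique $\lift{\li}\At \times \KUP(\cdot)$-coalgebra morphism from $\lift{\li}\At$ to the final coalgebra $\cof{\KUP}(\lift{\li}\At)$, finality forces $t = \bb{\cdot}_{\pa{p}^{\sharp}}$, yielding the claim. The main obstacle I anticipate is making the coalgebra morphism check sufficiently precise: one must carefully trace the composite defining $\pa{p}^{\sharp}$ through $\lift{\li}p^{\sharp}$, $\delta_{\lift{\li}\At}$, and $\KUP(\lconc)$ to match the synchronous generation of children described in Definition~\ref{DEF:or_par_tree_general}, in particular to see that the use of $\lambda$ inside $\delta$ corresponds exactly to choosing, for each atom of the goal, one clause body and concatenating the results, while the saturation embodied by $\varphi$ is what forces all atoms to agree on the same substitution $\sigma$.
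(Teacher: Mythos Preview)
Your proposal is correct and follows precisely the approach the paper intends: the paper does not spell out a proof here but simply remarks that the result is obtained in the same way as Proposition~\ref{prop:adequacy_sat}, i.e.\ by describing the elements of the cofree coalgebra as trees, checking that the map sending a goal to its saturated $\vee$-tree is a $\lift{\li}\At \times \KUP(\cdot)$-coalgebra morphism, and concluding by finality. Your extra care about naturality and about tracing $\pa{p}^{\sharp}$ through $\lift{\li}p^{\sharp}$, $\delta_{\lift{\li}\At}$ and $\KUP(\lconc)$ is exactly the detail the paper leaves implicit.
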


It is worth clarifying that the substitutions labeling edges in a saturated $\vee$-tree play the same role as  the substitutions labeling or-nodes in saturated $\wedge\vee$-trees. For any node in $\bb{l}_{\pa{p}^{\sharp}}$, say labeled with $l_1$, an outgoing edge with label $\theta$ connecting to a child $l_2$ means that $l_2$ is an element of $\pa{p}^{\sharp}(l')({\theta})$ --- \emph{cf.} the rule presentation of $\pa{p}^{\sharp}$ in Section \ref{sec:pargeneral}.

In analogy to Construction \ref{rmk:ANDORtoORtrees}, we now set a translation $\reprsat \: \cof{\KUP\lift{\li}}(\At) \to \cof{\KUP}(\lift{\li}\At)$ between saturated $\vee\wedge$-trees (Definition \ref{def:saturatedtree}) and saturated $\vee$-trees.

\begin{construction} Let $u \: \cof{\KUP \lift{\li}}(\At) \xrightarrow{\cong} \KUP \lift{\li} \At \times  \big(\cof{\KUP \lift{\li}}(\At)\big)$ be the final $\At \times \KUP\lift{\li}(\cdot)$-coalgebra structure on  $\cof{\KUP \lift{\li}}(\At)$ and define $d\ \df \eta^{\lift{\li}}_{\At} \times \id \after u$ as the extension of $u$ to a $\lift{\li}\At \times \KUP\lift{\li}(\cdot)$-coalgebra. Next we provide a $\lift{\li}\At \times \KUP(\cdot)$-coalgebra structure $\paa{d}$ on $\lift{\li} (\cof{\KUP \lift{\li}}(\At))$ in the way prescribed by Proposition~\ref{prop:bialgfreemonad}:
 \[\xymatrix@C=1cm@R=.2cm{
\lift{\li} (\cof{\KUP \lift{\li}}(\At)) \ar[dd]_-{\lift{\li} d} \ar[rr]^{\paa{d}} && \lift{\li}\At \times \KUP \lift{\li} \big(\cof{\KUP \lift{\li}}(\At)\big)\\
&\rotatebox[origin=c]{90}{\df} & \\
 \lift{\li} \Big(\lift{\li}\At \times \KUP\lift{\li} \big(\cof{\KUP \lift{\li}}(\At)\big)\Big) \ar[rr]^{\lambda'_{\cof{\KUP \lift{\li}}(\At)}} &&
 \lift{\li}\At \times \KUP \lift{\li} \lift{\li} \big(\cof{\KUP \lift{\li}}(\At)\big) \ar[uu]_{\id_{\lift{\li}\At} \times \KUP(\lconc)}
}\]
  By finality of $\cof{\KUP}(\lift{\li}\At)$ we obtain our representation map $\reprsat \df \bb{\cdot}_{\paa{d}} \after \eta^{\lift{\li}}_{\cof{\KUP \lift{\li}}(\At)}$.
\begin{equation}\label{diag:representationmapground}
\vcenter{\xymatrix{
\cof{\KUP \lift{\li}}(\At) \ar[d]_{u} \ar@/_5pc/[dd]_{d} \ar[r]^{\eta^{\lift{\li}}_{\cof{\KUP \lift{\li}}(\At)}} & \lift{\li} (\cof{\KUP \lift{\li}}(\At)) \ar[ddl]^{\paa{d}} \ar@{-->}[rr]^{\bb{\cdot}_{\paa{d}}} && \cof{\KUP}(\lift{\li}\At) \ar[dd]^{c} \\
\At \times \KUP \lift{\li} \big(\cof{\KUP \lift{\li}}(\At)\big) \ar[d]_{\eta^{\lift{\li}}_{\At} \times \id}\\
\lift{\li} \At \times \KUP \lift{\li} \big(\cof{\KUP \lift{\li}}(\At)\big) \ar[rrr]^{\id_{\lift{\li}\At} \times \KUP(\bb{\cdot}_{\paa{d}})} &&& \lift{\li}\At \times \KUP \cof{\KUP}(\lift{\li}\At)
}}
\end{equation}
\end{construction}

\begin{proposition}\label{prop:correctnessTranslParGen} For any atom $A \in \At(n)$, $\reprsat (\bb{A}_{p^{\sharp}}) = \bb{A}_{\pa{p}^{\sharp}}$.
\end{proposition}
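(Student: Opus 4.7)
The strategy is to mimic the argument of Proposition \ref{prop:reprIsWellBehaved}, replacing $\p_f$ with $\KUP$, $\li$ with $\lift{\li}$, $\lambda'$ with $\delta'$, and working throughout in $\prsh{\Lw}$ instead of $\set$. The diagrams and universal properties used in the ground case transfer without essential change, so the argument should go through smoothly.

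First I would observe that $\bb{\cdot}_{p^{\sharp}} \colon \At \to \cof{\KUP\lift{\li}}(\At)$ is a $\KUP\lift{\li}$-coalgebra morphism by construction (from Construction~\ref{Constr:coalgComonad_sat}, suitably adapted so that $\lift{\li}$ takes the role of the inner powerset). Pairing with the unit it becomes a morphism of $\lift{\li}\At \times \KUP\lift{\li}(\cdot)$-coalgebras from $(\At,<\eta^{\lift{\li}}_{\At},p^{\sharp}>)$ to $(\cof{\KUP\lift{\li}}(\At),d)$. Then, invoking the functoriality clause of Proposition~\ref{prop:bialgfreemonad} applied to the distributive law of the monad $\lift{\li}$ over the functor $\lift{\li}\At \times \KUP(\cdot)$, applying $\lift{\li}$ yields a $\delta'$-bialgebra morphism
\[
\lift{\li}\bb{\cdot}_{p^{\sharp}} \colon (\lift{\li}\At, \lconc, \paa{<\eta^{\lift{\li}}_{\At},p^{\sharp}>}) \to (\lift{\li}(\cof{\KUP\lift{\li}}(\At)), \lconc, \paa{d}).
\]
A quick calculation (or commutativity of diagram~\eqref{eq:genpowconstrground} read in $\prsh{\Lw}$) shows that the bialgebraic structure of the domain coincides with the one associated with $\pa{p}^{\sharp}$, i.e.\ $\paa{<\eta^{\lift{\li}}_{\At},p^{\sharp}>} = <\id_{\lift{\li}\At},\pa{p}^{\sharp}>$.

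Next I would use the finality of the $\delta'$-bialgebra $(\cof{\KUP}(\lift{\li}\At), \lconcc, c)$ (Proposition~\ref{prop:finalcoalgBialg}). Both $\bb{\cdot}_{\paa{d}} \after \lift{\li}\bb{\cdot}_{p^{\sharp}}$ and $\bb{\cdot}_{\pa{p}^{\sharp}}$ are $\delta'$-bialgebra morphisms from $(\lift{\li}\At, \lconc, <\id_{\lift{\li}\At},\pa{p}^{\sharp}>)$ to $(\cof{\KUP}(\lift{\li}\At), \lconcc, c)$, hence they must coincide:
\[
\bb{\cdot}_{\paa{d}} \after \lift{\li}\bb{\cdot}_{p^{\sharp}} \ =\ \bb{\cdot}_{\pa{p}^{\sharp}}.
\]
Precomposing both sides with $\eta^{\lift{\li}}_{\At}$ and using the naturality of $\eta^{\lift{\li}}$ (which gives $\lift{\li}\bb{\cdot}_{p^{\sharp}} \after \eta^{\lift{\li}}_{\At} = \eta^{\lift{\li}}_{\cof{\KUP\lift{\li}}(\At)} \after \bb{\cdot}_{p^{\sharp}}$) together with the definition $\reprsat = \bb{\cdot}_{\paa{d}} \after \eta^{\lift{\li}}_{\cof{\KUP\lift{\li}}(\At)}$ yields
\[
\reprsat \after \bb{\cdot}_{p^{\sharp}} \ =\ \bb{\cdot}_{\pa{p}^{\sharp}} \after \eta^{\lift{\li}}_{\At}.
\]
Evaluating at $A \in \At(n)$ (and reading $\bb{A}_{\pa{p}^{\sharp}}$ as $\bb{[A]}_{\pa{p}^{\sharp}}$, matching the notational convention of the ground case in Proposition~\ref{prop:reprIsWellBehaved}) delivers the claim.

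The only step that requires some care is checking that the functorial lifting of Proposition~\ref{prop:bialgfreemonad} really applies in this setting: one needs $\delta \colon \lift{\li}\KUP \To \KUP\lift{\li}$ to be a distributive law of the monad $\lift{\li}$ over the functor $\KUP$ (which is a weaker requirement than the monad/monad law already established for $\delta$). This follows immediately because a distributive law of monads is, a fortiori, a distributive law of a monad over the underlying functor, so the hypotheses of Proposition~\ref{prop:bialgfreemonad} are met. I do not foresee any genuine obstacle; the proof is essentially a reenactment of the ground-case argument in the presheaf setting, made possible by the fact that all the constructions involved (cofree coalgebra, bialgebraic extension, finality) are inherited pointwise from $\set$.
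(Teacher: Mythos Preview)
Your proposal is correct and follows essentially the same approach as the paper: the paper's own proof simply states that the argument is entirely analogous to Proposition~\ref{prop:reprIsWellBehaved}, and you have carried out precisely that analogy, with the expected substitutions $\p_f \rightsquigarrow \KUP$, $\li \rightsquigarrow \lift{\li}$, $\lambda' \rightsquigarrow \delta'$. The only cosmetic difference is that you invoke finality at the level of $\delta'$-bialgebrae whereas the ground-case proof phrases it as finality of $\li\At \times \p_f(\cdot)$-coalgebrae, but since every $\delta'$-bialgebra morphism is in particular such a coalgebra morphism and the final bialgebra sits on the final coalgebra, this is a distinction without a difference.
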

\begin{proof} The proof is entirely analogous to the one provided for the ground case, see Proposition~\ref{prop:reprIsWellBehaved}.
\end{proof}

We now focus on the notion of refutation subtree associated with saturated $\vee$-trees. As outlined in the introduction, here lies one of the main motivations for bialgebraic semantics. When defining refutation subtrees for saturated semantics (Definition \ref{Def:subtree_sat}), we had to require that they were \emph{synched}. The corresponding operational intuition is that, at each derivation step, the proof-search for the atoms in the current goal can only advance by applying to all of them the same substitution. If we did not impose such condition, we would take into account derivation subtrees yielding an unsound refutation, where the same variable is substituted for different values, as shown in Example~\ref{Ex:unsound_AndOrTree}.

The deep reason for requiring such constraint is that, to be sound, the explicit and-parallelism exhibited by saturated $\wedge\vee$-trees has to respect some form of dependency between the substitutions applied on different branches. Coinductive trees (Definition~\ref{Def:coinductive_trees_Power}) achieve it by construction, because all substitutions applied in the goal have to be identities. For saturated $\vee$-trees, this property is also given by construction, but in a more general way: at each step the same substitution (not necessarily the identity) is applied on all the atoms of the goal. This synchronicity property is already encoded in the operational semantics $\pa{p}^{\sharp}$, as immediately observable in its rule presentation, and arises by definition of $\delta$.

By these considerations, subtrees of saturated $\vee$-trees are always synched (in the sense of Definition~\ref{Def:subtree_sat}) and we can define a sound notion of derivation as in the ground case, without the need of additional constraints.
%On the other hand, the definition of $\pa{p}^{\sharp}$ (in particular the behavior of the distributive law $\delta$) and thus of $\bb{\cdot}_{\pa{p}^{\sharp}}$ forces subtrees of an $\vee$-tree to be always synched. This means that we can define a sound notion of refutation subtree for $\vee$-trees just as we did for the ground case, without the need of additional constraints.

 \begin{definition}\label{Def:ParGen_subtree} Let $\tree$ be a saturated $\vee$-tree in $\cof{\KUP}(\li\At)(n)$ for some $n \in \Lw$. A \emph{derivation subtree} of $\tree$ is a sequence of
nodes $s_1, s_2, \dots $ such that $s_1$ is the root of $\tree$ and $s_{i+1}$ is a child of $s_i$.
A \emph{refutation subtree} is a finite derivation subtree $s_1, \dots, s_k $ where the last element $s_k$ is labeled with the empty list. Its \emph{answer} is the substitution $\theta_{k} \circ \dots \theta_2 \circ \theta_1$, where $\theta_i$ is the substitution labeling the edge between $s_i$ and $s_{i+1}$.
 \end{definition}

The following statement about refutation subtrees will be useful later.
\newcommand{\proplemmasubtreesamesubts}{
Let $\mb{P}$ be a logic program and $l \in \li\At(n)$ a list of atoms.
If $\bb{l}_{\pa{p}^{\sharp}}$ has a refutation subtree with answer $\theta \colon n\to m$, then it has also a refutation subtree $\tree = s_1, s_2, \dots $ with the same answer where the edge connecting $s_1$ and $s_2$ is labeled with $\theta$ and all the other edges in $\tree$ are labeled with $\id_m$.
 }
\begin{proposition}\label{prop:lemmasubtreesamesubts}
\proplemmasubtreesamesubts
\end{proposition}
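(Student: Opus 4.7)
My plan is to construct the required refutation subtree of $\bb{l}_{\pa{p}^{\sharp}}$ explicitly, by ``absorbing'' all substitutions $\theta_2, \ldots, \theta_{k-1}$ into the first step. Write the given refutation subtree as $s_1, \ldots, s_k$ with node labels $l_1 = l, l_2, \ldots, l_k = \elist$, where $l_i \in \lift{\li}\At(n_i)$, and edge labels $\theta_i \colon n_i \to n_{i+1}$, so that $\theta = \theta_{k-1} \after \cdots \after \theta_1 \colon n_1 \to n_k$, with $n_1 = n$ and $n_k = m$. For $2 \leq i \leq k$, let $\psi_i \df \theta_{k-1} \after \cdots \after \theta_i \colon n_i \to m$ (with $\psi_k \df \id_m$), and set $l'_1 \df l$ and $l'_i \df \lift{\li}\At(\psi_i)(l_i) \in \lift{\li}\At(m)$ for $i \geq 2$. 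Since $\psi_k = \id_m$ we have $l'_k = \elist$, so the candidate sequence automatically ends at the empty list.

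The heart of the argument is to verify that $l = l'_1 \tr{\theta} l'_2 \tr{\id_m} l'_3 \tr{\id_m} \cdots \tr{\id_m} l'_k$ is a refutation subtree of $\bb{l}_{\pa{p}^{\sharp}}$. For the first step, rule $(l2)$ in the SOS presentation of $\pa{p}^{\sharp}$ tells us that the membership $l_2 \in \pa{p}^{\sharp}(l)(\theta_1)$ is witnessed by a list of clauses $[C_j = H^j \seq B^j_1, \ldots, B^j_{i_j}]_j$ together with unifiers $\langle \theta_1, \tau_j \rangle$ of each atom $A_j$ of $l$ with $H^j$, the resulting goal being the concatenation of the $\tau_j$-instantiated bodies. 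Postcomposing these unifiers with $\psi_2$ yields $\langle \theta, \psi_2 \after \tau_j \rangle$, since $A_j \theta = (A_j \theta_1)\psi_2 = H^j(\psi_2 \after \tau_j)$, and the resulting concatenation is exactly $l'_2 = \lift{\li}\At(\psi_2)(l_2)$, so $l'_2 \in \pa{p}^{\sharp}(l)(\theta)$. The identity steps $l'_i \tr{\id_m} l'_{i+1}$ for $i \geq 2$ are handled analogously: the witnesses of $l_{i+1} \in \pa{p}^{\sharp}(l_i)(\theta_i)$ given by clauses $[D^i_j]_j$ and unifiers $\langle \theta_i, \rho^i_j \rangle$ yield, upon postcomposition with $\psi_{i+1}$, unifiers $\langle \id_m, \psi_{i+1} \after \rho^i_j \rangle$ of the atoms of $l'_i = \lift{\li}\At(\psi_i)(l_i)$ with the corresponding heads, witnessing $l'_{i+1} \in \pa{p}^{\sharp}(l'_i)(\id_m)$.

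The main obstacle is bookkeeping: threading types through the $\psi_i$'s and confirming that the concatenation of instantiated body fragments indeed reproduces $\lift{\li}\At(\psi_{i+1})(l_{i+1}) = l'_{i+1}$. There is little conceptual depth beyond this: the argument rests on the fact that the saturated semantics $p^{\sharp}$ collects \emph{all} unifiers, so any witnessing unifier may be postcomposed with further substitutions without leaving the set, and $\pa{p}^{\sharp}$ inherits this closure from $p^{\sharp}$ through the distributive law $\delta$. The resulting sequence $l'_1, \ldots, l'_k$ is thus a refutation subtree of $\bb{l}_{\pa{p}^{\sharp}}$ with first edge labeled $\theta$, subsequent edges labeled $\id_m$, and answer $\id_m \after \cdots \after \id_m \after \theta = \theta$, as required.
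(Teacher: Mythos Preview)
Your argument is correct and follows essentially the same construction as the paper's proof. The paper packages the two verifications you carry out concretely as derived rules $(l4)$ (postcomposition of a transition with a further substitution) and $(l5)$ (base-change along $\sigma\after\tau=\sigma'\after\tau'$), invoking them by analogy with the properties $(\dag)$ and $(\ddag)$ used for Proposition~\ref{prop:subtree}; you instead unpack $\pa{p}^{\sharp}$ directly in terms of clauses and unifiers, but the underlying inductive construction of the new path $l'_1,\dots,l'_k$ with $l'_i=l_i\psi_i$ is identical.
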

\begin{proof}
See Appendix \ref{Sec:appendix}.
\end{proof}

\begin{example}\label{ex:parrefutationgluinggen}
The saturated $\vee$-trees for $[\pNat(x_1)]$ and $[\pList(\tcons(x_1,x_2))]$ in Figure \ref{fig:parallezidesemantics} contain several refutation subtrees: for instance,
 \begin{equation}\label{subtree2}
\xymatrix{[\pList(\tcons(x_1,x_2))] \ar@{-}[rr]^-{<\tzero, x_2>} && [\pNat(\tzero), \pList(\tcons(x_2))] \ar@{-}[rr]^-{<x_1,\tnil>} && \elist}
 \end{equation}
 is a refutation subtree of $[\pList(\tcons(x_1,x_2))]$ with answer $ <x_1,\tnil> \circ <\tzero, x_2> =<\tzero, \tnil>$. Observe that $[\pNat(x_1)]$ has a refutation subtree with the same answer: indeed for any substitution $\theta\colon 2 \to m$,
 \begin{equation}\label{subtree1}
\xymatrix{[\pNat(x_1)] \ar@{-}[rr]^-{<\tzero, x_2>} && \elist \ar@{-}[rr]^-{\theta} && \elist}
 \end{equation}
is a refutation subtree with answer $\theta \circ <\tzero, x_2>$. This is because rule $(l3)$ yields $\elist\tr{\theta}\elist$, which is graphically represented in our figures by an unlabeled edge.
%for instance $[\pNat(x_1)], \elist$ is a refutation subtree of $[\pNat(x_1)]$ with answer $<\tzero, x_2>$ but also $[\pNat(x_1)], \elist, \elist$ is a refutation subtree with answer $\theta \circ <\tzero, x_2>$ where $\theta\colon 2 \to m$ is any substitution. To see this, remember that rule $(l3)$ ensures that $\elist\tr{\theta}\elist$ which is graphically represented in our figures by an unlabeled edge.
%

It is interesting to note that concatenating \eqref{subtree1} (where $\theta = <x_1,\tnil>$) with \eqref{subtree2} via $\lconcc$ yields a refutation subtree for $[\pNat(x_1), \pList(\tcons(x_1,x_2))]$ with the same answer: this is
 \begin{equation}\label{subtree3}
\xymatrix{[\pNat(x_1),\pList(\tcons(x_1,x_2))] \ar@{-}[rr]^-{<\tzero, x_2>} && [\pNat(\tzero), \pList(\tcons(x_2))] \ar@{-}[r]^-{<x_1,\tnil>} & \elist}
 \end{equation}
%$$[\pNat(x_1),\pList(\tcons(x_1,x_2))], [\pNat(\tzero), \pList(\tcons(x_2))], \elist$$
depicted at the center of Figure \ref{fig:saturatedparallelcomposition}. More generally, if two trees $T_1$ and $T_2$ have refutation subtrees with the same sequence of substitutions $\theta_{k} \circ \dots \theta_2 \circ \theta_1$, then also $T_1\lconcc T_2$ has a refutation subtree with the same sequence.

\medskip

For an example of the behaviour of $r_{sat}$, consider part of the saturated $\wedge \vee$-tree for $\pList ( \tcons(x_{1},x_2) )$ depicted below (Example \ref{ex:thetatree} discusses a different part of the same tree). This is mapped by $r_{sat}$ into the saturated $\vee$-tree of $[\pList ( \tcons(x_{1},x_2) )]$ shown on the right of Figure \ref{fig:parallezidesemantics}.

{\scriptsize
 \[
% \begin{tiny}
% \def\objectstyle{\scriptstyle}\def\labelstyle{\scriptstyle}
\xymatrix@C=.01cm@R=.5cm{
 & && \pList ( \tconsZ(x_{1},x_2) )  \ar@{-}@(dl,u)[ld] \ar@{-}@(dr,u)[drr] \ar@{-}@(r,u)[drrrrrr] & & &\\
  &     &        \nodoor{<x_1,x_2>} \ar@{-}[ld] \ar@{-}[rd]    &  & & \nodoor{<\tzeroZ,x_2>}  \ar@{-}[ld] \ar@{-}[rd] & &&& \nodoor{<\tsuccZ(x_1),x_2>}  \ar@{-}[ld] \ar@{-}[rd] \\
&   \pNat(x_1) \ar@{-}[ldd] \ar@{-}[dd] \ar@{-}[rd]    & & \pList(x_2)   \ar@{-}[dd] \ar@{-}[ldd] \ar@{-}[ld]      &   \pNat(\tzeroZ)  \ar@{-}[dd] \ar@{-}[rd] &    & \pList(x_2) \ar@{-}[ld]  \ar@{-}[dd]  &   & \pNat(\tsuccZ(x_1)) \ar@{-}[ldd] \ar@{-}[dd] \ar@{-}[rd] &  & \pList(x_2) \ar@{-}[ld]  \ar@{-}[ldd] \ar@{-}[dd] \\
 & & \dots & & & \dots & & & & \dots & & &
 \\
 \nodoor{<\tzeroZ,x_2>}    &  \nodoor{<\tzeroZ,\tnilZ>}            &   \nodoor{<\tzeroZ,\tnilZ>}  & \nodoor{<x_1,\tnilZ>}     &       \nodoor{<x_1,\tnilZ>}              &    & \nodoor{<x_1,\tnilZ>} & \nodoor{<x_1,\tnilZ>} \ar@{-}[d] & \nodoor{<\tzeroZ,\tnilZ>} \ar@{-}[d] & \nodoor{<x_1,\tnilZ>} & \nodoor{<\tzeroZ,\tnilZ>} && \\
&&&& &&& \pNat(x_1) \ar@{-}[d] & \pNat(\tzeroZ)\ar@{-}[d] \ar@{-}[rd] \\
&&&& &&&         \dots         &  \nodoor{<x_1,x_2>} & \dots }
%\end{tiny}
\]
}

The representation map $r_{sat}$ behaves similarly to the one given for the ground case (\emph{cf.} Example \ref{ex:representation}), the main difference being that, in saturated $\wedge\vee$-trees, $\vee$-nodes are now labeled with substitutions: the effect of $r_{sat}$ is to move such substitutions to the edges of the target saturated $\vee$-trees. For instance, the label $<x_1,x_2>$ on the $\vee$-node (on the left above) is moved in Figure \ref{fig:parallezidesemantics} to the edge connecting the root $[\pList(\tcons(x_1,x_2))]$ with the node labeled with $[\pNat(x_1),\pList(x_2)]$. Observe that this node has one $<\tzero, \tnil>$-child and no children associated with $<\tzero, x_2>$ or $<x_1, \tnil>$: instead, those two substitutions label one child of $\pNat(x_1)$ and one of $\pList(x_2)$, respectively (on the left above). Intuitively, the children of $[\pNat(x_1),\pList(x_2)]$ are given by considering only the children of $\pNat(x_1)$ and those of $\pList(x_2)$ labeled with the same substitution.

It is worth to observe that, for every synched refutation subtree $T'$ (Definition \ref{Def:subtree_sat}) of a saturated $\wedge\vee$-tree $T$ there is a refutation subtree in $r_{sat}(T)$ with the same answer. The effect on $T'$ of applying $r_{sat}$ to $T$ can be described by the following procedure: for every depth in $T'$, (a) all the $\wedge$-nodes are grouped into a single node whose label lists all the labels of these $\wedge$-nodes; (b) the $\vee$-nodes become an edge whose label is the common substitution labeling all these $\vee$-nodes. For an example, we depict below a synched derivation subtree on the left and the corresponding subtree on the right. (In the saturated $\wedge\vee$-tree above there are other three synched refutation subtrees: the reader can find the corresponding refutation subtrees in the $\vee$-tree on the right of Figure \ref{fig:parallezidesemantics}.)
{\scriptsize
 \[
% \begin{tiny}
% \def\objectstyle{\scriptstyle}\def\labelstyle{\scriptstyle}
\xymatrix@C=.01cm@R=.15cm{
 & \pList ( \tconsZ(x_{1},x_2) )  \ar@{-}[d]  \\
&  \nodoor{<x_1,x_2>} \ar@{-}[dl] \ar@{-}[dr] \\
   \pNat(x_1)  \ar@{-}[d] & & \pList(x_2) \ar@{-}[d]  \\
  \nodoor{<\tzeroZ,\tnilZ>}            &    &   \nodoor{<\tzeroZ,\tnilZ>}    \\
}
\qquad \qquad \qquad \qquad \qquad \qquad
%\end{tiny}
\xymatrix@C=.01cm@R=.30cm{
 [\pList ( \tconsZ(x_{1},x_2) )]  \ar@{-}[dd]^{<x_1,x_2>}  \\
 \\
   [\pNat(x_1)  , \pList(x_2)] \ar@{-}[dd]^{<\tzeroZ,\tnilZ>}  \\
    \\
    \elist }
\]
}
Such transformation is neither injective nor surjective. Against surjectivity, consider the refutation subtree \eqref{subtree1}: it does not correspond to any synched derivation subtree because it contains two occurrences of $\elist$, while the above procedure always transforms synched derivation subtrees into refutation subtrees with exactly one occurrence of $\elist$. Nevertheless, using the construction of Proposition \ref{prop:lemmasubtreesamesubts} one can show that for every refutation subtree in $r_{sat}(T)$ there exists a synched refutation subtree in $T$ with the same answer. For instance, \eqref{subtree1} corresponds to the synched refutation subtree with root $\pNat(x_1)$ and the only other node a child of $\pNat(x_1)$ labeled with $\theta \circ <\tzero, x_2>$.
\end{example}

Generalizing the observations of Example \ref{ex:parrefutationgluinggen} above, we have that a saturated $\wedge\vee$-tree $T$ has a synched derivation subtree with answer $\theta$ if and only if $r_{sat}(T)$ has a refutation subtree with answer $\theta$. In case $\tree = \bb{A}_{p^{\sharp}}$ for some atom $A \in \At(n)$, then $\reprsat (\tree) = \bb{[A]}_{\pa{p}^{\sharp}}$ by Proposition \ref{prop:correctnessTranslParGen}. This yields the following statement.

\begin{proposition}\label{prop:translationpreservessubtreesGen} Fix an atom $A \in \At(n)$ and a program $\mb{P}$. The following are equivalent.
\begin{enumerate}[label={(\Roman*)}]
  \item The saturated $\vee$-tree for $[A]  \in \lift{\li}\At(n)$ in $\mb{P}$ has a refutation subtree with answer~ $\theta$. \label{pt:lemmacompletenessPar}
    \item The saturated $\wedge\vee$-tree for $A$ in $\mb{P}$ has a synched refutation subtree with answer~ $\theta$. \label{pt:lemmacompletenessSat}
\end{enumerate}
\end{proposition}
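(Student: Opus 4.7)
The strategy is to use the representation map $\reprsat$ as the bridge between the two kinds of subtrees, leveraging the identity $\reprsat(\bb{A}_{p^{\sharp}}) = \bb{[A]}_{\pa{p}^{\sharp}}$ of Proposition \ref{prop:correctnessTranslParGen}. The plan is to establish an answer-preserving correspondence between synched refutation subtrees of $\bb{A}_{p^{\sharp}}$ and refutation subtrees of $\bb{[A]}_{\pa{p}^{\sharp}}$, by giving explicit constructions in both directions that mirror the layer-wise behaviour of $\reprsat$ already illustrated in Example \ref{ex:parrefutationgluinggen}.

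For the direction $(\ref{pt:lemmacompletenessSat}) \Rightarrow (\ref{pt:lemmacompletenessPar})$, I would start with a synched refutation subtree $T'$ of $\bb{A}_{p^{\sharp}}$ of maximal depth $2k+1$ with answer $\theta = \theta_{2k+1} \circ \cdots \circ \theta_1$, where $\theta_{2i+1}$ is the common label of the $\vee$-nodes of $T'$ at depth $2i+1$ (this common label exists by condition (4) of Definition \ref{Def:subtree_sat}). I would then construct a sequence of nodes $r_0, r_1, \dots, r_{k+1}$ in $\bb{[A]}_{\pa{p}^{\sharp}}$ where $r_i$ is labeled with the left-to-right concatenation of the labels of the $\wedge$-nodes of $T'$ at depth $2i$, and the edge from $r_i$ to $r_{i+1}$ is labeled $\theta_{2i+1}$. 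Since the leaves of $T'$ are $\vee$-nodes whose parents are $\wedge$-nodes with no children in $\bb{A}_{p^{\sharp}}$ below depth $2k$, the node $r_{k+1}$ carries the empty list, so that the constructed path is a refutation subtree with answer exactly $\theta$. The only real verification is that this path is indeed a subtree of $\reprsat(\bb{A}_{p^{\sharp}})$, which follows from unfolding the definition of $\reprsat$ via Proposition \ref{prop:bialgfreemonad}: the parent--child relation in $\reprsat(T)$ groups together the $\wedge$-children of those $\vee$-nodes at the same depth which share a common substitution label, which is precisely what happens when the synched condition on $T'$ is satisfied.

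For the direction $(\ref{pt:lemmacompletenessPar}) \Rightarrow (\ref{pt:lemmacompletenessSat})$, I would first apply Proposition \ref{prop:lemmasubtreesamesubts} to replace the given refutation subtree of $\bb{[A]}_{\pa{p}^{\sharp}}$ with one of the normal form $s_1 \xrightarrow{\theta} s_2 \xrightarrow{\id_{m}} s_3 \xrightarrow{\id_{m}} \cdots \xrightarrow{\id_{m}} s_{k+1}$, with $s_{k+1}$ labeled by $\elist$ and overall answer $\theta$. Working backwards through the description of $\reprsat$, I would reconstruct a synched subtree $T'$ of $\bb{A}_{p^{\sharp}}$: for each $s_i$ with label $[B_1^i,\dots,B_{n_i}^i]$, I place $n_i$ many $\wedge$-nodes at depth $2(i-1)$ of $T'$ carrying the atoms $B_1^i,\dots,B_{n_i}^i$; for each edge $s_i \xrightarrow{\sigma_i} s_{i+1}$, I place at depth $2i-1$ exactly $n_i$ many $\vee$-nodes, all labeled $\sigma_i$, each being the child of the corresponding $\wedge$-node above and admitting precisely the $\wedge$-children prescribed by the fact that $s_{i+1}$ is the concatenation of the resulting bodies. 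The subtree $T'$ so produced is synched by construction, its leaves are $\vee$-nodes (because $s_{k+1} = \elist$ yields no $\wedge$-children at depth $2k$), and its answer is $\id_m \circ \cdots \circ \id_m \circ \theta = \theta$.

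The main obstacle I anticipate is purely combinatorial: carefully unwinding the (bialgebraic, hence somewhat opaque) definition of $\reprsat$ in Construction \ref{rmk:ANDORtoORtrees} to obtain the concrete layer-wise relationship between the $\wedge\vee$-structure of $\bb{A}_{p^{\sharp}}$ and the $\vee$-structure of $\bb{[A]}_{\pa{p}^{\sharp}}$ sketched above. Once this explicit characterisation is available, both constructions become essentially routine and visibly inverse to each other (modulo the normalisation granted by Proposition \ref{prop:lemmasubtreesamesubts}, which is needed because $\reprsat$ is not injective on refutation subtrees), so that the equivalence follows.
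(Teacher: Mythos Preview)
Your proposal is correct and follows essentially the same approach as the paper. The paper's own argument is extremely terse---it simply points to the layer-wise description of $\reprsat$ worked out in Example~\ref{ex:parrefutationgluinggen}, invokes Proposition~\ref{prop:correctnessTranslParGen} for the identification $\reprsat(\bb{A}_{p^{\sharp}}) = \bb{[A]}_{\pa{p}^{\sharp}}$, and appeals to Proposition~\ref{prop:lemmasubtreesamesubts} for the non-surjective direction---and your write-up spells out precisely these ingredients in greater detail.
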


%As a corollary we obtain that the parallelizing construction yields a sound and complete semantics.
\begin{corollary}[Soundness and Completeness I]\label{Th:CompletenessPar}  Let $\mb{P}$ be a logic program and $A \in \At(n)$ an atom. The following statement is equivalent to any of the three of Theorem \ref{Th:Completeness}.
\begin{enumerate}
\setcounter{enumi}{3}
  \item The saturated $\vee$-tree for $[A]  \in \lift{\li}\At(n)$ in $\mb{P}$ has a refutation subtree with answer~ $\theta$. \label{pt:completenessPar}
\end{enumerate}
\end{corollary}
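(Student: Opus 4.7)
The plan is to reduce the corollary directly to results already established in the paper. Statement (4) concerns refutation subtrees of saturated $\vee$-trees, while statements (1)–(3) of Theorem \ref{Th:Completeness} were already shown to be equivalent among themselves. So the entire task reduces to proving one more equivalence: (4) $\Leftrightarrow$ (1).

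This is precisely what Proposition \ref{prop:translationpreservessubtreesGen} delivers. More explicitly, I would observe that statement (4) is verbatim item \ref{pt:lemmacompletenessPar} of that proposition, while statement (1) of Theorem \ref{Th:Completeness} is verbatim item \ref{pt:lemmacompletenessSat}. Hence, by Proposition \ref{prop:translationpreservessubtreesGen}, (4) is equivalent to (1), and by Theorem \ref{Th:Completeness}, (1) is equivalent to (2) and (3). Composing these equivalences yields that (4) is equivalent to all of (1), (2), (3), as claimed.

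In short, the proof is one sentence: \emph{apply Proposition \ref{prop:translationpreservessubtreesGen} to bridge statement (4) with statement (1) of Theorem \ref{Th:Completeness}, and then invoke Theorem \ref{Th:Completeness} itself.} There is no real obstacle left, because the hard work has been front-loaded: the translation $\reprsat$ (via Proposition \ref{prop:correctnessTranslParGen}) transports refutation subtrees of the saturated $\vee$-tree for $[A]$ to synched refutation subtrees of the saturated $\wedge\vee$-tree for $A$ and vice versa, preserving the answer substitution. Since that preservation is exactly the content of Proposition \ref{prop:translationpreservessubtreesGen}, nothing further is required beyond citing it.
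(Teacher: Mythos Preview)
Your proposal is correct and matches the paper's own proof essentially verbatim: the paper also reduces the corollary to the equivalence $(4)\Leftrightarrow(1)$ and cites Proposition~\ref{prop:translationpreservessubtreesGen} for that step, invoking Theorem~\ref{Th:Completeness} for the rest.
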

\begin{proof} It suffices to prove the equivalence between \eqref{pt:completenessPar} and statement \eqref{pt:completenessSat} of Theorem \ref{Th:Completeness}, which is given by Proposition \ref{prop:translationpreservessubtreesGen}.
\end{proof}

In fact, since both bialgebraic semantics and SLD-resolution are defined on arbitrary goals, we can state the following stronger result.

\begin{theorem}[Soundness and Completeness II] \label{th:CompletenessParSLD} Let $\mb{P}$ be a logic program and $G = [A_1,\dots,A_k]  \in \lift{\li}\At(n)$ be a goal. The following are equivalent.
\begin{enumerate}
  \item The saturated $\vee$-tree for $G$ in $\mb{P}$ has a refutation subtree with answer~ $\theta$.
  \item There is an SLD-refutation for $G$ in $\mb{P}$ with correct answer $\theta$.
\end{enumerate}
\end{theorem}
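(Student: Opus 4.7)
The plan is to reduce the statement to the atomic case established in Corollary \ref{Th:CompletenessPar}, using the two-fold compositionality of the bialgebraic semantics (Theorem \ref{thm:ANDcompoGen}) on one side and the standard AND-compositionality of SLD-resolution (see e.g. \cite{Apt:AndComp_LP,Lloyd93}) on the other.

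For the direction $(1 \Rightarrow 2)$, Theorem \ref{thm:ANDcompoGen} gives the equality $\bb{G}_{\pa{p}^{\sharp}} = \bb{[A_1]}_{\pa{p}^{\sharp}} \lconcc \cdots \lconcc \bb{[A_k]}_{\pa{p}^{\sharp}}$. A refutation subtree $s_1, \dots, s_r$ of $\bb{G}_{\pa{p}^{\sharp}}$ with answer $\theta$, whose edges are labeled by substitutions $\theta_1,\dots,\theta_{r-1}$, can be decomposed into refutation subtrees of each component $\bb{[A_i]}_{\pa{p}^{\sharp}}$: by rule $(l2)$ of the operational semantics of $\pa{p}^{\sharp}$, a $\theta_j$-transition of the concatenation requires every component to make a $\theta_j$-transition, and whenever the concatenated label is $\elist$ necessarily each component label is $\elist$ too. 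Hence each $\bb{[A_i]}_{\pa{p}^{\sharp}}$ has a refutation subtree with the same substitution sequence $\theta_1,\dots,\theta_{r-1}$, and therefore the same answer $\theta$. Corollary \ref{Th:CompletenessPar} then yields, for every $i$, an SLD-refutation of $A_i$ with correct answer $\theta$, and by AND-compositionality of SLD-resolution these combine into an SLD-refutation of $G$ with correct answer $\theta$.

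For the direction $(2 \Rightarrow 1)$, the same AND-compositionality extracts, out of an SLD-refutation of $G$ with correct answer $\theta$, individual SLD-refutations of every $A_i$ with correct answer $\theta$. By Corollary \ref{Th:CompletenessPar}, each $\bb{[A_i]}_{\pa{p}^{\sharp}}$ then has a refutation subtree with answer $\theta$. In order to glue them via $\lconcc$, I would first normalize these subtrees using Proposition \ref{prop:lemmasubtreesamesubts}, obtaining refutation subtrees whose first edge is labeled by $\theta$ and whose remaining edges are labeled by $\id_m$. Since different atoms may produce subtrees of different depths, I would then pad them to a common length using rule $(l3)$ (i.e.\ $\elist \xrightarrow{\id_m} \elist$), so that they all share the very same substitution sequence. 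The gluing principle for $\lconcc$ on refutation subtrees with matching substitution sequences (illustrated in Example \ref{ex:parrefutationgluinggen}) then yields a refutation subtree of $\bb{[A_1]}_{\pa{p}^{\sharp}} \lconcc \cdots \lconcc \bb{[A_k]}_{\pa{p}^{\sharp}}$, which by Theorem \ref{thm:ANDcompoGen} coincides with $\bb{G}_{\pa{p}^{\sharp}}$, and which still has answer $\theta$.

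The delicate step in both directions is bridging the synchronous nature of $\pa{p}^{\sharp}$ --- where all the atoms in the current goal must advance under the same substitution at every step --- with the sequential nature of SLD-resolution, which processes one atom at a time. This bridging is made possible by combining Corollary \ref{Th:CompletenessPar}, which already handles the atomic case, with the classical AND-compositionality theorem for SLD-resolution, and relies crucially on the ability to pad trivial computations on $\elist$ so that synchronised concatenation can be applied uniformly across subtrees of differing lengths.
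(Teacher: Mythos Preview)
Your proposal is correct and follows essentially the same route as the paper: reduce to the atomic case via Corollary~\ref{Th:CompletenessPar}, use the standard $\wedge$-compositionality of SLD-resolution on one side and Theorem~\ref{thm:ANDcompoGen} on the other, and handle the gluing/decomposition along $\lconcc$ by first normalising each component subtree with Proposition~\ref{prop:lemmasubtreesamesubts}. Your explicit mention of padding the shorter subtrees via rule~$(l3)$ so that all components share the same substitution sequence is a detail the paper leaves implicit but is indeed needed for the gluing step to go through.
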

\begin{proof} Fix a program $\mb{P}$ and the goal $G = [A_1,\dots,A_k]$. The statement is given by the following reasoning.
\begin{flalign*}
 &G \text{ has an SLD-refutation }\hspace{-12 pt} & & \\
 &\text{ with correct answer } \theta  & \Leftrightarrow\ & \text{ each } [A_i] \text{ has an SLD-refutation with correct answer } \theta \\
 & \hspace{.8cm}\text{(Corollary \ref{Th:CompletenessPar}) } & \Leftrightarrow\ & \text{ each } \bb{[A_i]}_{\pa{p}^{\sharp}}\text{ has a refutation subtree with answer } \theta\\
 & & \Leftrightarrow\ &
  \bb{[A_1]}_{\pa{p}^{\sharp}} \lconcc \dots \lconcc \bb{[A_k]}_{\pa{p}^{\sharp}}\text{ has a refutation subtree with answer }\theta \\
  & \hspace{.8cm}\text{(Theorem \ref{thm:ANDcompoGen}) } & \Leftrightarrow\ &
  \bb{G}_{\pa{p}^{\sharp}}\text{ has a refutation subtree with answer }\theta.
\end{flalign*}
The first equivalence is a basic fact implied by the definition of SLD-resolution. The~third equivalence comes from the observation that, like in the ground case, $\lconcc$ preserves and reflects the property of yielding a refutation. To see this, suppose that $\bb{[A_k]}_{\pa{p}^{\sharp}},\dots,\bb{[A_k]}_{\pa{p}^{\sharp}}$ all have refutation subtrees with answer $\theta \: n \to m$. By Proposition \ref{prop:lemmasubtreesamesubts}, for each $i$ we can pick a refutation subtree $\tree_i$ of $\bb{[A_i]}_{\pa{p}^{\sharp}}$ with answer $\theta$, such that $\theta$ is the substitution labeling the edge connecting depth $0$ and $1$ and all the other edges are labeled with $\id_m$. Therefore, at each depth, $\tree_1,\dots,\tree_k$ all have the same substitution labeling the corresponding edge. This means that the operation $\lconcc$ applied to $\bb{[A_1]}_{\pa{p}^{\sharp}}, \dots, \bb{[A_k]}_{\pa{p}^{\sharp}}$ has the effect of ``gluing together'' $\tree_1,\dots,\tree_k$ into a refutation subtree $\tree$ of $\bb{[A_1]}_{\pa{p}^{\sharp}} \lconcc \dots \lconcc \bb{[A_k]}_{\pa{p}^{\sharp}}$ with answer $\theta$, just as in Example \ref{ex:parrefutationgluinggen}, where \eqref{subtree1} and \eqref{subtree2} were glued to form \eqref{subtree3}. Starting instead with a refutation subtree in $\bb{[A_1]}_{\pa{p}^{\sharp}} \lconcc \dots \lconcc \bb{[A_k]}_{\pa{p}^{\sharp}}$, one clearly has a decomposition in the converse direction.
\end{proof}

\section{Conclusions}

The first part of this work proposed a coalgebraic semantics for logic programming, extending the framework introduced in \cite{KomMcCuskerPowerAMAST10} for the case of ground logic programs. Our approach has been formulated in terms of coalgebrae on presheaves, whose nice categorical properties made harmless to reuse the very same constructions as in the ground case. A critical point of this generalization was to achieve compositionality with respect to substitutions, which we obtained by employing \emph{saturation} techniques.
% FV
%Starting from the operational semantics $p$ proposed in \cite{KomPowCALCO11}, we characterized its saturation $p^{\sharp}$ in terms of substitution mechanisms, showing that the latter corresponds to unification, whereas the former is associated with term-matching.
We emphasized how these can be explained in terms of substitution mechanisms: while the operational semantics $p$ proposed in \cite{KomPowCALCO11} is associated with term-matching, its saturation $p^{\sharp}$ corresponds to unification.
The map $p^{\sharp}$ gave rise to the notion of \emph{saturated $\wedge\vee$-tree}, as the model of computation represented in our semantics. We observed that coinductive trees, introduced in \cite{KomPowCALCO11}, can be seen as a desaturated version of saturated $\wedge\vee$-trees, and we compared the two notions with a translation. Eventually, we tailored a notion of subtree (of a saturated $\wedge\vee$-tree), called \emph{synched derivation subtree}, representing a sound derivation of a goal in a program. This led to a result of soundness and completeness of our semantics with respect to SLD-resolution.

In the second part of the paper, we extended our framework to model the saturated semantics of goals with bialgebrae on presheaves. The main feature of this approach was yet another form of compositionality: the semantics of a goal $G$ can be equivalently expressed as the ``pasting'' of the semantics of the single atoms composing $G$. This property arose naturally via universal categorical constructions based on monads and distributive laws. The corresponding operational description was given the name of \emph{saturated $\vee$-trees}. The synchronisation of different branches of a derivation subtree, which was imposed on saturated $\wedge\vee$-trees, is now given by construction: in saturated $\vee$-trees the parallel resolution of each atom in the goal always proceeds with the same substitution. On the base of these observations, we extended the soundness and completeness result for saturated semantics to the SLD-resolution of arbitrary (and not just atomic) goals.

Saturated $\vee$-trees carry more information than traditional denotational models like Herbrand or $C$-models \cite{FalaschiPalamidessi_LP}. The latter can be obtained by saturated $\vee$-trees as follows: a substitution $\theta$ is an answer of the saturated $\vee$-tree of a goal $G$ if and only if $G \theta$ belongs to the minimal $C$-model. For future work, we would like to find the right categorical machinery to transform saturated $\vee$-trees into $C$-models. The approach should be close to the one used in \cite{DBLP:conf/cmcs/BonchiM0Z14} for the semantics of automata with $\epsilon$-transitions: first, the branching structure of $\vee$-trees is flattened into sets of sequences of substitutions (similarly to passing from bisimilarity to trace equivalence); second, the substitutions in a sequence are composed to form a single substitution (similarly to composing a sequence of words to form a single word).

Moreover, we find of interest to investigate \emph{infinite} computations and the semantics of coinductive logic programming \cite{GuptaBMSM07}. These have been fruitfully explored within the approach based on coinductive trees \cite{KomPowerCSL11}. We expect our analysis of the notion of synchronisation for derivation subtrees to bring further insights on the question.
%

%\section{Examples: Provvisoria}\input{source/examples.tex}

\bibliographystyle{abbrv}
\bibliography{catBib}

\newpage \appendix 

\section{Proofs}\label{Sec:appendix}

In this appendix we collect the proofs of some of the statements in the main text, whose details are not of direct relevance for our exposition.

\begin{trivlist}
\item \textbf{Proposition~\ref{prop:subtree}}.
\propsubtree
\end{trivlist}
\begin{proof}
First, we fix the two following properties, holding for all the $\wedge$-nodes of $\bb{A}_{p^{\sharp}}$.
\begin{itemize}[label=(\ddag)]
 \item[(\dag)] Let $\theta, \theta'$ be two arrows in $\Lw$ such that $\theta' \circ \theta$ is defined. If an $\wedge$-node $s$ has a child $t$ such that (a) the label of $t$ is $\theta$ and (b) $t$ has children labeled with $B_1, \dots, B_n$,
 then $s$ has also another child $t'$ such that (a) the label of $t'$ is $\theta'\circ \theta$ and (b) $t'$ has children labeled with $B_1\theta', \dots, B_n \theta'$.
 \item[(\ddag)] Let $\theta, \theta', \sigma, \sigma'$ be four arrows in $\Lw$ such that $\sigma \circ \theta = \sigma'\circ \theta'$. If an $\wedge$-node labeled with $A'$ has a child $t$ such that (a) the label of $t$ is $\theta$ and (b) $t$ has children labeled with $B_1, \dots, B_n$, then each node labeled with $A'\theta '$ has a child $t'$ such that (a) the label of $t'$ is $\sigma'$ and (b) $t'$ has children labeled with $B_1 \sigma, \dots, B_n \sigma$.
\end{itemize}
Assume that $\bb{A}_{p^{\sharp}}$ has a synched refutation subtree $\tree$ whose $\vee$-nodes are labeled with $\theta_1$, $\theta_3$, \dots, $\theta_{2k+1}$ (where $\theta_i$ is the substitution labeling the $\vee$-nodes of depth $i$). We prove that $\bb{A}_{p^{\sharp}}$ has another synched refutation subtree $\tree'$ whose first $\vee$-node is labeled with $\theta = \theta_{2k+1}\circ \theta_{2k-1}\circ  \dots \circ \theta_1$ and all the other $\vee$-nodes are labeled with identities.

By assumption, the root $r$ has a child (in $\tree$) that is labeled with $\theta_1$. Assume that its children are labeled with $B_1^2 \dots B_{n_2}^2$.
By $(\dag)$, $r$ has another child $t'$ (in $\bb{A}_{p^{\sharp}}$), that (a) is labeled with $\theta$ and (b) has children labeled with $B_1^2 \sigma_3  \dots B_n^2 \sigma_3$ where $\sigma_3 = \theta_{k+1}\circ \theta_{k-1}\circ  \dots \circ \theta_3$. These children form depth $2$ of $\tree'$ (the root $r$ and $t$ form, respectively, depth $0$ and $1$).

We now build the other depths. For an even $i\leq 2k$, let $\sigma_{i+1}$ denote $\theta_{2k+1}\circ \theta_{2k-1}\circ  \dots \circ \theta_{i+1}$ and let $B_1^i, \dots, B_{n_i}^i$ be the labels of the $\wedge$-nodes of $\tree$ at depth $i$. The depth $i$ of $\tree'$ is given by $\wedge$-nodes labeled with $B_1^i \sigma_{i+1}, \dots, B_{n_i}^i\sigma_{i+1}$; the depth $i+1$ by $\vee$-nodes all labeled with $id_m$.
It is easy to see that $\tree'$ is a subtree of $\bb{A}_{p^{\sharp}}$: by assumption the nodes labeled with $B_1^i, \dots, B_{n_i}^i$ have children in $\tree$ all labeled with $\theta_{i+1}$; since $\sigma_{i+3} \circ \theta_{i+1} = id_m \circ \sigma_{i+1}$, by property $(\ddag)$, the nodes labeled with $B_1^i \sigma_{i+1}, \dots, B_{n_i}^i \sigma_{i+1}$ have children (in $\bb{A}_{p^{\sharp}}$) that (a) are labeled with $id_m$ and (b) have children with labels $B_1^{i+2} \sigma_{i+3}, \dots, B_{n_{i+2}}^{i+2}\sigma_{i+3}$.

\medskip

Once we have built $\tree'$, we can easily conclude. Recall that $t'$ (the first $\vee$-node of $\tree'$) is labeled with $\theta$. Following the construction at the end of Section \ref{SEC:SemLogProg}, the root of $\bb{A}_{p^{\sharp}}\overline{\theta}$ has a child that is labeled with $id_m$ and that has the same children as $t'$. Therefore $\bb{A}_{p^{\sharp}}\overline{\theta}$ has a synched refutation subtree with answer $id_m$.
\end{proof}

\medskip

Next we provide a proof of the following statement.

\begin{trivlist}
\item \textbf{Proposition~\ref{prop:distrlawfromgenliftings}}.
 \propDistrLawFromGenLiftings
 \begin{equation*}
 (\liftt{\T}(\id_{\K\U X})^{\flat})^{\sharp}
 \end{equation*}
 \propDistrLawFromGenLiftingsBis
\end{trivlist}

For this purpose, we first need to state some auxiliary preliminaries on monads and distributive laws. Given a monad $(\M , \eta^{\M}, \mu^{\M})$ on a category $\C$, we use $\Kl(\M)$ to denote its Kleisli category and $\J \colon \catC \to \Kl(\M)$ to denote the canonical functor acting as identity on objects and mapping $f\in \catC[X,Y]$ into $\eta_Y \circ f \in \Kl(\K \U)[X,Y]$. The \emph{lifting} of a monad $(\T,\eta, \mu)$ in $\C$ is a monad $(\T',\eta', \mu')$ in $\Kl(\M)$ such that $\J \T = \T'  \J$ and $\eta'$, $\mu'$ are given on $\J X \in \Kl(\M)$ (i.e. $X \in \C$) respectively as $\J(\eta_{X})$ and $\J(\mu_{X})$.

The following ``folklore'' result gives an alternative description of
distributive laws in terms of liftings to Kleisli categories, see e.g. \cite[$\S$4]{tanaka2005pseudo}, \cite{Mulry93}.
\begin{proposition}
\label{LiftProp}
Let $(\M,\eta^\M,\mu^\M)$ be a monad on a category $\C$. For every monad $(\T,\eta^\T,\mu^\T)$ on $\C$, there is a bijective correspondence between liftings of $(\T,\eta^\T,\mu^\T)$ to $\Kl(\M)$ and distributive laws of the monad $\T$ over the monad $\M$.
\end{proposition}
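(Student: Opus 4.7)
My plan is to prove Proposition~\ref{prop:distrlawfromgenliftings} by invoking Proposition~\ref{LiftProp}: rather than verifying the naturality, unit, and multiplication axioms of a distributive law directly, I would exhibit a lifting of the monad $\lift{\T}$ to the Kleisli category $\Kl(\K\U)$ of the monad $\K\U$, and then read off the distributive law from the bijection of Proposition~\ref{LiftProp}, checking that its formula agrees with the one in the statement.

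First, I would define a functor $\widehat{\T}\colon \Kl(\K\U) \to \Kl(\K\U)$ acting as $\lift{\T}$ on objects and sending a Kleisli morphism $f\in \catC[X,\K\U Y]$ to
\[
\widehat{\T}(f) \;\df\; \bigl(\liftt{\T}(f^{\flat})\bigr)^{\sharp} \;\in\; \catC[\lift{\T} X,\; \K\U \lift{\T} Y].
\]
The types match precisely because the compatibility hypothesis $\U\lift{\T} = \liftt{\T}\U$ rewrites $\liftt{\T}(f^{\flat})\colon \liftt{\T}\U X \to \liftt{\T}\U Y$ as a map $\U\lift{\T} X \to \U\lift{\T} Y$, whose adjoint transpose lands in $\catC[\lift{\T} X, \K\U\lift{\T} Y]$.

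Next, I would verify that $\widehat{\T}$ is genuinely a lifting in the sense of Proposition~\ref{LiftProp}. There are three pieces. (i) Functoriality: Kleisli composition $g\klafter f = \mu^{\K\U}_Z \after \K\U(g)\after f$ must be shown to transpose to composition $g^{\flat}\after f^{\flat}$, which is a standard identity for adjoint transposes; once this is granted, $\widehat{\T}(g\klafter f) = \widehat{\T}(g)\klafter \widehat{\T}(f)$ reduces to functoriality of $\liftt{\T}$ and naturality of the bijection. (ii) Compatibility with the inclusion $\J\colon \catC \to \Kl(\K\U)$: for $g\in \catC[X,Y]$ we compute $(\J g)^{\flat} = \epsilon_{\U Y}\after \U(\eta_Y\after g) = \U g$ by a triangle identity, whence $\widehat{\T}(\J g) = (\U g)^{\sharp}\circ\;\text{(via $\U\lift{\T}=\liftt{\T}\U$)} = \eta_{\lift{\T}Y}\after\lift{\T} g = \J(\lift{\T} g)$ by naturality of $\eta$. (iii) The unit $\eta^{\lift{\T}}$ and multiplication $\mu^{\lift{\T}}$ transport along $\J$ to a monad structure on $\widehat{\T}$: naturality in $\Kl(\K\U)$ and the monad axioms follow mechanically from the corresponding facts in $\catC$ together with (ii).

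Finally, Proposition~\ref{LiftProp} associates to the lifting $\widehat{\T}$ a distributive law of monads $\lambda\colon \lift{\T}\,\K\U \To \K\U\,\lift{\T}$ whose component $\lambda_X$ is obtained by applying $\widehat{\T}$ to the Kleisli identity at $\K\U X$, namely the morphism $\id_{\K\U X}\in \catC[\K\U X,\K\U X]$ viewed as an arrow $\K\U X \to X$ in $\Kl(\K\U)$. By the very definition of $\widehat{\T}$, this gives $\lambda_X = (\liftt{\T}(\id_{\K\U X})^{\flat})^{\sharp}$, precisely the formula stated. The principal obstacle is piece~(i) of step two: the bookkeeping for Kleisli composition, where one must trace through $\mu^{\K\U} = \K\epsilon\U$ and both triangle identities to reduce $(g\klafter f)^{\flat}$ to $g^{\flat}\after f^{\flat}$ before the functoriality of $\liftt{\T}$ can be invoked; all other verifications are short unfoldings of definitions.
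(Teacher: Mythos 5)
Your proposal does not address the statement under review. The statement is Proposition~\ref{LiftProp} itself --- the bijective correspondence between liftings of $\T$ to $\Kl(\M)$ and distributive laws of the monad $\T$ over the monad $\M$ --- but what you have written is a proof of Proposition~\ref{prop:distrlawfromgenliftings} that \emph{invokes} Proposition~\ref{LiftProp} as a black box. As a proof of Proposition~\ref{LiftProp} this is circular: you cannot establish the correspondence by appealing to the correspondence. (Incidentally, the argument you do give coincides almost verbatim with the paper's own proof of Proposition~\ref{prop:distrlawfromgenliftings}, which is derived \emph{from} Proposition~\ref{LiftProp}; the paper does not prove Proposition~\ref{LiftProp} at all but cites it as folklore.)

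A genuine proof of Proposition~\ref{LiftProp} must exhibit the two directions of the bijection and check that they are mutually inverse. Given a distributive law $\lambda \colon \T\M \To \M\T$, one defines a lifting $\T'$ by $\T' X = \T X$ on objects and, for a Kleisli arrow $f \colon X \to \M Y$ in $\C$, by $\T' f = \lambda_Y \after \T f \colon \T X \to \M\T Y$; functoriality of $\T'$ over Kleisli composition uses the $\mu^{\M}$-pentagon for $\lambda$, the required identity $\T'\J = \J\T$ uses the $\eta^{\M}$-triangle, and the fact that $\J\eta^{\T}$ and $\J\mu^{\T}$ are natural transformations in $\Kl(\M)$ satisfying the monad laws uses the remaining two diagrams together with naturality of $\lambda$. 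Conversely, given a lifting $\T'$ one sets $\lambda_X = \T'(\iota_X)$, where $\iota_X$ is $\id_{\M X}$ regarded as a Kleisli arrow $\M X \to X$, and verifies naturality and all four axioms of a distributive law; finally one checks that the two assignments compose to the identity in both orders. None of these verifications appears in your proposal, so the statement remains unproved.
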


We are now ready to supply the proof of Proposition \ref{prop:distrlawfromgenliftings}.

\begin{proof}[Proof of Proposition \ref{prop:distrlawfromgenliftings}] By Proposition \ref{LiftProp}, it suffices that we define a monad lifting $\lift{\T}$ to $\Kl(\K \U)$. For this purpose, we will use the canonical comparison functor $\FH \colon \Kl(\K\U) \to \catD$ associated with $\Kl(\K\U)$ (see e.g. \cite[$\S$VI.5]{mclane}), enjoying the following property:
\begin{equation}\label{eq:kleisli}
\FH \J =\U.
\end{equation}
Given $X,Y \in |\Kl(\K \U)|$ and $f \in \Kl(\K \U)[X,Y]$, the functor $\FH$ is defined as
$$\begin{array}{lrcl}
  \FH\colon & \Kl(\K\U) & \to     & \catD \\
          & X         & \mapsto & \U X \\
          & X\stackrel{f}{\to}Y & \mapsto & \U X \stackrel{f^{\flat}}{\to} \U Y
\end{array}$$
where $f^{\flat}$ is given by observing that $f$ in $\Kl(\K\U)$ is a morphism $f\colon X \to \K \U Y$ in $\catC$ and using the bijective correspondence $(\cdot)^{\flat}_{X,Z}\colon \catC[X,\K Z] \to \catD[\U X, Z]$.

We have now the ingredients to introduce the monad $\lifts{\T}$ on $\Kl(\K \U)$ that we will later show to be a lifting of $\lift{\T}$.
On objects $\J X$ of $\Kl (\K \U)$, the functor $\lifts{\T} \colon \Kl(\K \U) \to \Kl(\K \U)$ is defined as $\J \lift{\T} X$ (note that all objects in $\Kl (\K \U)$ are of the shape $\J X$ for some $X\in \catC$). For an arrow  $\J X \stackrel{f}{\to} \J Y$ in $\Kl (\K \U)$, (i.e., $X \stackrel{f}{\to} \K \U Y$ in $\catC$), we take $\U X \stackrel{f^{\flat}}{\to} \U Y$ in $\catD$ and apply $\liftt{\T}$ to obtain $\liftt{\T} \U X \stackrel{\liftt{\T} f^{\flat}}{\to} \liftt{\T} \U Y$ which, by assumption, is
$ \U \lift{\T} X \stackrel{\liftt{\T} f^{\flat}}{\to} \U \lift{\T} Y$.
Using the bijective correspondence $(\cdot)^{\sharp}_{X,Z}\colon \catD[UX, Z] \to \catC[ X,  \K Z]$, we obtain $\lift{\T} X \stackrel{(\liftt{\T} f^{\flat})^{\sharp}}{\to} \K \U \lift{\T} Y$ in $\catC$, that is an arrow $\J \lift{\T} X \stackrel{(\liftt{\T} f^{\flat})^{\sharp}}{\to} \J \lift{\T} Y$ in $\Kl(\K \U)$.
We define $\lifts{\T}f$ as $(\liftt{\T} f^{\flat})^{\sharp}$.
The unit of the monad $\eta^{\lifts{\T}}_{X}$ is defined as $X = \J X \stackrel{\J\eta_X^{\lift{\T}}}{\to}\J \lift{\T} X = \lifts{\T} X$.
The multiplication $\mu^{\lifts{\T}}_{X}$ as $\lifts{\T} \lifts{\T} X = \J \lift{\T} \lift{\T} X \stackrel{\J\mu^{\lift{\T}}_X}{\to}\J \lift{\T} X = \lifts{\T}X$.
One can readily check that $(\lifts{\T}, \eta^{\lifts{\T}}, \mu^{\lifts{\T}})$ is a monad.

In order to verify that $(\lifts{\T}, \eta^{\lifts{\T}}, \mu^{\lifts{\T}})$ is a monad lifting of $(\lift{\T}, \eta^{\lift{\T}}, \mu^{\lift{\T}})$,
it only remains to check that $\lifts{\T}\J = \J \lift{\T}$, since the unit and multiplication are simply defined by applying $\J$ to $\eta^{\lift{\T}}$ and $\mu^{\lift{\T}}$.
For objects, it follows from the definition of $\lifts{\T}$. For arrows $f\in \catC[X,Y]$,
\begin{align*}
\lifts{\T}\J X \stackrel{\lifts{\T}\J f}{\to} \lifts{\T}\J Y &= & \J \lift{\T} X \stackrel{(\liftt{\T}(\J f)^{\flat})^{\sharp}}{\to} \J \lift{\T} Y \tag{definition of $\lifts{\T}$}\\
  & = & \J \lift{\T} X \stackrel{(\liftt{\T} \FH \J f)^{\sharp}}{\to} \J \lift{\T} Y & \tag{definition of $\FH$} \\
& = & \J \lift{\T} X \stackrel{(\liftt{\T} \U f)^{\sharp}}{\to} \J \lift{\T} Y & \tag{by \ref{eq:kleisli}} \\
& = & \J \lift{\T} X \stackrel{(\U \lift{\T} f )^{\sharp}}{\to} \J \lift{\T} Y & \tag{by assumption} \\
& = & \J \lift{\T} X \stackrel{(\FH \J \lift{\T} f )^{\sharp}}{\to} \J \lift{\T} Y & \tag{by \ref{eq:kleisli}} \\
& = & \J \lift{\T} X \stackrel{((\J \lift{\T} f)^{\flat} )^{\sharp}}{\to} \J \lift{\T} Y & \tag{definition of $\FH$} \\
& = & \J \lift{\T} X \stackrel{\J \lift{\T} f}{\to} \J \lift{\T} Y. & \tag{$(\cdot)^{\sharp}$ and $(\cdot)^{\flat}$ form a bijection}
\end{align*}
%Since $(-)^{\sharp}$ and $(-)^{\flat}$ define a bijection, to check $\lifts{\lift{\T}}\J \lift{\T} = \J \lift{\T} f$ it suffices to verify that $(\J \lift{\T} f)^{\flat} = \liftt{\T}Uf$:
%$$(\J \lift{\T} f)^{\flat} = \FH \J \lift{\T} f = \U \lift{\T} f = \liftt{\T} \U f\text{.}$$

By Proposition \ref{LiftProp}, we thus obtain a distributive law of monads $\lambda \colon \lift{\T} (\K \U) \Rightarrow (\K \U) \lift{\T}$.
Following the correspondence in \cite{tanaka2005pseudo}, this is effectively constructed for all $X\in \catC$ as follows. Let $\id_{\K \U X}$ be the identity on $\catC$ and $\iota_X\colon \J \K U X \to \J X$ be the corresponding arrow in $\Kl(\K \U)$. Then $\lifts{\T}(\iota_X)\colon \J \lift{\T} \K \U X \to \J \lift{\T} X$ in $\Kl(\K \U)$ corresponds to an arrow $\lifts{\T}(\iota_X)\colon  \lift{\T} \K \U X \to \K \U \lift{\T} X$ in $\catC$, which is how $\lambda$ is defined on $X$. Unfolding the definition of $\lifts{\T}$, this means that $\lambda_X = (\liftt{\T}(\id_{\K \U X})^{\flat})^{\sharp}$.
\end{proof}

%\medskip

\begin{trivlist}
\item \textbf{Proposition~\ref{prop:lemmasubtreesamesubts}}.
\proplemmasubtreesamesubts
\end{trivlist}
\begin{proof} The proof follows closely the one of Proposition \ref{prop:subtree}. First, observe that properties analogous to $(\dag)$ and $(\ddag)$ in the proof of Proposition \ref{prop:subtree} hold for $\vee$-trees, since both saturated $\wedge\vee$- and $\vee$-trees are based on unification. We express them using the rule presentation of~$\pa{p}^{\sharp}$:
$$\inference*[l4 ]{l_1 \tr{\sigma}l_2 }{l_1 \tr{\sigma'\after \sigma}l_2 \sigma'}
\quad \quad
\inference*[l5]{{l_1 \tr{\tau}l_2} & {\sigma \after \tau = \sigma' \after \tau'}}{\ \ \ \ \ \ \ l_1\tau' \tr{\sigma'} l_2 \sigma\ \ \ \ \ \ \ }
$$
where $l\theta$ is the result of applying $\theta$ to each atom in $l$. Rule $(l4)$ corresponds to $(\dag)$ and $(l5)$ to $(\ddag)$.

Now suppose that $\tree' = t_1,t_2,\dots,t_k$ is a refutation subtree of $\bb{l}_{\pa{p}^{\sharp}}$, say with answer $\theta = \theta_k \after \dots \after \theta_2 \circ \theta_1$ and where $t_i$ labeled with a list $l_i$. We build $\tree$ inductively as follows. Depth $0$ in $\tree$ is given by the root $t_1$ labeled with $l_1 = l$. By construction $\theta_1$ labels the edge between $t_1$ and $t_2$ in $\tree'$. Thus, by rule $(l4)$, in $\bb{l}_{\pa{p}^{\sharp}}$ the root $t_1$ has also an edge $\theta$ targeting a child $s_2$ with label $\theta_k \after \dots \after \theta_2 l_2$. We let $s_2$ be the node of depth $1$ in $\tree$. Inductively, suppose that we built $\tree$ up to depth $i \ls k$. We know that in $\tree'$ the node $t_i$ is connected to $t_{i+1}$ by an edge labeled with $\theta_i$. Also, by inductive hypothesis the node $s_i$ in $\tree$ is labeled with $\theta_k \after \dots \after \theta_{i} l_{i}$. Since $(\theta_k \after \dots \after \theta_{i+1}) \after \theta_{i} = \id_m \after ( \theta_{k} \after \dots \after \theta_{i})$, then by rule $(l5)$ the node $s_i$ is connected to a node $s_{i+1}$ in $\bb{l}_{\pa{p}^{\sharp}}$ labeled with $\theta_k \after \dots \after \theta_{i+1} l_{i+1}$ via an edge labeled with $\id_m$. We let $s_{i+1}$ be the node of depth $i+1$ in $\tree$. It is clear by construction that $\tree$ is a refutation subtree of $\bb{l}_{\pa{p}^{\sharp}}$ with the required properties.
\end{proof}

\vspace{-50 pt}
\end{document}